\begin{document}
 
\title{Differential Privacy at Risk: Bridging Randomness and Privacy Budget}

\author{\name Ashish Dandekar \email adandekar@ens.fr\\
	\addr D{\'e}partement d’informatique,\\
	{\'E}cole Normale Sup{\'e}rieure\\
	Paris, France
	\AND
	\name Debabrota Basu \email basud@chalmers.se\\
	\addr Data Science and AI Division\\
	 Department of Computer Science and Engineering\\
	 Chalmers University of Technology\\
	 G{\"o}teborg, Sweden
	 \AND
	 \name St\'ephane Bressan \email  steph@nus.edu.sg\\
	 \addr School of Computing\\
	 National University of Singapore\\
	 Singapore, Singapore}

\maketitle

\begin{abstract}%
The calibration of noise for a privacy-preserving mechanism depends on the sensitivity of the query and the prescribed privacy level. A data steward must make the non-trivial choice of a privacy level that balances the requirements of users and the monetary constraints of the business entity.
\\
We analyse roles of the sources of randomness, namely the explicit randomness induced by the noise distribution and the implicit randomness induced by the data-generation distribution, that are involved in the design of a privacy-preserving mechanism. The finer analysis enables us to provide stronger privacy guarantees with quantifiable risks. Thus, we propose \textit{privacy at risk} that is a probabilistic calibration of privacy-preserving mechanisms. We provide a composition theorem that leverages privacy at risk. We instantiate the probabilistic calibration for the Laplace mechanism by providing analytical results.
\\
We also propose a cost model that bridges the gap between the privacy level and the compensation budget estimated by a GDPR compliant business entity. The convexity of the proposed cost model leads to a unique fine-tuning of privacy level that minimises the compensation budget. We show its effectiveness by illustrating a realistic scenario that avoids overestimation of the compensation budget by using privacy at risk for the Laplace mechanism. We quantitatively show that composition using the cost optimal privacy at risk provides stronger privacy guarantee than the classical advanced composition.
\end{abstract}
 
\section{Introduction}
\label{sec:intro}
\citeauthor{dwork2014algorithmic} quantify the privacy level $\epsilon$ in \emph{$\epsilon$-differential privacy} as an upper bound on the worst-case privacy loss incurred by a privacy-preserving mechanism. Generally, a privacy-preserving mechanism perturbs the results by adding the calibrated amount of random noise to them. The calibration of noise depends on the sensitivity of the query and the specified privacy level. In a real-world setting, a data steward must specify a privacy level that balances the requirements of the users and monetary constraints of the business entity. \citeauthor{garfinkel2018issues} report the issues in deploying differential privacy as the privacy definition by the US census bureau. They highlight the lack of analytical methods to choose the privacy level.  They also report empirical studies that show the loss in utility due to the application of privacy-preserving mechanisms.

We address the dilemma of a data steward in two ways. Firstly, we propose a probabilistic quantification of privacy levels. Probabilistic quantification of privacy levels provides a data steward a way to take quantified risks under the desired utility of the data. We refer to the probabilistic quantification as \emph{privacy at risk}. We also derive a composition theorem that leverages privacy at risk. Secondly, we propose a cost model that links the privacy level to a monetary budget. This cost model helps the data steward to choose the privacy level constrained on the estimated budget and vice versa. Convexity of the proposed cost model ensures the existence of a unique privacy at risk that would minimise the budget. 
We show that the composition with an optimal privacy at risk provides stronger privacy guarantees than the traditional advanced composition~\citep{dwork2014algorithmic}.
In the end, we illustrate a realistic scenario that exemplifies how the data steward can avoid overestimation of the budget by using the proposed cost model by using privacy at risk.  

The probabilistic quantification of privacy levels depends on two sources of randomness: the \emph{explicit randomness} induced by the noise distribution and the \emph{implicit randomness} induced by the data-generation distribution. Often, these two sources are coupled with each other. We require analytical forms of both sources of randomness as well as an analytical representation of the query to derive a privacy guarantee. Computing the probabilistic quantification is generally a challenging task. Although we find multiple probabilistic privacy definitions in the literature~\citep{machanavajjhala2008privacy,hall2012random}, we are missing analytical quantification bridging the randomness and privacy level of a privacy-preserving mechanism. To the best of our knowledge, we are the first to analytically derive such a probabilistic quantification, namely privacy at risk, for the widely used Laplace mechanism~\citep{dwork2006calibrating}. We also derive a composition theorem with privacy at risk. It is a special case of the advanced composition theorem~\citep{dwork2014algorithmic} that deals with a sequential and adaptive use of privacy-preserving mechanisms. We work on a simpler model independent evaluations used in the basic composition theorem~\citep{dwork2014algorithmic}.

The privacy level proposed by the differential privacy framework is too abstract a quantity to be integrated in a business setting. We propose a cost model that maps the privacy level to a monetary budget. The corresponding cost model for the probabilistic quantification of privacy levels is a convex function of the privacy level. Hence, it leads to a unique probabilistic privacy level that minimises the cost. We illustrate a realistic scenario in a GDPR compliant business entity that needs an estimation of the compensation budget that it needs to pay to stakeholders in the unfortunate event of a personal data breach. The illustration shows that the use of probabilistic privacy levels avoids overestimation of the compensation budget without sacrificing utility.

In this work,  we comparatively evaluate the privacy guarantees using privacy at risk of using Laplace mechanism. 
We quantitatively compare the composition  under the optimal privacy at risk, which is estimated using the cost model, with traditional composition mechanisms - the basic composition and advanced mechanism~\citep{dwork2014algorithmic}. We observe that it gives stronger privacy guarantees than the ones by the advanced composition without sacrificing on the utility of the mechanism.

In conclusion, benefits of the probabilistic quantification i.e. the privacy at risk are twofold. It not only quantifies the privacy level for a given privacy-preserving mechanism but also facilitates decision-making in problems that focus on the privacy-utility trade-off and the compensation budget minimisation.

\section{Background}
\label{sec:background}
We consider a universe of datasets $\mathcal{D}$. We explicitly mention when we consider that the datasets are sampled from a data-generation distribution $\mathcal{G}$ with support $\mathcal{D}$. Two datasets of equal cardinality $x$ and $y$ are said to be \textit{neighbouring datasets} if they differ in one data point. A pair of neighbouring datasets is denoted by $x \sim y$. In this work, we focus on a specific class of queries called \textit{numeric queries}. A numeric query $f$ is a function that maps a dataset into a real-valued vector, i.e.  $f: \mathcal{D} \rightarrow \mathbb{R}^k$. For instance, a sum query returns the sum of the values in a dataset.

In order to achieve a privacy guarantee, a \emph{privacy-preserving mechanism}, or \textit{mechanism} in short, is a randomised algorithm, that adds noise to the query from a given family of distributions. Thus, a privacy-preserving mechanism of a given family, $\mathcal{M}(f, \Theta)$, for the query $f$ and the set of parameters $\Theta$ of the given noise distribution, is a function that maps a dataset into a real vector, i.e.  $\mathcal{M}(f, \Theta): \mathcal{D} \rightarrow \mathbb{R}^k$. We denote a privacy-preserving mechanism as $\mathcal{M}$, when the query and the parameters are clear from the context.


\begin{definition}{\textbf{[Differential Privacy~\citep{dwork2014algorithmic}.]}}
A privacy-preserving mechanism $\mathcal{M}$, equipped with a query $f$ and with parameters $\Theta$, is $(\epsilon, \delta)$-differentially private if for all $Z \subseteq \textrm{Range}(\mathcal{M})$ and $x, y \in \mathcal{D}$ such that $x \sim y$:
\[
\mathbb{P}(\mathcal{M}(f, \Theta)(x) \in Z) \leq \text{e}^\epsilon \mathbb{P}(\mathcal{M}(f, \Theta)(y) \in Z) + \delta 
\]
\label{def:DP}
$(\epsilon, 0)$-differentially private mechanism is ubiquitously called as $\epsilon$-differentially private.
\end{definition}

A privacy-preserving mechanism provides perfect privacy if it yields indistinguishable outputs for all neighbouring input datasets. The privacy level $\epsilon$ quantifies the privacy guarantee provided by $\epsilon$-differential privacy.  For a given query, the smaller the value of the $\epsilon$, the qualitatively higher is the privacy.  A randomised algorithm that is $\epsilon$-differentially private is also $\epsilon'$-differential private for any $\epsilon' > \epsilon$.

In order to satisfy $\epsilon$-differential privacy, the parameters of a privacy-preserving mechanism requires a calculated calibration. The amount of noise required to achieve a specified privacy level depends on the query. If the output of the query does not change drastically for two neighbouring datasets, then small amount of noise is required to achieve a given privacy level. The measure of such fluctuations is called the \emph{sensitivity} of the query. The parameters of a privacy-preserving mechanism are calibrated using the sensitivity of the query that quantifies the smoothness of a numeric query.

\begin{definition}{\textbf{[Sensitivity.]}}\label{def:sensitivity}
The sensitivity of a query $f: \mathcal{D} \rightarrow \mathbb{R}^k$ is defined as
\[
\Delta_f \triangleq \max_{\substack{x,y \in \mathcal{D} \\  x \sim y}} ~\lVert f(x) - f(y) \rVert_1.
\]
\end{definition}

The Laplace mechanism is a privacy-preserving mechanism that adds scaled noise sampled from a calibrated Laplace distribution to the numeric query.
\begin{definition}\textit{[\citep{papoulis2002probability}]}
The Laplace distribution with mean zero and scale $b > 0$ is a probability distribution with probability density function
\[
\mathrm{Lap}(b) \triangleq \frac{1}{2b}\exp{(-\frac{|x|}{b})},
\]
where $x \in \mathbb{R}$. We write $\mathrm{Lap}(b)$ to denote a random variable $X \sim \mathrm{Lap}(b)$
\end{definition}

\begin{definition}{\textbf{[Laplace Mechanism~\citep{dwork2006calibrating}.]}}
Given any function $f:\mathcal{D} \rightarrow \mathbb{R}^k$ and any $x \in \mathcal{D}$, the Laplace Mechanism is defined as
\[
\mathcal{L}^{\Delta_f}_{\epsilon}(x) \triangleq \mathcal{M}\left(f, ~\frac{\Delta_f}{\epsilon}\right)(x) =  f(x) + (L_1, ..., L_k),
\]
where $L_i$ is drawn from $\mathrm{Lap}\left(\frac{\Delta_f}{\epsilon}\right)$ and added to the $i^{\mathrm{th}}$ component of $f(x)$.
\label{def:lap_mech}
\end{definition}

\begin{theorem}~\citep{dwork2006calibrating}
The Laplace mechanism, $\mathcal{L}^{\Delta_f}_{\epsilon_0}$, is $\epsilon_0$-differentially private.
\label{thm:dp}
\end{theorem}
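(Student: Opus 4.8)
The plan is to verify the defining inequality of $\epsilon_0$-differential privacy directly from the density of the Laplace mechanism. First I would fix two neighbouring datasets $x \sim y$ and compute the probability density of the output $\mathcal{L}^{\Delta_f}_{\epsilon_0}(x)$ at an arbitrary point $z \in \mathbb{R}^k$. Since each coordinate $L_i$ is drawn independently from $\mathrm{Lap}(\Delta_f/\epsilon_0)$, the joint density of the mechanism evaluated at $z$ factorises across the $k$ coordinates, giving a density proportional to $\exp\!\bigl(-\tfrac{\epsilon_0}{\Delta_f}\lVert z - f(x)\rVert_1\bigr)$.

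The key step is to bound the pointwise ratio of the two densities. Writing $p_x(z)$ and $p_y(z)$ for the densities of $\mathcal{L}^{\Delta_f}_{\epsilon_0}(x)$ and $\mathcal{L}^{\Delta_f}_{\epsilon_0}(y)$ respectively, the normalising constants cancel and I obtain
\[
\frac{p_x(z)}{p_y(z)} = \exp\!\left(\frac{\epsilon_0}{\Delta_f}\bigl(\lVert z - f(y)\rVert_1 - \lVert z - f(x)\rVert_1\bigr)\right).
\]
Applying the triangle inequality in the $\ell_1$ norm gives $\lVert z - f(y)\rVert_1 - \lVert z - f(x)\rVert_1 \leq \lVert f(x) - f(y)\rVert_1$, and by Definition \ref{def:sensitivity} this last quantity is at most $\Delta_f$. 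Substituting yields the pointwise bound $p_x(z) \leq \mathrm{e}^{\epsilon_0}\, p_y(z)$ for every $z$.

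To finish, I would integrate this pointwise inequality over an arbitrary measurable set $Z \subseteq \mathrm{Range}(\mathcal{L}^{\Delta_f}_{\epsilon_0})$, obtaining $\mathbb{P}(\mathcal{L}^{\Delta_f}_{\epsilon_0}(x) \in Z) \leq \mathrm{e}^{\epsilon_0}\,\mathbb{P}(\mathcal{L}^{\Delta_f}_{\epsilon_0}(y) \in Z)$, which is exactly the $(\epsilon_0, 0)$-differential privacy condition of Definition \ref{def:DP}. I expect the only mild subtlety to be the handling of the $\ell_1$ norm across the $k$ coordinates: the factorisation of the joint density and the single application of the triangle inequality must be carried out for the full vector rather than coordinatewise, so that the bound sees the total $\ell_1$ deviation $\lVert f(x) - f(y)\rVert_1$ controlled in one stroke by the sensitivity. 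Everything else is routine cancellation of normalising constants and the monotonicity of integration.
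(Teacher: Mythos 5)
Your proposal is correct and is exactly the standard argument from the cited source: the paper states this theorem as a background result from \citet{dwork2006calibrating} without reproducing a proof, so there is nothing to diverge from. The same density-ratio computation and triangle-inequality-plus-sensitivity bound you use do appear later in the paper's own Appendix A (the expansion of $\mathbb{P}_x(z)/\mathbb{P}_y(z)$ and Lemma~\ref{lemma:bound}), so your approach is fully consistent with the machinery the authors rely on.
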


\section{Privacy at Risk: A Probabilistic Quantification of Randomness}
\label{sec:privacy_at_risk}

The parameters of a privacy-preserving mechanism are calibrated using the privacy level and the sensitivity of the query. A data steward needs to choose appropriate privacy level for practical implementation. \citeauthor{lee2011much} show that the choice of an actual privacy level by a data steward in regard to her business requirements is a non-trivial task. Recall that the privacy level in the definition of differential privacy corresponds to the worst case privacy loss. Business users are however used to taking and managing risks, if the risks can be quantified. For instance, \citeauthor{var_book} defines \textit{Value at Risk} that is used by risk analysts to quantify the loss in investments for a given portfolio and an acceptable confidence bound. Motivated by the formulation of \textit{Value at Risk}, we propose to use the use of probabilistic privacy level. It provides us a finer tuning of an $\epsilon_0$-differentially private privacy-preserving mechanism for a specified risk $\gamma$.


\begin{definition}{\textbf{[Privacy at Risk.]}}
For a given data generating distribution $\mathcal{G}$, a privacy-preserving mechanism $\mathcal{M}$, equipped with a query $f$ and with parameters $\Theta$, satisfies $\epsilon$-differential privacy with a \emph{privacy at risk} $0 \leq \gamma \leq 1$, if for all $Z \subseteq \textrm{Range}(\mathcal{M})$ and $x, y$ sampled from $\mathcal{G}$ such that $x \sim y$:
\begin{equation}\label{eqn:par}
    \mathbb{P}\left[\left | \ln{\frac{\mathbb{P}(\mathcal{M}(f, \Theta)(x) \in Z)}{\mathbb{P}(\mathcal{M}(f, \Theta)(y) \in Z)}}\right| > \epsilon \right] \leq \gamma,
\end{equation}
where the outer probability is calculated with respect to the probability space $\mathrm{Range}(\mathcal{M} \circ \mathcal{G})$ obtained by applying the privacy-preserving mechanism $\mathcal{M}$ on the data-generation distribution $\mathcal{G}$.
\label{def:par}
\end{definition}

If a privacy-preserving mechanism is $\epsilon_0$-differentially private for a given query $f$ and parameters $\Theta$, for any privacy level $\epsilon \geq \epsilon_0$, privacy at risk is $0$. 
Our interest is to quantify the risk $\gamma$ with which $\epsilon_0$-differentially private privacy-preserving mechanism also satisfies a stronger $\epsilon$-differential privacy, i.e. $\epsilon < \epsilon_0$.

\textbf{Unifying Probabilistic and Random Differential Privacy.} 
As a natural consequence, Equation~\ref{eqn:par} unifies the notions of probabilistic differential privacy and random differential privacy by accounting for both sources of randomness in a privacy-preserving mechanism.
\citeauthor{machanavajjhala2008privacy} define probabilistic differential privacy that incorporates the explicit randomness of the noise distribution of the privacy-preserving mechanism whereas \citeauthor{hall2012random} define random differential privacy that incorporates the implicit randomness of the data-generation distribution.
In probabilistic differential privacy, the outer probability is computed over the sample space of $\mathrm{Range}(\mathcal{M})$ and all datasets are equally probable.



\subsection{Composition theorem}

Application of $\epsilon$-differential privacy to many real-world problem suffers from the degradation of privacy guarantee, i.e. privacy level, over the composition. The basic composition theorem~\citep{dwork2014algorithmic} dictates that the privacy guarantee degrades linear in the number of evaluations of the mechanism. Advanced composition theorem~\citep{dwork2014algorithmic} provides a finer analysis of the privacy loss over multiple evaluations and provides a square root dependence on the the number of evaluations. In this section, we provide the composition theorem for privacy at risk.

\begin{definition}{\textbf{[Privacy loss random variable.]}}
For a privacy-preserving mechanism $\mathcal{M}: \mathcal{D} \rightarrow R$ and two neighbouring datasets $x, y \in \mathcal{D}$, the privacy loss random variable $\mathcal{C}$ takes a value $r \in R$
\[
\mathcal{C} \triangleq  \ln{\frac{\mathbb{P}(\mathcal{M}(x))}{\mathbb{P}(\mathcal{M}(y))}}
\]
\end{definition}

\begin{lemma}
If a privacy-preserving mechanism $\mathcal{M}$ satisfies $\epsilon_0$ differential privacy, then
\[
\mathbb{P}[| \mathcal{C}| \leq \epsilon_0] = 1
\]
\end{lemma}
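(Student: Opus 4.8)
The plan is to extract a pointwise bound on the likelihood ratio from the set-based definition of differential privacy and then invoke the symmetry of the neighbour relation. Since $\mathcal{M}$ is $\epsilon_0$-differentially private with $\delta = 0$, Definition~\ref{def:DP} gives, for every measurable $Z \subseteq \textrm{Range}(\mathcal{M})$ and every neighbouring pair $x \sim y$,
\[
\mathbb{P}(\mathcal{M}(x) \in Z) \leq e^{\epsilon_0}\, \mathbb{P}(\mathcal{M}(y) \in Z).
\]
First I would translate this inequality between set probabilities into one between the underlying densities (or probability masses). In the discrete case this is immediate: taking $Z = \{r\}$ yields $\mathbb{P}(\mathcal{M}(x) = r) \leq e^{\epsilon_0}\,\mathbb{P}(\mathcal{M}(y) = r)$ for each output $r$. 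In the continuous case one argues that if the output measure of $\mathcal{M}(x)$ is dominated by $e^{\epsilon_0}$ times that of $\mathcal{M}(y)$ on every measurable set, then the corresponding Radon--Nikodym derivative, i.e. the ratio of densities, is bounded above by $e^{\epsilon_0}$ almost everywhere.

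Next I would exploit the fact that the neighbour relation $x \sim y$ is symmetric. Applying the same inequality with the roles of $x$ and $y$ interchanged gives $\mathbb{P}(\mathcal{M}(y) \in Z) \leq e^{\epsilon_0}\,\mathbb{P}(\mathcal{M}(x) \in Z)$, and hence the lower bound $\mathbb{P}(\mathcal{M}(x) = r) \geq e^{-\epsilon_0}\,\mathbb{P}(\mathcal{M}(y) = r)$ on the density ratio. Combining the two directions yields
\[
e^{-\epsilon_0} \leq \frac{\mathbb{P}(\mathcal{M}(x) = r)}{\mathbb{P}(\mathcal{M}(y) = r)} \leq e^{\epsilon_0}
\]
for (almost) every realisation $r$ of the output. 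Taking logarithms shows that $-\epsilon_0 \leq \mathcal{C} \leq \epsilon_0$, i.e. $|\mathcal{C}| \leq \epsilon_0$, holds with probability one, which is exactly the claim.

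The main obstacle is the measure-theoretic step of passing from the set-level guarantee to an almost-everywhere pointwise bound on the density ratio; in the continuous setting one must rule out a positive-measure set on which the ratio exceeds $e^{\epsilon_0}$, and the standard trick is to take $Z$ to be precisely that exceptional set and derive a contradiction with the differential privacy inequality. A secondary technical care is the treatment of outputs $r$ at which the denominator vanishes: by the same domination of measures the numerator must vanish there as well, so these $r$ carry no probability mass and do not affect the almost-sure conclusion.
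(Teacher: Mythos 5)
Your argument is correct. Note that the paper itself states this lemma without any proof, treating it as an immediate consequence of pure differential privacy (it is essentially the standard observation, used implicitly in \citep[Theorem 3.20]{dwork2014algorithmic}, that an $\epsilon_0$-differentially private mechanism has privacy loss bounded by $\epsilon_0$ with certainty). Your write-up supplies exactly the right details: the passage from the set-level guarantee to an almost-everywhere pointwise bound on the density ratio (via singleton sets in the discrete case, or by testing the differential privacy inequality on the exceptional set where the Radon--Nikodym derivative would exceed $e^{\epsilon_0}$ in the continuous case), the use of the symmetry of the neighbour relation to obtain the matching lower bound $e^{-\epsilon_0}$, and the observation that outputs where the denominator vanishes carry no mass. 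Since there is no proof in the paper to diverge from, there is nothing further to compare; your proposal is a complete and correct justification of the claim.
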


\begin{theorem}\label{thm:compose}
For all $\epsilon_0, \epsilon, \gamma, \delta > 0$, the class of $\epsilon_0$-differentially private mechanisms, which satisfy $(\epsilon, \gamma)$-privacy at risk, are $(\epsilon', \delta)$-differential privacy under $n$-fold composition where
\[
\epsilon' = \epsilon_0 \sqrt{2n\ln{\frac{1}{\delta}}} + n\mu
\]
where,
$\mu = [\gamma\epsilon(e^\epsilon - 1) + (1-\gamma)\epsilon_0(e^{\epsilon_0} - 1)]$
\end{theorem}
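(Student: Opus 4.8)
The plan is to mirror the proof of the advanced composition theorem, reading the composed privacy loss as a sum of per-round privacy loss random variables and showing that this sum concentrates sharply enough to recover an $(\epsilon',\delta)$ guarantee. Writing $\mathcal{C}_1,\dots,\mathcal{C}_n$ for the privacy loss variables of the $n$ mechanisms evaluated on a fixed neighbouring pair $x \sim y$, the total loss of the composed mechanism is $\sum_{i=1}^n \mathcal{C}_i$. The classical route to $(\epsilon',\delta)$-differential privacy is to exhibit a threshold $\epsilon'$ with $\mathbb{P}[\sum_i \mathcal{C}_i > \epsilon'] \le \delta$ and then invoke the standard equivalence between such a tail bound on the privacy loss and the approximate differential privacy inequality. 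So the whole argument reduces to a concentration statement for $\sum_i \mathcal{C}_i$, for which I need a uniform almost-sure bound on each $\mathcal{C}_i$ together with a bound on its mean.

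First I would supply the almost-sure bound: since each mechanism is $\epsilon_0$-differentially private, the Lemma preceding the theorem gives $|\mathcal{C}_i| \le \epsilon_0$ with probability one, so the summands have range at most $\epsilon_0$. Next I would bound $\mathbb{E}[\mathcal{C}_i]$, which is the Kullback--Leibler divergence between the output distributions on $x$ and $y$. Here I would use the standard fact that an $\eta$-differentially private mechanism has expected privacy loss at most $\eta(e^{\eta}-1)$, and combine it with the privacy-at-risk hypothesis by splitting the probability space along the event $\{|\mathcal{C}_i| \le \epsilon\}$. On this event, whose probability is at least $1-\gamma$, the loss is controlled at level $\epsilon$; on its complement, of probability at most $\gamma$, the loss is still controlled at level $\epsilon_0$ by $\epsilon_0$-differential privacy. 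Aggregating the two regimes yields a $\gamma$-weighted combination of $\epsilon(e^{\epsilon}-1)$ and $\epsilon_0(e^{\epsilon_0}-1)$, which is precisely the quantity $\mu$.

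With these two ingredients in hand, I would form the martingale difference sequence $\mathcal{C}_i - \mathbb{E}[\mathcal{C}_i \mid \mathcal{C}_1,\dots,\mathcal{C}_{i-1}]$ --- the conditional formulation being necessary because under adaptive composition each mechanism may depend on the earlier outputs --- and apply the Azuma--Hoeffding inequality. Since every increment lies in an interval whose width is controlled by $\epsilon_0$, the bounded-differences bound gives $\mathbb{P}\big[\sum_i \mathcal{C}_i > n\mu + t\big] \le \exp\!\big(-t^2 / (2 n \epsilon_0^2)\big)$. Choosing $t = \epsilon_0 \sqrt{2 n \ln(1/\delta)}$ forces the right-hand side down to $\delta$ and makes the threshold equal to $n\mu + \epsilon_0\sqrt{2n\ln(1/\delta)} = \epsilon'$, exactly the claimed expression. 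A final appeal to the privacy-loss-to-differential-privacy equivalence converts this tail bound into the $(\epsilon',\delta)$ guarantee.

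I expect the main obstacle to be the clean execution of the expectation split. Conditioning on the privacy-at-risk event reshapes the output distribution, so the bound $\eta(e^\eta - 1)$ cannot be applied naively to a conditioned mechanism; the argument must be arranged so that each regime is bounded through an honest application of the relevant privacy level rather than through an informal conditional. A secondary subtlety is that the privacy-at-risk probability in Equation~\ref{eqn:par} is taken over the joint randomness of the noise and the data-generation distribution $\mathcal{G}$, whereas the composition conclusion is a worst-case statement over neighbouring pairs; reconciling the averaged hypothesis with the worst-case conclusion, and confirming that the martingale concentration remains valid under this mixed probability space, is where the accounting must be handled with care.
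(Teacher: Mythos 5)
Your overall architecture matches the paper's proof: decompose the composed log-likelihood ratio into a sum of per-round privacy loss variables $\mathcal{C}^i$, bound each $|\mathcal{C}^i|$ by $\epsilon_0$ almost surely via the preceding lemma, bound $\mathbb{E}[\mathcal{C}^i]$ using the $\eta(e^\eta-1)$ KL estimate split across the privacy-at-risk event, concentrate the sum, and convert the tail bound into $(\epsilon',\delta)$-differential privacy via the argument of Theorem 3.20 of Dwork and Roth. (One minor mismatch of machinery: the paper explicitly restricts to independent evaluations and applies plain Hoeffding, so your Azuma--Hoeffding martingale setup for adaptivity is more than the stated theorem requires, though not wrong.)

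The genuine problem is that your expectation split does not produce the stated $\mu$. You condition on the event $\{|\mathcal{C}^i|\le\epsilon\}$, which by Definition~\ref{def:par} has probability at least $1-\gamma$, and control the complement (probability at most $\gamma$) at level $\epsilon_0$. Aggregating those two regimes gives $(1-\gamma)\,\epsilon(e^{\epsilon}-1)+\gamma\,\epsilon_0(e^{\epsilon_0}-1)$, whereas the theorem claims $\mu=\gamma\,\epsilon(e^{\epsilon}-1)+(1-\gamma)\,\epsilon_0(e^{\epsilon_0}-1)$ --- the weights are swapped, and your assertion that your aggregation ``is precisely the quantity $\mu$'' is false as written. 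The paper arrives at its $\mu$ by a different modelling step: it declares that each $\mathcal{M}^i$ can be \emph{simulated} as an $\epsilon$-DP mechanism $\mathcal{M}_\epsilon$ with probability $\gamma$ and as $\mathcal{M}_{\epsilon_0}$ with probability $1-\gamma$, writes $\mathcal{C}^i=\gamma\,\mathcal{C}^i_\epsilon+(1-\gamma)\,\mathcal{C}^i_{\epsilon_0}$, and takes expectations; this convention is what makes the $\gamma=0$ sanity check collapse to the ordinary advanced composition at level $\epsilon_0$. So either you adopt the paper's mixture convention (in which case you must justify why $\gamma$, defined as the probability that the loss \emph{exceeds} $\epsilon$, is the weight on the $\epsilon$-component) or you retain your definition-consistent split and end up proving a theorem with a different $\mu$ than the one stated. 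Your own caveat about not applying $\eta(e^\eta-1)$ to a conditioned mechanism is well taken, but it is secondary to this weight mismatch: as it stands your argument does not terminate at the claimed expression for $\epsilon'$.
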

\begin{proof}
Let, $\mathcal{M}^{1...n}: \mathcal{D} \rightarrow R^{1} \times R^2 \times ... \times R^n$ denote the $n$-fold composition of privacy-preserving mechanisms  $\{\mathcal{M}^{i}: \mathcal{D} \rightarrow R^{i}\}_{i=1}^n$. Each $\epsilon_0$-differentially private $\mathcal{M}^i$ also satisfies $(\epsilon, \gamma)$-privacy at risk for some $\epsilon \leq \epsilon_0$ and appropriately computer $\gamma$. Consider any two neighbouring datasets $x, y \in \mathcal{D}$. Let,
\[
B = \{(r_1, ..., r_n) | \bigwedge_{i=1}^n \frac{\mathbb{P}(\mathcal{M}^i(x) = r_i)}{\mathbb{P}(\mathcal{M}^i(y) = r_i)} > \text{e}^\epsilon \}
\]
Using the technique in \citep[Theorem 3.20]{dwork2014algorithmic}, it suffices to show that $\mathbb{P}(\mathcal{M}^{1...n}(x) \in B) \leq \delta$.

Consider,
\begin{align}
&\ln{\frac{\mathbb{P}(\mathcal{M}^{1...n}(x) = (r_1, ..., r_n))}{\mathbb{P}(\mathcal{M}^{1...n}(y) = (r_1, ..., r_n))}}  \nonumber \\ 
=&\ln {\prod_{i=1}^n} \frac{\mathbb{P}(\mathcal{M}^i(x) = r_i)}{\mathbb{P}(\mathcal{M}^i(y) = r_i)} \nonumber \\
=& \sum_{i=1}^n \ln {\frac{\mathbb{P}(\mathcal{M}^i(x) = r_i)}{\mathbb{P}(\mathcal{M}^i(y) = r_i)}} ~=~ \sum_{i=1}^n \mathcal{C}^i
\end{align}
where $\mathcal{C}^i$ in the last line denotes privacy loss random variable related $\mathcal{M}^i$.

Consider, an $\epsilon$-differentially private mechanism $\mathcal{M}_\epsilon$ and $\epsilon_0$-differentially private mechanism $\mathcal{M}_{\epsilon_0}$. Let $\mathcal{M}_{\epsilon_0}$ satisfy $(\epsilon, \gamma)$-privacy at risk for $\epsilon \leq \epsilon_0$ and appropriately computed $\gamma$. Each $\mathcal{M}^i$ can be simulated as the mechanism $\mathcal{M}_\epsilon$ with probability $\gamma$ and the mechanism $\mathcal{M}_{\epsilon_0}$ otherwise. Therefore, privacy loss random variable for each mechanism $\mathcal{M}^i$ can be written as
\[
\mathcal{C}^i = \gamma \mathcal{C}_\epsilon^i + (1-\gamma) \mathcal{C}_{\epsilon_0}^i
\]
where, $\mathcal{C}_\epsilon^i$ denotes the privacy loss random variable associated with the mechanism $\mathcal{M}_\epsilon$ and $\mathcal{C}_{\epsilon_0}^i$ denotes the privacy loss random variable associated with the mechanism $\mathcal{M}_{\epsilon_0}$.
Using \citep[Lemma $3.18$]{dwork2014algorithmic}, we can bound the mean of every privacy loss random variable as,
\[
\mu \triangleq \mathbb{E}[\mathcal{C}^i] \leq [\gamma\epsilon(e^\epsilon - 1) + (1-\gamma)\epsilon_0(e^{\epsilon_0} - 1)]
\]
We have a collection of $n$ independent privacy random variables $\mathcal{C}^i$s such that $\mathbb{P}\left[|\mathcal{C}^i| \leq \epsilon_0 \right] = 1$. Using Hoeffding's bound~\citep{hoeffding1994probability} on the sample mean for any $\beta > 0$,
\[
\mathbb{P}\left[\frac{1}{n}\sum_i C^i \geq \mathbb{E}[\mathcal{C}^i] + \beta \right] \leq \exp{\left(-\frac{n\beta^2}{2\epsilon_0^2}\right)}
\]
Rearranging the inequality by renaming the upper bound on the probability as $\delta$, we get,
\[
\mathbb{P}\left[\sum_i C^i \geq n\mu + \epsilon_0 \sqrt{2n\ln{\frac{1}{\delta}}} \right] \leq \delta
\]
\end{proof}

Theorem~\ref{thm:compose} is an analogue, in the privacy at risk setting, of the advanced composition of differential privacy~\citep[Theorem 3.20]{dwork2014algorithmic} under a constraint of independent evaluations. Note that, if one takes $\gamma = 0$, then we obtain the exact same formula as in~\citep[Theorem 3.20]{dwork2014algorithmic}. It provides a sanity check for the consistency of composition using privacy at risk.

In fact, if we consider both sources of randomness, the expected value of loss function must be computed by using the law of total expectation.
\[
\mathbb{E}[\mathcal{C}] = \mathbb{E}_{x,y \sim \mathcal{G}}[\mathbb{E}[\mathcal{C}] | x,y]
\]
Therefore, the exact computation of privacy guarantees after the composition requires access to the data-generation distribution. We assume a uniform data-generation distribution while proving Theorem~\ref{thm:compose}. We can obtain better and finer privacy guarantees accounting for data-generation distribution, which we keep as a future work.

\subsection{Convexity and Post-processing}
Since privacy at risk provides a probabilistic privacy guarantee for an $\epsilon_0$-differentially private mechanism, it does not alter the basic properties of differential privacy - convexity and post-processing. We now show that privacy at risk equally adheres to both of these properties. 

\begin{lemma}[Convexity]
	For a given $\epsilon_0$-differentially private privacy-preserving mechanism, privacy at risk satisfies convexity property.
	\label{lemma:convexity}
\end{lemma}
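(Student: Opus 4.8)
The plan is to mimic the classical argument that a data-independent convex combination of differentially private mechanisms is private, and to carry the risk parameter $\gamma$ through it. I would take two $\epsilon_0$-differentially private mechanisms $\mathcal{M}_1$ and $\mathcal{M}_2$, each satisfying $(\epsilon,\gamma)$-privacy at risk in the sense of Definition~\ref{def:par}, and form the combined mechanism $\mathcal{M}_p$ that runs $\mathcal{M}_1$ with probability $p$ and $\mathcal{M}_2$ with probability $1-p$, where the coin flip is independent of the data. Its output law factorises as $\mathbb{P}(\mathcal{M}_p(x)\in Z)=p\,\mathbb{P}(\mathcal{M}_1(x)\in Z)+(1-p)\,\mathbb{P}(\mathcal{M}_2(x)\in Z)$ for every measurable $Z$, which is the only structural fact the rest of the argument needs.

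The key step is a pointwise (weighted mediant) inequality. First I would fix a neighbouring pair $(x,y)$ and a region $Z$, and suppose that neither component is risky there, i.e. $e^{-\epsilon}\le \mathbb{P}(\mathcal{M}_i(x)\in Z)/\mathbb{P}(\mathcal{M}_i(y)\in Z)\le e^{\epsilon}$ for $i\in\{1,2\}$. Writing $a_i=\mathbb{P}(\mathcal{M}_i(x)\in Z)$ and $b_i=\mathbb{P}(\mathcal{M}_i(y)\in Z)$, the bounds $a_i\le e^{\epsilon}b_i$ give, by linearity, $p a_1+(1-p)a_2\le e^{\epsilon}(p b_1+(1-p)b_2)$, and symmetrically for the lower envelope. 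Hence whenever both components respect the $\epsilon$-threshold at $(x,y)$, so does the combination: $\bigl|\ln(\mathbb{P}(\mathcal{M}_p(x)\in Z)/\mathbb{P}(\mathcal{M}_p(y)\in Z))\bigr|\le \epsilon$.

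Contrapositively, the event that $\mathcal{M}_p$ is risky at $(x,y)$ is contained in the union of the events that $\mathcal{M}_1$ or $\mathcal{M}_2$ is risky. I would then take the outer probability over $x,y\sim\mathcal{G}$ with $x\sim y$ exactly as in Definition~\ref{def:par}: since each component event has probability at most $\gamma$, the risk of the combined mechanism is controlled by the measure of that union, which establishes that $\mathcal{M}_p$ again satisfies privacy at risk at level $\epsilon$. Because the coin flip is data-independent, it never enters the data-generation probability space, so the outer measure is unchanged by the combination.

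The step I expect to be the main obstacle is the bookkeeping on the risk parameter under the outer probability. A naive union bound over the two component risky sets inflates the risk to $2\gamma$ (more generally $\gamma_1+\gamma_2$), so to recover the clean bound $\gamma$ I would argue that for mechanisms calibrated identically on the same query $f$ — the setting of interest here, where both components inherit the same $\epsilon_0$-differential privacy and the risky region depends only on the realised quantity $\lVert f(x)-f(y)\rVert_1$ — the two risky sets coincide, so their union carries the same measure $\gamma$. Making this coincidence precise, rather than merely invoking the union bound, is the delicate point, and it is also what ties the convexity property back to the data-generation distribution $\mathcal{G}$ underlying Definition~\ref{def:par}.
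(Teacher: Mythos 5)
Your proposal is essentially correct but follows a genuinely different route from the paper's. You prove convexity in the standard Kifer--Lin sense: take two mechanisms each satisfying $(\epsilon,\gamma)$-privacy at risk, form the data-independent mixture, apply the weighted mediant inequality $\min_i (a_i/b_i) \le \frac{pa_1+(1-p)a_2}{pb_1+(1-p)b_2} \le \max_i (a_i/b_i)$ pointwise in $(x,y,Z)$, and then control the outer probability of the risky event by set containment. The paper instead works at the level of privacy loss random variables: it takes a single $\epsilon_0$-differentially private mechanism satisfying two \emph{different} guarantees $(\epsilon_1,\gamma_1)$ and $(\epsilon_2,\gamma_2)$, represents each via a bound of the form $\mathcal{C}^i \le \gamma_i\epsilon_i + (1-\gamma_i)\epsilon_0$, writes the mixture's loss as $\mathcal{C}=p\mathcal{C}^1+(1-p)\mathcal{C}^2$, and reads off a new pair $(\epsilon',\gamma')$ with $\epsilon'$ a $\gamma$-weighted average of $\epsilon_1,\epsilon_2$. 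Your mediant step is actually the more careful treatment of the mixture: the log-ratio of a convex combination of output distributions is \emph{not} the convex combination of the component log-ratios, so the paper's identity $\mathcal{C}=p\mathcal{C}^1+(1-p)\mathcal{C}^2$ only holds as a modelling convention (simulating the mechanism by a coin flip), whereas your inequality is valid for the actual mixed output law. What the paper's approach buys is that it never needs a union bound: by folding the risk into the bound on $\mathcal{C}^i$ and emitting a \emph{new} pair $(\epsilon',\gamma')$, it avoids the $\gamma_1+\gamma_2$ inflation entirely.

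That inflation is the one real gap in your argument. Containment of the mixture's risky event in the union of the component risky events gives only $\gamma_1+\gamma_2$ in general, and your repair---that the two risky sets coincide because the riskiness depends only on the realised value of $\lVert f(x)-f(y)\rVert_1$---is a property of the Laplace instantiation, not of Definition~\ref{def:par} in the abstract. For the lemma as stated (a single given $\epsilon_0$-differentially private mechanism, so both ``components'' share the same underlying randomness and the same query), the coincidence does hold and your argument closes; but if you intend the statement as the general convexity axiom for arbitrary pairs of mechanisms satisfying $(\epsilon,\gamma)$-privacy at risk, you should either accept the weaker conclusion $(\epsilon,\min\{1,2\gamma\})$ or restrict to the single-mechanism setting explicitly, as the paper implicitly does.
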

\begin{proof}
	Let $\mathcal{M}$ be a mechanism that satisfies $\epsilon_0$-differential privacy. By the definition of the privacy at risk, it also satisfies $(\epsilon_1, \gamma_1)$-privacy at risk as well as $(\epsilon_2, \gamma_2)$-privacy at risk for some $\epsilon_1, \epsilon_2 \leq \epsilon_0$ and appropriately computed values of $\gamma_1$ and $\gamma_2$. Let $\mathcal{M}^1$ and $\mathcal{M}^2$ denote the hypothetical mechanisms that satisfy $(\epsilon_1, \gamma_1)$-privacy at risk and $(\epsilon_2, \gamma_2)$-privacy at risk respectively. We can write privacy loss random variables as follows:
	\begin{align}
	\mathcal{C}^1 &\leq \gamma_1 \epsilon_1 + (1 - \gamma_1)\epsilon_0 \nonumber \\
	\mathcal{C}^2 &\leq \gamma_2 \epsilon_2 + (1 - \gamma_2)\epsilon_0 \nonumber
	\end{align}
	where $\mathcal{C}^1$ and $\mathcal{C}^2$ denote privacy loss random variables for $\mathcal{M}^1$ and $\mathcal{M}^2$.
	
	Let us consider a privacy-preserving mechanism $\mathcal{M}$ that uses $\mathcal{M}^1$ with a probability $p$ and $\mathcal{M}^2$ with a probability $(1 - p)$ for some $p \in [0, 1]$. By using the techniques in the proof of Theorem~\ref{thm:compose}, the privacy loss random variable $\mathcal{C}$ for $\mathcal{M}$ can be written as:
	\begin{align}
	\mathcal{C} &= p\mathcal{C}^1 + (1 - p)\mathcal{C}^2 \nonumber \\
	&\leq \gamma'\epsilon' + (1 - \gamma')\epsilon_0 \nonumber
	\end{align}
	where 
	\begin{align}
	\epsilon' &= \frac{p\gamma_1\epsilon_1 + (1 - p)\gamma_2\epsilon_2}{p\gamma_1 + (1-p)\gamma_2} \nonumber \\
	\gamma' &= (1 - p\gamma_1 - (1-p)\gamma_2) \nonumber
	\end{align}
	Thus, $\mathcal{M}$ satisfies $(\epsilon', \gamma')$-privacy at risk. It proves that privacy at risk staisfies convexity~\cite[Axiom 2.1.2]{kifer2012axiomatic}.
\end{proof}

\begin{lemma}[Post-processing]
	For a given $\epsilon_0$-differentially private privacy-preserving mechanism, privacy at risk satisfies post-processing property.
	\label{lemma:post-processing}
\end{lemma}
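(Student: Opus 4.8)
The plan is to show that whenever $\mathcal{M}$ satisfies $(\epsilon,\gamma)$-privacy at risk, the composition $g \circ \mathcal{M}$ does too, for every post-processing map $g$ whose internal randomness is independent of the database. Mirroring the classical post-processing proof for differential privacy, I would first dispose of deterministic $g$ by a change-of-variable (preimage) argument, and then reduce randomised $g$ to the deterministic case by expressing it as a mixture of deterministic maps and appealing to the convexity established in Lemma~\ref{lemma:convexity}.

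For deterministic $g$, fix a neighbouring pair $x \sim y$ drawn from $\mathcal{G}$ and a measurable $Z' \subseteq \mathrm{Range}(g \circ \mathcal{M})$. Because $g$ acts only on the output of $\mathcal{M}$, the event $\{g(\mathcal{M}(x)) \in Z'\}$ is identical to $\{\mathcal{M}(x) \in g^{-1}(Z')\}$, whence
\[
\ln\frac{\mathbb{P}(g(\mathcal{M}(x)) \in Z')}{\mathbb{P}(g(\mathcal{M}(y)) \in Z')} = \ln\frac{\mathbb{P}(\mathcal{M}(x) \in T)}{\mathbb{P}(\mathcal{M}(y) \in T)}, \qquad T \triangleq g^{-1}(Z').
\]
This is an identity that holds for each realisation of $(x,y)$, so the event whose outer probability Definition~\ref{def:par} controls for $g \circ \mathcal{M}$ at $Z'$ coincides with the corresponding event for $\mathcal{M}$ at the single fixed set $T$. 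Since the privacy-at-risk guarantee for $\mathcal{M}$ is quantified over all output sets, instantiating it at $T$ bounds this probability by $\gamma$; as $Z'$ was arbitrary, $g \circ \mathcal{M}$ inherits $(\epsilon,\gamma)$-privacy at risk. Note that $g$ does not alter $\mathcal{G}$, so the outer probability space over $\mathrm{Range}(\mathcal{M} \circ \mathcal{G})$ is respected automatically.

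For randomised $g$, I would write $g(w) = h(w,s)$ with $h$ deterministic and a seed $s$ drawn independently of the data and of the noise in $\mathcal{M}$, so that $g \circ \mathcal{M}$ is a mixture over $s$ of the deterministic post-processings $h(\cdot,s) \circ \mathcal{M}$, each of which is $(\epsilon,\gamma)$-private at risk by the previous paragraph. Closing this mixture back to $(\epsilon,\gamma)$-privacy at risk is where I expect the real difficulty to lie, and it is the step I would lean on Lemma~\ref{lemma:convexity} (in the spirit of the simulation argument of Theorem~\ref{thm:compose}) to carry out. The obstruction is genuine: for a fixed output set the privacy loss of the mixture is the logarithm of an average over $s$ rather than an average of logarithms, so one cannot establish the tail bound seed-by-seed, and a naive union bound over $s$ would inflate the risk. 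Conceptually the right reason the guarantee survives is a data-processing phenomenon, namely that coarsening the output through $g$ replaces the likelihood ratio by a conditional average of it over the fibres of $g$, which can only contract its distribution toward $1$ and hence shrink the probability of exceeding $e^{\pm\epsilon}$; making this contraction argument rigorous for the $\gamma$-tail, or alternatively verifying that convexity returns the risk level without degradation, is the crux of the proof.
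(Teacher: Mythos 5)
You have honestly flagged the crux and left it open, so the proposal is not a complete proof; worse, the heuristic you offer for closing it is false in general. Post-processing replaces the pointwise likelihood ratio $p_x(z)/p_y(z)$ by its conditional average over the fibre $g^{-1}(w)$ containing the realised output. Averaging does contract the \emph{range} of the ratio (which is why pure $\epsilon_0$-differential privacy survives post-processing), but it does not shrink the tail probability at a level $\epsilon < \epsilon_0$: a fibre that mixes a little ``bad'' mass (ratio near $e^{\epsilon_0}$) with ``good'' mass (ratio near $e^{\epsilon}$) can have an averaged ratio exceeding $e^{\epsilon}$, and the union of such fibres can carry probability well above $\gamma$. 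This is precisely the standard counterexample showing that probabilistic differential privacy is \emph{not} closed under post-processing --- a fact the paper itself invokes in the remark immediately following this lemma --- so ``the distribution contracts toward $1$, hence the tail shrinks'' cannot be the right reason. Your fallback, Lemma~\ref{lemma:convexity}, also does not return the risk undegraded: as stated it produces parameters $(\epsilon',\gamma')$ with $\gamma' = 1 - p\gamma_1 - (1-p)\gamma_2$, which differs from $\gamma$ even when $\gamma_1=\gamma_2=\gamma$, and it covers only two-component mixtures rather than a mixture over a continuum of seeds $s$. One further caveat on the part you treat as settled: the preimage identity disposes of deterministic $g$ only under the reading of Definition~\ref{def:par} in which $Z$ is a fixed set and the outer probability is over $x,y\sim\mathcal{G}$ alone; in the paper's actual instantiation (Theorem~\ref{thm:par_1}) the outer probability is taken over the realised output of the mechanism, and under that reading even the deterministic case reduces to the fibre-averaging problem above.

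For comparison, the paper does not attempt any of this: its proof is a commutative diagram plus the informal observation that a post-processing map alters neither the noise distribution nor the data-generation distribution, and that the underlying $\epsilon_0$-differential-privacy guarantee survives post-processing, so the $(\epsilon,\gamma)$ certification is asserted to carry over. That argument is itself not rigorous --- it never confronts the fact that the privacy loss random variable of $g\circ\mathcal{M}$ is a different random variable from that of $\mathcal{M}$ --- but it does locate the load-bearing hypothesis (the worst-case $\epsilon_0$ bound) correctly. Your write-up identifies the real difficulty more precisely than the paper does, but neither resolves it; a rigorous version would likely need either to restrict the class of post-processing maps (e.g.\ injective ones, for which the fibre average is trivial) or to accept a quantifiable degradation of $(\epsilon,\gamma)$ after post-processing.
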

\begin{proof}
	\[ \begin{tikzcd}
	(\epsilon_0, 0)-DP \arrow{r}{\varphi} \arrow[swap]{d}{} & (\epsilon_0, 0)-DP \arrow{d}{} \\%
	(\epsilon, \gamma)-PaR \arrow{r}{}& (\epsilon, \gamma)-PaR
	\end{tikzcd}
	\]
	Privacy at risk analyses sources of randomness involved in a privacy-preserving mechanism to provide probabilistic guarantees over the privacy level of differential privacy. Application of a deterministic function as a post-processing does not effect either data-generation distribution or noise-distribution. Thus privacy at risk simply reflects the features of underlying privacy level under post-processing. Since differential privacy is preserved under post-processing~\cite[Proposition 2.1]{dwork2014algorithmic}, so is the privacy at risk.
\end{proof}

We need to closely observe Lemma~\ref{lemma:post-processing}. Privacy at risk is preserved under post-processing only if the mechanism satisfies a stronger privacy guarantee such as differential privacy. If the mechanism satisfies a variant such probabilistic differential privacy, which does not satisfy post-processing property~\cite{machanavajjhala2008privacy}, then privacy at risk will not be preserved under post-processing.

\section{Privacy at Risk for Laplace Mechanism}
In this section, we instantiate privacy at risk for the Laplace mechanism for three cases: two cases involving two sources of randomness and third case involving the coupled effect. Three different cases correspond to three different interpretations of the confidence level, represented by the parameter $\gamma$, corresponding to three interpretation of the support of the outer probability in Definition~\ref{def:par}. In order to highlight this nuance, we denote the confidence levels corresponding to the three cases and their three sources of randomness as $\gamma_1$, $\gamma_2$ and $\gamma_3$, respectively.

\subsection{The Case of Explicit Randomness}
\label{sec:case_one}

In this section, we study the effect of the explicit randomness induced by the noise sampled from Laplacian distribution. We provide a probabilistic quantification for fine tuning for the Laplace mechanism. We fine-tune the privacy level for a specified risk under by assuming that the sensitivity of the query is known a priori.

For a Laplace mechanism $\mathcal{L}_{\epsilon_0}^{\Delta_f}$ calibrated with sensitivity $\Delta_f$ and privacy level $\epsilon_0$, we present the analytical formula relating privacy level $\epsilon$ and the risk $\gamma_1$ in Theorem~\ref{thm:par_1}. The proof is available in Appendix~\ref{app:case_1}.

\begin{theorem}
The risk $\gamma_1 \in [0,1]$ with which a Laplace Mechanism $\mathcal{L}^{\Delta_f}_{\epsilon_0}$, for a numeric query $f:\mathcal{D} \rightarrow \mathbb{R}^k$ satisfies a privacy level $\epsilon \geq 0$ is given by
\begin{equation}
\gamma_1 = \frac{\mathbb{P}(T \leq \epsilon)}{\mathbb{P}(T \leq \epsilon_0)},
\label{eqn:gamma_1}
\end{equation}
where $T$ is a random variable that follows a distribution with the following density function.
\[
P_{T}(t) = \frac{2^{1-k}t^{k- \frac{1}{2}}K_{k- \frac{1}{2}}(t)\epsilon_0}{\sqrt{2\pi} \Gamma(k) \Delta_f}
\]
where $K_{n- \frac{1}{2}}$ is the Bessel function of second kind.
\label{thm:par_1}
\end{theorem}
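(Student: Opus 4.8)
The plan is to reduce the two-sided privacy-at-risk bound of Definition~\ref{def:par}, which in the explicit-randomness case is a probability computed purely over the Laplace noise, to a statement about the law of a single scalar random variable, and then to identify that law explicitly. First I would fix a worst-case neighbouring pair $x \sim y$ attaining $\lVert f(x)-f(y)\rVert_1 = \Delta_f$ (the binding configuration for the guarantee) and write the privacy loss $\mathcal{C} = \ln\frac{\mathbb{P}(\mathcal{L}^{\Delta_f}_{\epsilon_0}(x)=z)}{\mathbb{P}(\mathcal{L}^{\Delta_f}_{\epsilon_0}(y)=z)}$ as an explicit functional of the noise vector $L=(L_1,\dots,L_k)$, each $L_i \sim \mathrm{Lap}(\Delta_f/\epsilon_0)$. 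Since the coordinates are independent and the log-density of a product Laplace is an $\ell_1$ expression, $\mathcal{C}$ is the additive functional $\frac{\epsilon_0}{\Delta_f}\sum_i(\lvert L_i+d_i\rvert-\lvert L_i\rvert)$; the aim of this step is to package the relevant part of $|\mathcal{C}|$ into a scalar $T$ whose distribution is tractable. The factor $\epsilon_0/\Delta_f$ in the stated density already signals that $T$ is the rescaled aggregate noise, i.e. its law should be that of $\frac{\epsilon_0}{\Delta_f}\bigl\lvert\sum_i L_i\bigr\rvert$, with $b=\Delta_f/\epsilon_0$ the per-coordinate scale.

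The crux is then computing the law of $T$, for which the clean tool is the Gaussian scale-mixture representation of the Laplace law: $\mathrm{Lap}(b)$ equals $\sqrt{2b^2 E}\,Z$ with $Z\sim N(0,1)$ and $E\sim\mathrm{Exp}(1)$ independent. Conditioned on the scales $E_1,\dots,E_k$, the aggregate of the noise is a centred Gaussian with variance $2b^2\sum_i E_i$; because $\sum_i E_i \sim \mathrm{Gamma}(k,1)$, the unconditional law of $T$ is that of the modulus of a normal-variance-Gamma mixture. Its density is obtained by integrating the Gaussian kernel $\frac{1}{\sqrt{2\pi s}}e^{-t^2/(2s)}$ against the $\mathrm{Gamma}(k)$ density of the variance $s$, i.e. an integral of the form $\int_0^\infty s^{k-3/2}e^{-t^2/(2s)-s/\theta}\,ds$. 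This is exactly the standard integral representation of the modified Bessel function of the second kind, and it evaluates to a multiple of $t^{k-1/2}K_{k-1/2}(t)$; tracking the constants produces the prefactor $2^{1-k}\epsilon_0/(\sqrt{2\pi}\,\Gamma(k)\,\Delta_f)$ in $P_T$. The half-integer order $k-1/2$ is thereby explained by the single Gaussian factor (the $-1/2$) together with the $k$-fold convolution of exponential scales (the $k$).

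It remains to assemble $\gamma_1$ from the law of $T$. Here I would invoke the hard bound $\mathbb{P}[\lvert\mathcal{C}\rvert \le \epsilon_0]=1$ (equivalently Theorem~\ref{thm:dp}): the achieved privacy level is capped at $\epsilon_0$, so the meaningful event ``achieved level $\le \epsilon$'' is read off from $T$ conditioned on the admissible region $\{T \le \epsilon_0\}$, giving $\gamma_1 = \mathbb{P}(T\le\epsilon \mid T\le\epsilon_0) = \mathbb{P}(T\le\epsilon)/\mathbb{P}(T\le\epsilon_0)$, which is precisely Equation~\ref{eqn:gamma_1}. Because $\gamma_1$ is a ratio of probabilities under $P_T$, the overall normalisation of $P_T$ is immaterial, consistent with the fact that the displayed $P_T$ need not integrate to one on $[0,\infty)$. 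I would record two sanity checks: $\gamma_1=0$ at $\epsilon=0$ and $\gamma_1=1$ at $\epsilon=\epsilon_0$, and for $k=1$ the identity $K_{1/2}(t)=\sqrt{\pi/(2t)}\,e^{-t}$ collapses $P_T$ to an exponential shape, yielding the elementary $\gamma_1=(1-e^{-\epsilon})/(1-e^{-\epsilon_0})$.

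The hard part will be the first reduction, not the integral. One must argue carefully that, over the explicit noise randomness at the worst-case neighbours, the two-sided event of Definition~\ref{def:par} is governed by the scalar $T$ with exactly the claimed law, correctly handling the absolute values and the worst-case direction $d$ appearing in $\frac{\epsilon_0}{\Delta_f}\sum_i(\lvert L_i+d_i\rvert-\lvert L_i\rvert)$, and justifying the conditioning on $\{T\le\epsilon_0\}$. Once the problem is recast as the modulus of a normal-variance-Gamma mixture, the Bessel density and the ratio form are essentially bookkeeping, so I expect the delicate step to be establishing that reduction rather than evaluating the Bessel integral itself.
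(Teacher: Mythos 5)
Your identification of the target law of $T$ is correct and your derivation of its density is a genuinely different route from the paper's: you represent $\mathrm{Lap}(b)$ as a Gaussian scale mixture and integrate the Gaussian kernel against a $\mathrm{Gamma}(k,1)$ variance, whereas the paper (Lemma~\ref{lemma:besselK}) obtains the same $t^{k-1/2}K_{k-1/2}(t)$ density by Fourier inversion of the characteristic function $(1+(z\theta)^2)^{-k}$ of a difference of two i.i.d.\ $\mathrm{Gamma}(k,\theta)$ variables. These agree because a Laplace variable is a difference of two independent exponentials, so $\sum_i L_i \stackrel{d}{=} X_1 - X_2$ with $X_1,X_2\sim\mathrm{Gamma}(k,\theta)$; either computation is acceptable, and your $k=1$ sanity check matches Equation~\ref{eq:special_case}.

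The genuine gap is exactly the step you defer to the end: you never establish that $|\mathcal{C}| = \frac{\epsilon_0}{\Delta_f}\bigl|\sum_i(|L_i+d_i|-|L_i|)\bigr|$ has the law of $T$ restricted to $[0,\epsilon_0]$, and without that the identity $\gamma_1=\mathbb{P}(T\le\epsilon)/\mathbb{P}(T\le\epsilon_0)$ does not follow. Your proposed $T=\frac{\epsilon_0}{\Delta_f}\lvert\sum_i L_i\rvert$ is \emph{not} the random variable appearing in the privacy loss; the paper bridges this by writing the loss as $\frac{\epsilon_0}{\Delta_f}\bigl(\sum_i|f(x_i)-z_i| - \sum_i|f(y_i)-z_i|\bigr)$, invoking Lemmas~\ref{lemma:exp} and~\ref{lemma:gamma} to identify each sum as $\mathrm{Gamma}(k,\Delta_f/\epsilon_0)$, and then treating their difference as a difference of independent Gammas whose support is truncated to $[-\Delta_f,\Delta_f]$ by the triangle inequality (Lemma~\ref{lemma:bound}); the renormalisation of the truncated density (Lemma~\ref{lemma:trunc_bessel}) is what produces the denominator $\mathbb{P}(T\le\epsilon_0)$. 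Your appeal to ``conditioning on the admissible region $\{T\le\epsilon_0\}$'' via $\mathbb{P}[|\mathcal{C}|\le\epsilon_0]=1$ gives the right support constraint but is only an event inclusion, not a distributional statement: knowing $|\mathcal{C}|$ never exceeds $\epsilon_0$ does not by itself imply that its law is the truncated-and-renormalised law of $T$. To close the argument you must either reproduce the paper's decomposition into the two Gamma sums (and confront the independence assumption it tacitly makes, since both sums are functions of the same noise realisation $z$) or supply your own derivation of the conditional law of $\sum_i(|L_i+d_i|-|L_i|)$; as written, the central claim is asserted rather than proved.
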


Figure~\ref{fig:general} shows the plot of the privacy level against risk for different values of $k$ and for a Laplace mechanism $\mathcal{L}^{1.0}_{1.0}$. As the value of $k$ increases, the amount of noise added in the output of numeric query increases. Therefore, for a specified privacy level, the privacy at risk level increases with the value of $k$.

The analytical formula representing $\gamma_1$ as a function of $\epsilon$ is bijective. We need to invert it to obtain the privacy level $\epsilon$ for a privacy at risk $\gamma_1$. However the analytical closed form for such an inverse function is not explicit. We use a numerical approach to compute privacy level for a given privacy at risk from the analytical formula of Theorem~\ref{thm:par_1}. 

\begin{figure*}[!t]
\centering
\begin{minipage}{0.62\textwidth}
\begin{subfigure}{0.48\textwidth}
  \centering
  \includegraphics[width=\textwidth]{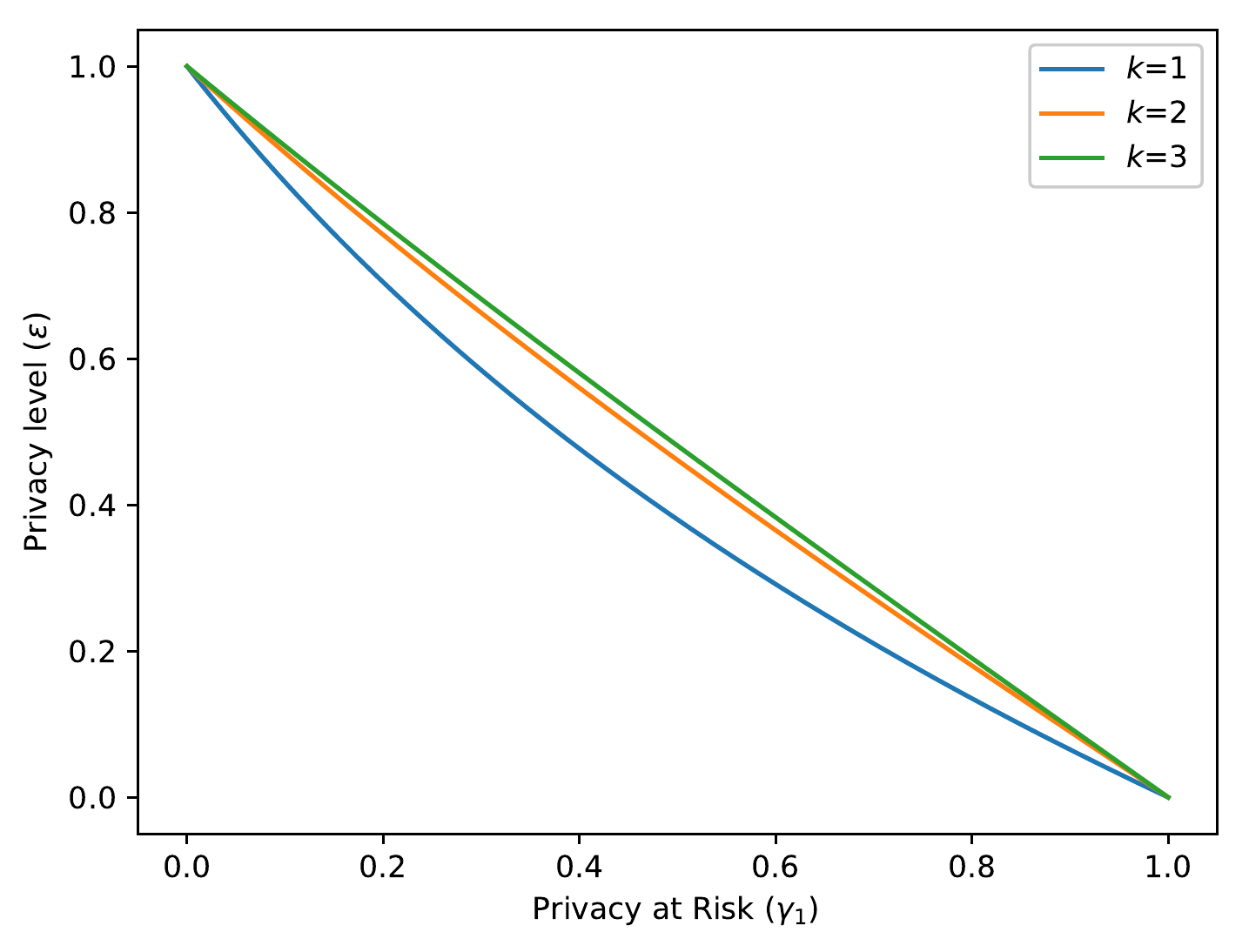}
  \captionof{figure}{}
  \label{fig:general}
\end{subfigure}%
\hspace*{1em}
\begin{subfigure}{0.48\textwidth}
  \centering
  \includegraphics[width=\textwidth]{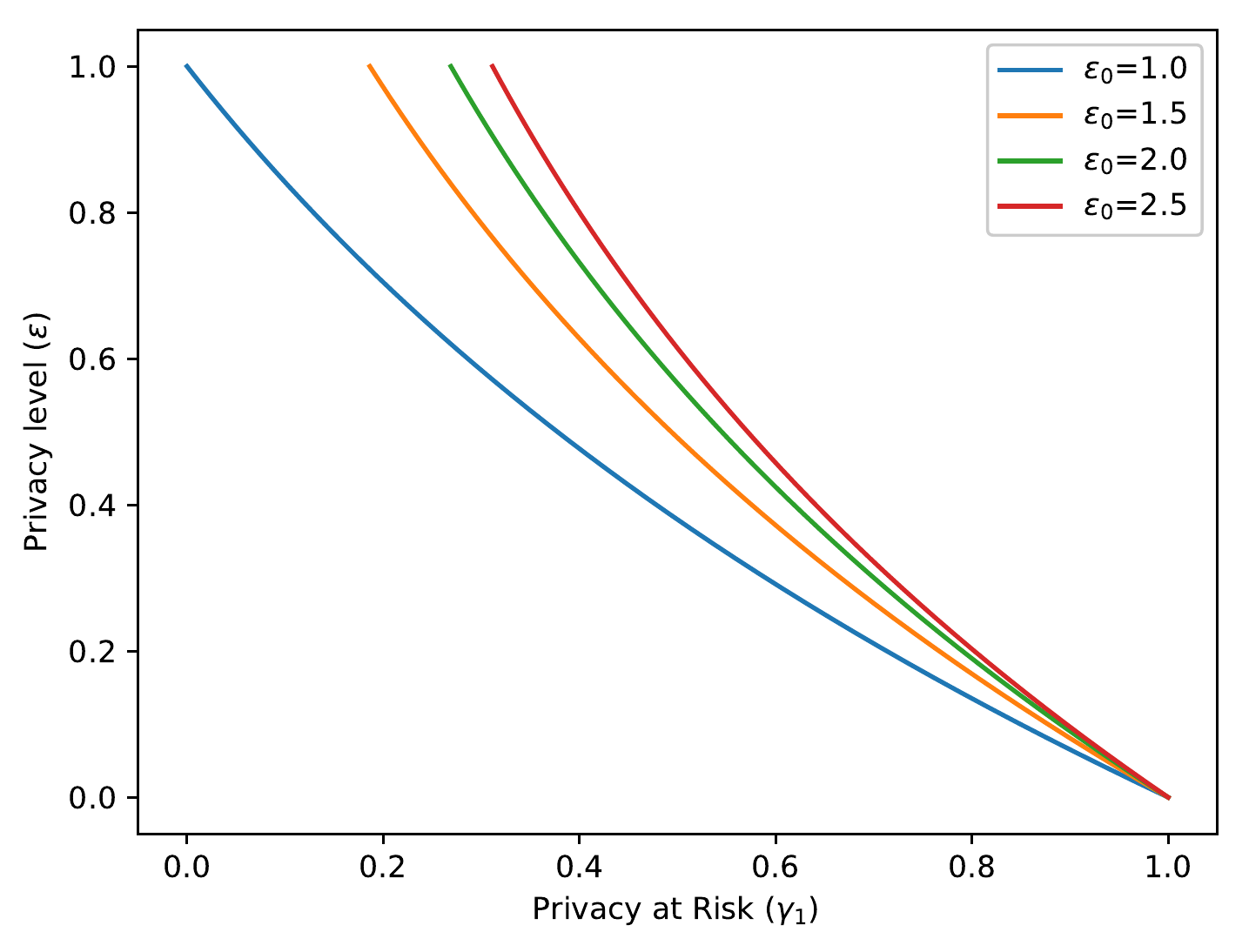}
  \captionof{figure}{}
  \label{fig:single}
\end{subfigure}\vspace*{-1em}
\caption{Privacy level $\epsilon$ for varying privacy at risk $\gamma_1$ for Laplace mechanism $\mathcal{L}_{\epsilon_0}^{1.0}$. In Figure~\ref{fig:general}, we use $\epsilon_0 = 1.0$ and different values of $k$. In Figure~\ref{fig:single}, for $k=1$ and different values of $\epsilon_0$.}
\end{minipage}
\hspace*{0.5em}
\begin{minipage}{0.35\textwidth}
\includegraphics[width=0.9\textwidth]{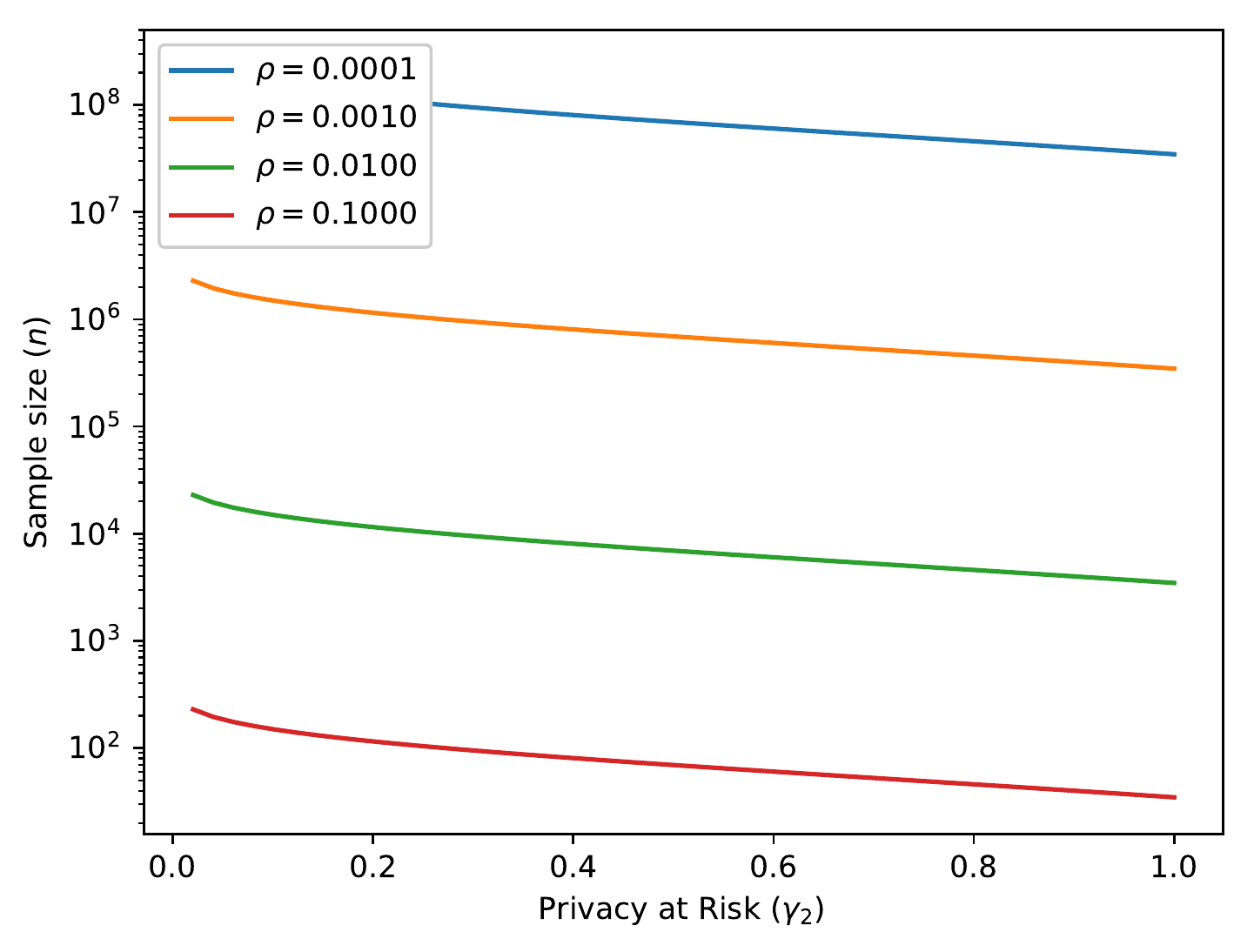}
  \caption{Number of samples $n$ for varying privacy at risk $\gamma_2$ for different error parameter $\rho$.}
  \label{fig:sample_size}
\end{minipage}\vspace{-1.2em}
\end{figure*}

\textbf{Result for a Real-valued Query.} For the case $k=1$, the analytical derivation is fairly straightforward. In this case, we obtain an invertible closed-form of a privacy level for a specified risk. It is presented in Equation~\ref{eq:special_case}.
\begin{equation}
    \epsilon = \ln{\left(\frac{1}{1 - \gamma_1(1 - e^{-\epsilon_0})}\right)}
    \label{eq:special_case}
\end{equation}

\textbf{Remarks on $\epsilon_0$.} For $k=1$, Figure~\ref{fig:single} shows the plot of  privacy at risk level $\epsilon$ versus privacy at risk $\gamma_1$ for the Laplace mechanism $\mathcal{L}^{1.0}_{\epsilon_0}$. As the value of $\epsilon_0$ increases, the probability of Laplace mechanism generating higher value of noise reduces. Therefore, for a fixed privacy level, privacy at risk increases with the value of $\epsilon_0$. The same observation is made for $k > 1$.

\subsection{The Case of Implicit Randomness}
\label{sec:case_two}

In this section, we study the effect of the implicit randomness induced by the data-generation distribution to provide a fine tuning for the Laplace mechanism. We fine-tune the risk for a specified privacy level without assuming that the sensitivity of the query.

If one takes into account randomness induced by the data-generation distribution, all pairs of neighbouring datasets are not equally probable. This leads to estimation of sensitivity of a query for a specified data-generation distribution. If we have access to an analytical form of the data-generation distribution and to the query, we could analytically derive the sensitivity distribution for the query. In general, we have access to the datasets, but not the data-generation distribution that generates them. We, therefore, statistically estimate sensitivity by constructing an empirical distribution. We call the sensitivity value obtained for a specified risk from the empirical cumulative distribution of sensitivity the \textit{sampled sensitivity} (Definition~\ref{def:sampled_sensitivity}). However, the value of sampled sensitivity is simply an estimate of the sensitivity for a specified risk. In order to capture this additional uncertainty introduced by the estimation from the empirical sensitivity distribution rather than the true unknown distribution, we compute a lower bound on the accuracy of this estimation. This lower bound yields a probabilistic lower bound on the specified risk. We refer to it as \textit{empirical risk}. For a specified absolute risk $\gamma_2$, we denote by $\hat{\gamma_2}$ corresponding empirical risk.

For the Laplace mechanism $\mathcal{L}_{\epsilon}^{\Delta_{S_f}}$ calibrated with sampled sensitivity $\Delta_{S_f}$ and privacy level $\epsilon$, we evaluate the empirical risk $\hat{\gamma_2}$. We present the result in Theorem~\ref{thm:par_2}. The proof is available in Appendix~\ref{app:case_2}.

\begin{theorem}
Analytical bound on the empirical risk, $\hat{\gamma_2}$, for Laplace mechanism $\mathcal{L}_{\epsilon}^{\Delta_{S_f}}$ with privacy level $\epsilon$ and sampled sensitivity $\Delta_{S_f}$ for a query $f:\mathcal{D} \rightarrow \mathbb{R}^k$ is
\begin{equation}
\hat{\gamma_2} \geq \gamma_2 (1 - 2e^{-2\rho^2n})
\label{eqn:gamma_2}
\end{equation}
where $n$ is the number of samples used for estimation of the sampled sensitivity and $\rho$ is the accuracy parameter. $\gamma_2$ denotes the specified absolute risk.
\label{thm:par_2}
\end{theorem}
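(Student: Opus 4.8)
The plan is to treat the sensitivity as a random variable and reduce the statement to the Dvoretzky--Kiefer--Wolfowitz (DKW) inequality, since the multiplicative factor $1 - 2e^{-2\rho^2 n}$ is exactly the DKW lower bound on the probability that an empirical distribution function built from $n$ i.i.d.\ samples stays within $\rho$ of the true one. First I would set up the notation: when pairs of neighbouring datasets $x \sim y$ are drawn from $\mathcal{G}$, the quantity $\lVert f(x) - f(y)\rVert_1$ becomes a random variable $S$ with some true but unknown cumulative distribution function $F$. From the $n$ sampled pairs one forms the empirical CDF $\hat{F}_n$, and the sampled sensitivity $\Delta_{S_f}$ (Definition~\ref{def:sampled_sensitivity}) is the value read off $\hat{F}_n$ at the specified absolute risk, i.e.\ $\hat{F}_n(\Delta_{S_f}) = \gamma_2$.

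Next I would make explicit what $\hat{\gamma_2}$ means. The Laplace mechanism calibrated with $\Delta_{S_f}$ in place of the true sensitivity attains the target privacy level on a freshly drawn query pair precisely when that pair's realized sensitivity does not exceed $\Delta_{S_f}$. The empirical risk is therefore the joint probability, over both the estimation sample and the query-time pair, that the empirical estimate is accurate and that the realized sensitivity is bounded by $\Delta_{S_f}$. I would encode this as the intersection of two events, $E_{\mathrm{est}} = \{\sup_s |\hat{F}_n(s) - F(s)| \leq \rho\}$ and $E_{\mathrm{pair}} = \{S \leq \Delta_{S_f}\}$.

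The key step is to bound each factor and combine them. By construction $\mathbb{P}(E_{\mathrm{pair}}) = \gamma_2$, and the DKW inequality with Massart's constant gives $\mathbb{P}(E_{\mathrm{est}}) \geq 1 - 2e^{-2\rho^2 n}$. Because the $n$ samples used to build $\hat{F}_n$ are drawn independently of the query pair, the two events are independent, so $\hat{\gamma_2} = \mathbb{P}(E_{\mathrm{est}} \cap E_{\mathrm{pair}}) = \mathbb{P}(E_{\mathrm{est}})\,\mathbb{P}(E_{\mathrm{pair}}) \geq \gamma_2(1 - 2e^{-2\rho^2 n})$, which is the claimed bound.

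I expect the main obstacle to be conceptual rather than computational: pinning down the precise definition of the empirical risk $\hat{\gamma_2}$ and justifying the factorisation into the product of $\gamma_2$ and the DKW confidence. In particular, one must argue that the accuracy of the empirical sensitivity distribution and the event that the realized pair falls below the sampled sensitivity can legitimately be treated as independent sources of randomness, and that reading a quantile off $\hat{F}_n$ interacts correctly with the sup-norm guarantee of DKW (a one-sided version suffices). Once that modelling is fixed, the analytic content is carried entirely by the DKW bound.
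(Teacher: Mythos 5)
Your decomposition is the same one the paper uses: an estimation-accuracy event controlled by the DKW--Massart inequality and a query-time event $\{S \le \Delta_{S_f}\}$ whose probability is tied to the quantile $\gamma_2$, preceded by the (easy, triangle-inequality) observation that the mechanism is $\epsilon$-differentially private on any pair whose realized sensitivity is at most the calibrating constant. The gap is in how you combine the two events. You assert $\mathbb{P}(E_{\mathrm{pair}}) = \gamma_2$ ``by construction'' and then factor $\mathbb{P}(E_{\mathrm{est}} \cap E_{\mathrm{pair}})$ as a product by independence. Neither claim holds as stated. By construction it is the \emph{empirical} CDF that satisfies $\hat{F}_n(\Delta_{S_f}) = \gamma_2$; the probability of $E_{\mathrm{pair}}$ is $F(\Delta_{S_f})$, the \emph{true} CDF evaluated at a data-dependent point, which need not equal $\gamma_2$. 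And because $\Delta_{S_f} = \hat{F}_n^{-1}(\gamma_2)$ is a function of the estimation sample, the event $\{S \le \Delta_{S_f}\}$ depends on that sample as well, so it is not independent of $E_{\mathrm{est}}$ even though the fresh pair $(x,y)$ is drawn independently.

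The paper avoids the independence claim by conditioning: it writes $\mathbb{P}(B^{\Delta_{S_f}}) \ge \mathbb{P}(B^{\Delta_{S_f}} \mid C_\rho^{\Delta_{S_f}})\,\mathbb{P}(C_\rho^{\Delta_{S_f}})$, dropping the complementary term of the law of total probability, then evaluates the conditional via the sampled-sensitivity definition while DKW bounds $\mathbb{P}(C_\rho^{\Delta_{S_f}})$. That is the repair your argument needs: on the event $E_{\mathrm{est}}$ the sup-norm guarantee gives $F(\Delta_{S_f}) \ge \hat{F}_n(\Delta_{S_f}) - \rho = \gamma_2 - \rho$, so conditioning rather than independence delivers the product form. (Strictly, this introduces a $-\rho$ correction that the paper's own proof also elides when it sets $\mathbb{P}(B^{\Delta_{S_f}} \mid C_\rho^{\Delta_{S_f}}) = F_n(\Delta_{S_f})$; you correctly flag the quantile-versus-sup-norm interaction as the delicate point, but your resolution of it --- exact equality plus independence --- is the one step that does not survive scrutiny.)
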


The error parameter $\rho$ controls the closeness between the empirical cumulative distribution of the sensitivity to the true cumulative distribution of the sensitivity. Lower the value of the error, closer is the empirical cumulative distribution to the true cumulative distribution. Figure~\ref{fig:sample_size} shows the plot of number of samples as a function of the privacy at risk and the error parameter. Naturally, we require higher number of samples in order to have lower error rate. The number of samples reduces as the privacy at risk increases. The lower risk demands precision in the estimated sampled sensitivity, which in turn requires larger number of samples.


Let, $\mathcal{G}$ denotes the data-generation distribution, either known apriori or constructed by subsampling the available data. We adopt the procedure of~\citep{rubinstein2017pain} to sample two neighbouring datasets with $p$ data points each. We sample $p - 1$ data points from $\mathcal{G}$ that are common to both of these datasets and later two more data points. From those two points, we allot one data point to each of the two datasets. 

Let, $S_f = \lVert f(x) - f(y) \rVert_1$ denotes the sensitivity random variable for a given query $f$, where $x$ and $y$ are two neighbouring datasets sampled from $\mathcal{G}$. Using $n$ pairs of neighbouring datasets sampled from $\mathcal{G}$, we construct the empirical cumulative distribution, $F_n$, for the sensitivity random variable.

\begin{definition}
For a given query $f$ and for a specified risk $\gamma_2$, sampled sensitivity, $\Delta_{S_f}$, is defined as the value of sensitivity random variable that is estimated using its empirical cumulative distribution function, $F_n$, constructed using $n$ pairs of neighbouring datasets sampled from the data-generation distribution $\mathcal{G}$. 
\[
\Delta_{S_f} \triangleq F_n^{-1}(\gamma_2)
\]
\label{def:sampled_sensitivity}
\end{definition}

If we knew analytical form of the data generation distribution, we could analytically derive the cumulative distribution function of the sensitivity, $F$, and find the sensitivity of the query as $\Delta_f = F^{-1}(1)$. Therefore, in order to have the sampled sensitivity close to the sensitivity of the query, we require the empirical cumulative distributions to be close to the cumulative distribution of the sensitivity. We use this insight to derive the analytical bound in the Theorem~\ref{thm:par_2}.


\begin{figure*}[t]
\centering
\begin{subfigure}{0.5\linewidth}
  \centering
  \includegraphics[width=\textwidth]{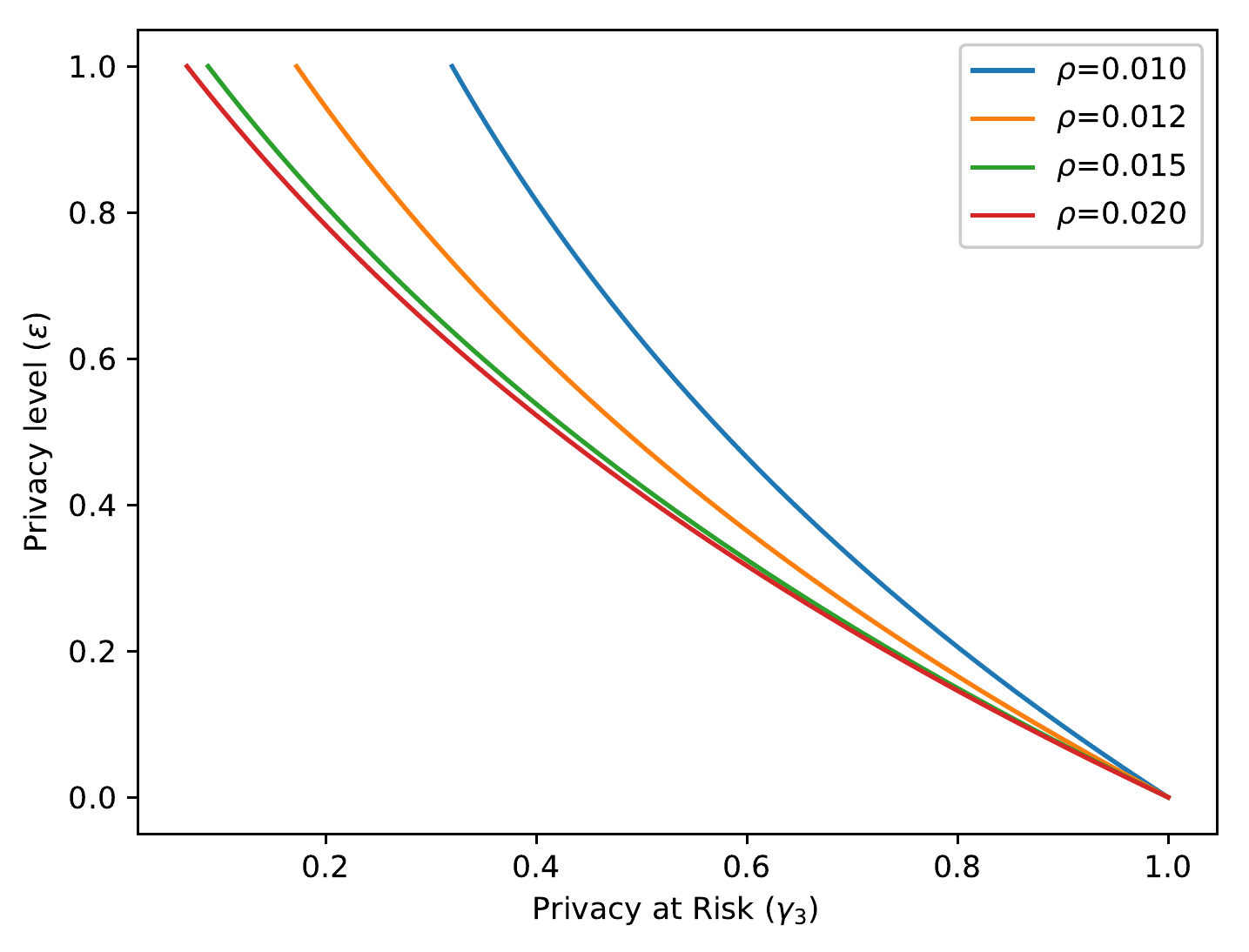}
  \vspace{-1.5em}  \captionof{figure}{}
  \label{fig:rho_dep}
\end{subfigure}%
\hspace*{1em}
\begin{subfigure}{0.5\linewidth}
  \centering
  \includegraphics[width=\textwidth]{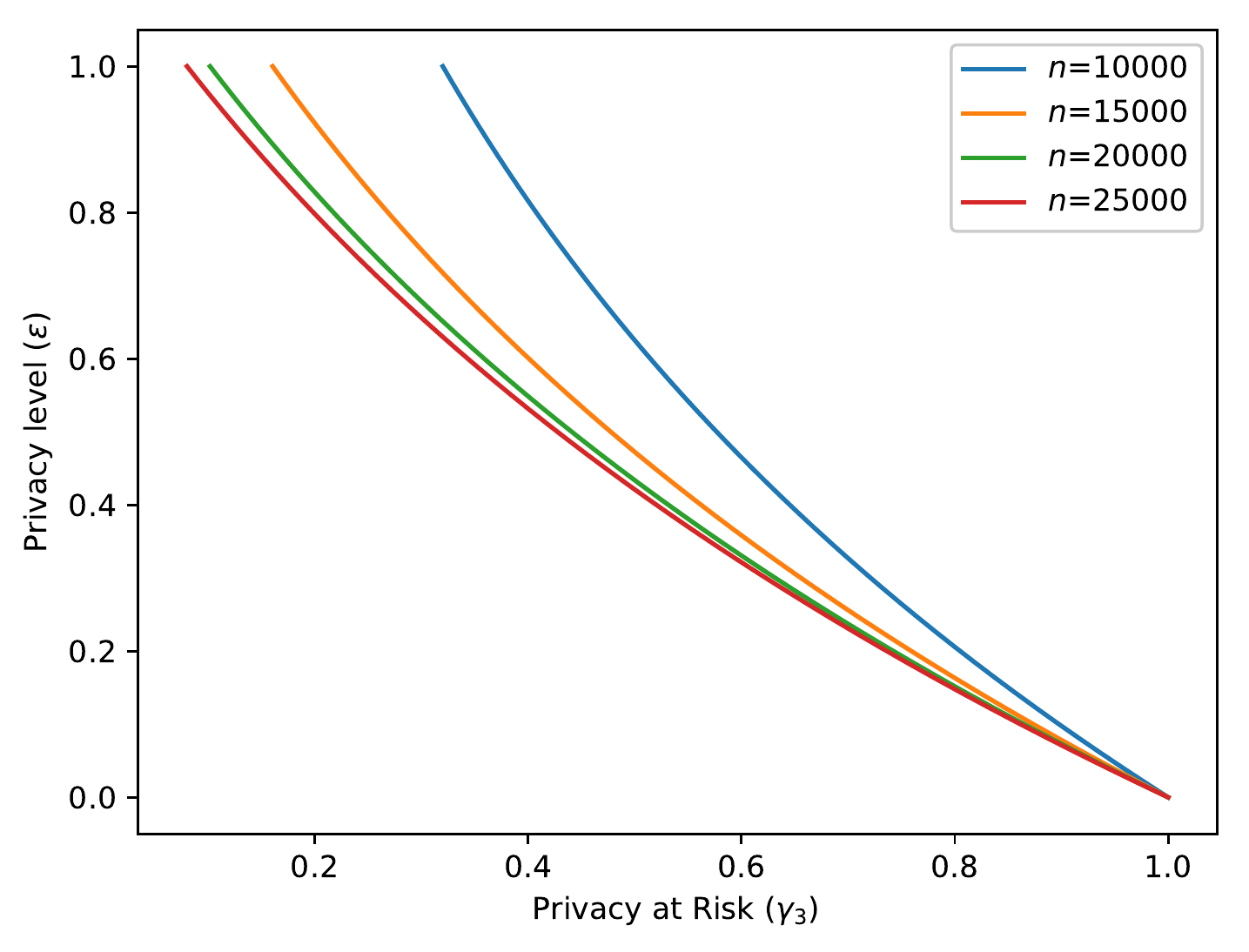}
   \vspace{-1.5em} \captionof{figure}{}
  \label{fig:sample_dep}
\end{subfigure}\vspace{-1.2em}
\caption{Dependence of error and number of samples on the privacy at risk for Laplace mechanism $\mathcal{L}_{1.0}^{\Delta_{S_f}}$. For the figure on the left hand side, we fix the number of samples to $10000$. For the Figure~\ref{fig:sample_dep} we fix the error parameter to $0.01$.}
\label{fig:par_3}
\end{figure*}

\subsection{The Case of Explicit and Implicit Randomness}
\label{sec:case_three}

In this section, we study the combined effect of both explicit randomness induced by the noise distribution and implicit randomness in the data-generation distribution respectively. We do not assume the knowledge of the sensitivity of the query. 

We estimate sensitivity using the empirical cumulative distribution of sensitivity. We construct the empirical distribution over the sensitivities using the sampling technique presented in the earlier case. Since we use the sampled sensitivity (Definition~\ref{def:sampled_sensitivity}) to calibrate the Laplace mechanism, we estimate the \emph{empirical risk} $\hat{\gamma_3}$.

For Laplace mechanism $\mathcal{L}_{\epsilon_0}^{\Delta_{S_f}}$ calibrated with sampled sensitivity $\Delta_{S_f}$ and privacy level $\epsilon_0$, we present the analytical bound on the empirical sensitivity $\hat{\gamma_3}$ in Theorem~\ref{thm:par_3} with proof in the Appendix~\ref{app:case_3}.

\begin{theorem}
	Analytical bound on the empirical risk $\hat{\gamma_3} \in [0, 1]$ to achieve a privacy level $\epsilon > 0$ for Laplace mechanism $\mathcal{L}_{\epsilon_0}^{\Delta_{S_f}}$ with sampled sensitivity $\Delta_{S_f}$ of a query $f: \mathcal{D} \rightarrow \mathbb{R}^k$ is
	\begin{equation}
	\hat{\gamma_3} \geq \gamma_3 (1 - 2e^{-2\rho^2n})
	\end{equation}
	where $n$ is the number of samples used for estimating the sensitivity, $\rho$ is the accuracy parameter. $\gamma_3$ denotes the specified absolute risk defined as:
	\[
	\gamma_3 = \frac{\mathbb{P}(T \leq \epsilon)}{\mathbb{P}(T \leq \eta\epsilon_0)} \cdot \gamma_2
	\]
	\label{thm:par_3}
\end{theorem}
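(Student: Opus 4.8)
The plan is to obtain Theorem~\ref{thm:par_3} by composing the two single-source analyses already established: the explicit-randomness calculation of Theorem~\ref{thm:par_1} and the empirical-sensitivity correction of Theorem~\ref{thm:par_2}. The starting observation is that calibrating the Laplace mechanism with the sampled sensitivity $\Delta_{S_f}$ rather than the true sensitivity $\Delta_f$ changes the effective worst-case privacy level. Since $\mathcal{L}_{\epsilon_0}^{\Delta_{S_f}}$ uses scale $b = \Delta_{S_f}/\epsilon_0$, a neighbouring pair whose true sensitivity is $\Delta_f$ incurs privacy level $\Delta_f/b = \eta\epsilon_0$, where $\eta = \Delta_f/\Delta_{S_f} \geq 1$. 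Thus, with respect to the true sensitivity the mechanism is $\eta\epsilon_0$-differentially private, and this is precisely why the denominator in the definition of $\gamma_3$ is $\mathbb{P}(T \leq \eta\epsilon_0)$ rather than $\mathbb{P}(T \leq \epsilon_0)$.

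First I would re-run the explicit-randomness argument of Theorem~\ref{thm:par_1} verbatim, but with $\epsilon_0$ replaced by the effective level $\eta\epsilon_0$. The random variable $T$ and its Bessel-type density are unchanged, since they depend only on the output dimension $k$ and the Laplace noise; only the normalising denominator shifts. This yields the explicit-randomness risk $\mathbb{P}(T \leq \epsilon)/\mathbb{P}(T \leq \eta\epsilon_0)$. Next I would fold in the implicit randomness. Because the Laplace noise is drawn independently of the data-generation distribution $\mathcal{G}$, the two sources decouple: conditioning on the event that a sampled neighbouring pair has sensitivity at most $\Delta_{S_f}$---which, by the definition of sampled sensitivity as the $\gamma_2$-quantile (Definition~\ref{def:sampled_sensitivity}), occurs with probability $\gamma_2$---the explicit-randomness risk above applies. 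The law of total probability then multiplies the two factors, giving the absolute combined risk $\gamma_3 = \frac{\mathbb{P}(T \leq \epsilon)}{\mathbb{P}(T \leq \eta\epsilon_0)} \cdot \gamma_2$, exactly as stated.

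Finally, to pass from the absolute risk $\gamma_3$ to the empirical risk $\hat{\gamma_3}$ I would reuse the concentration step of Theorem~\ref{thm:par_2}. The quantile $\Delta_{S_f} = F_n^{-1}(\gamma_2)$ is computed from the empirical cumulative distribution $F_n$ rather than the true $F$, so I would invoke the Dvoretzky--Kiefer--Wolfowitz inequality, $\mathbb{P}(\sup_t |F_n(t) - F(t)| > \rho) \leq 2e^{-2\rho^2 n}$, to control the deviation between $F_n$ and $F$ at accuracy $\rho$ with $n$ samples. This introduces exactly the same multiplicative deflation factor $(1 - 2e^{-2\rho^2 n})$ as in the implicit-only case, yielding $\hat{\gamma_3} \geq \gamma_3(1 - 2e^{-2\rho^2 n})$.

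The main obstacle is the decoupling step: justifying that the combined outer probability in Definition~\ref{def:par} factorises cleanly into the product of the explicit-randomness term and $\gamma_2$. This requires carefully defining the events over the joint probability space $\mathrm{Range}(\mathcal{M} \circ \mathcal{G})$, verifying that the noise randomness is genuinely independent of the sensitivity-generating randomness, and checking that the conditioning event ``sensitivity $\leq \Delta_{S_f}$'' aligns with the effective-level substitution $\epsilon_0 \mapsto \eta\epsilon_0$ used in the explicit step. A secondary subtlety is that $\eta$ depends on the unknown $\Delta_f$; I would handle this by treating $\eta\epsilon_0$ as the worst-case effective level and confirming that the Theorem~\ref{thm:par_1} derivation is monotone in this parameter, so that the substitution is valid.
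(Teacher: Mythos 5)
Your proposal follows essentially the same route as the paper: it decomposes the outer probability into the chain $\mathbb{P}(A)\geq\mathbb{P}(A\mid B)\,\mathbb{P}(B\mid C)\,\mathbb{P}(C)$, where $\mathbb{P}(A\mid B)=\mathbb{P}(T\leq\epsilon)/\mathbb{P}(T\leq\eta\epsilon_0)$ comes from re-running the Theorem~\ref{thm:par_1} argument with the effective level $\eta\epsilon_0$ (the paper's Corollary~\ref{cor:joint}), $\mathbb{P}(B\mid C)=\gamma_2$ from the quantile definition of sampled sensitivity, and $\mathbb{P}(C)\geq 1-2e^{-2\rho^2 n}$ from DKW. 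Even the subtlety you flag about $\eta$ depending on the unknown $\Delta_f$ is the one the paper handles with its $\Delta_f/\Delta_{S_f}^*=1+\mathcal{O}(\rho/\Delta_{S_f}^*)$ lemma, so there is nothing substantively different to reconcile.
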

The error parameter $\rho$ controls the closeness between the empirical cumulative distribution of the sensitivity to the true cumulative distribution of the sensitivity. Figure~\ref{fig:par_3} shows the dependence of the error parameter on the number of samples. In Figure~\ref{fig:rho_dep}, we observe that the for a fixed number of samples and a privacy level, the privacy at risk decreases with the value of error parameter. For a fixed number of samples, smaller values of the error parameter reduce the probability of similarity between the empirical cumulative distribution of sensitivity and the true cumulative distribution. Therefore, we observe the reduction in the risk for a fixed privacy level. In Figure~\ref{fig:sample_dep}, we observe that for a fixed value of error parameter and a fixed level of privacy level, the risk increases with the number of samples. For a fixed value of the error parameter, larger values of the sample size increase the probability of similarity between the empirical cumulative distribution of sensitivity and the true cumulative distribution. Therefore, we observe the increase in the risk for a fixed privacy level.

Effect of the consideration of implicit and explicit randomness is evident in the analytical expression for $\gamma_3$ in Equation~\ref{eqn:gamma_3_repeated}. Proof is available in Appendix~\ref{app:case_3}. The privacy at risk is composed of two factors whereas the second term is a privacy at risk that accounts for inherent randomness. The first term takes into account the implicit randomness of the Laplace distribution along with a coupling coefficient $\eta$. We define $\eta$ as the ratio of the true sensitivity of the query to its sampled sensitivity.
\begin{equation}
\gamma_3 \triangleq \frac{\mathbb{P}(T \leq \epsilon)}{\mathbb{P}(T \leq \eta\epsilon_0)} \cdot \gamma_2
\label{eqn:gamma_3_repeated}
\end{equation}


\section{Minimising Compensation Budget for Privacy at Risk}
\label{sec:appl2}
Many service providers collect users' data to enhance user experience. In order to avoid misuse of this data, we require a legal framework that not only limits the use of the collected data but also proposes reparative measures in case of a data leak. General Data Protection Regulation (GDPR)\footnote{\url{https://eugdpr.org/}} is such a legal framework.

Section 82 in GDPR states that any person who suffers from material or non-material damage as a result of a personal data breach has the right to demand compensation from the data processor. Therefore, every GDPR compliant business entity that either holds or processes personal data needs to secure a certain budget in the worst case scenario of the personal data breach. In order to reduce the risk of such an unfortunate event, the business entity may use privacy-preserving mechanisms that provide provable privacy guarantees while publishing their results. In order to calculate the compensation budget for a business entity, we devise a cost model that maps the privacy guarantees provided by differential privacy and privacy at risk to monetary costs. The discussions demonstrate the usefulness of probabilistic quantification of differential privacy in a business setting.

\subsection{Cost Model for Differential Privacy}
Let $E$ be the compensation budget that a business entity has to pay to every stakeholder in case of a personal data breach when the data is processed without any provable privacy guarantees. Let $E_\epsilon^{dp}$ be the compensation budget that a business entity has to pay to every stakeholder in case of a personal data breach when the data is processed with privacy guarantees in terms of $\epsilon$-differential privacy.

Privacy level, $\epsilon$, in $\epsilon$-differential privacy is the quantifier of indistinguishability of the outputs of a privacy-preserving mechanism when two neighbouring datasets are provided as inputs. When the privacy level is zero, the privacy-preserving mechanism outputs all results with equal probability. The indistinguishability reduces with increase in the privacy level. Thus, privacy level of zero bears the lowest risk of personal data breach and the risk increases with the privacy level. $E_\epsilon^{dp}$ needs to be commensurate to such a risk and, therefore, it needs to satisfy the following constraints.
\begin{enumerate}
	\item For all $\epsilon \in \mathbb{R}^{\geq 0}$, $E_\epsilon^{dp} \leq E$.
	\item $E_\epsilon^{dp}$ is a monotonically increasing function of $\epsilon$.
	\item As $\epsilon \rightarrow 0$, $E_\epsilon^{dp} \rightarrow E_{min}$ where $E_{min}$ is the unavoidable cost that business entity might need to pay in case of personal data breach even after the privacy measures are employed.
	\item As $\epsilon \rightarrow \infty$, $E_\epsilon^{dp} \rightarrow E$.
\end{enumerate}
There are various functions that satisfy these constraints. For instantiation, we choose to work with a cost model that is convex with respect to $\epsilon$. The cost model $E_\epsilon^{dp}$ as defined in Equation~\ref{eq:dp_cost_model}. 
\begin{equation}
E_\epsilon^{dp} \triangleq E_{min} + Ee^{-\frac{c}{\epsilon}}
\label{eq:dp_cost_model}
\end{equation}
$E_\epsilon^{dp}$ has two parameters, namely $c > 0$ and $E_{min} \geq 0$. 
$c$ controls the rate of change in the cost as the privacy level changes and $E_{min}$ is a privacy level independent bias. For this study, we use a simplified model with $c = 1$ and $E_{min} = 0$.

\subsection{Cost Model for Privacy at Risk}
Let $E_{\epsilon_0}^{par}(\epsilon, \gamma)$ be the compensation that a business entity has to pay to every stakeholder in case of a personal data breach when the data is processed with an $\epsilon_0$-differentially private privacy-preserving mechanism along with a probabilistic quantification of privacy level. Use of such a quantification allows use to provide a stronger a stronger privacy guarantee \textit{viz.} $\epsilon < \epsilon_0$ for a specified privacy at risk at most $\gamma$ for  Thus, we calculate $E_{\epsilon_0}^{par}$ using Equation~\ref{eq:par_cost_model}.
\begin{equation}
E_{\epsilon_0}^{par}(\epsilon, \gamma) \triangleq \gamma E_\epsilon^{dp} + (1 - \gamma) E_{\epsilon_0}^{dp}
\label{eq:par_cost_model}
\end{equation}

\subsubsection{Existence of Minimum Compensation Budget} 
We want to find the privacy level, say $\epsilon_{min}$, that yields the lowest compensation budget.  We do that by minimising Equation~\ref{eq:par_cost_model} with respect to $\epsilon$.

\begin{lemma}\label{lemma:compensation}
$E_{\epsilon_0}^{par}(\epsilon, \gamma)$ is a convex function of $\epsilon$ if $E_\epsilon^{dp}$ is defined by Equation~\eqref{eq:par_cost_model}.
\end{lemma}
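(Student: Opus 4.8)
The plan is to treat $\gamma$ and $\epsilon_0$ as fixed parameters, so that in $E_{\epsilon_0}^{par}(\epsilon,\gamma)=\gamma E_\epsilon^{dp}+(1-\gamma)E_{\epsilon_0}^{dp}$ the summand $(1-\gamma)E_{\epsilon_0}^{dp}$ is a constant in $\epsilon$ and only $\gamma E_\epsilon^{dp}$ varies. Since $\gamma\ge 0$ and convexity is preserved under multiplication by a nonnegative scalar and addition of a constant, the claim reduces to showing that $\epsilon\mapsto E_\epsilon^{dp}=E_{min}+Ee^{-c/\epsilon}$ is convex, i.e.\ that $g(\epsilon)=e^{-c/\epsilon}$ is convex on $\epsilon>0$. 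I would then simply differentiate twice.

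A direct computation gives $g'(\epsilon)=\frac{c}{\epsilon^2}e^{-c/\epsilon}$ and
\[
g''(\epsilon)=e^{-c/\epsilon}\,\frac{c}{\epsilon^{3}}\Bigl(\frac{c}{\epsilon}-2\Bigr),
\]
so that $\frac{\partial^2}{\partial\epsilon^2}E_{\epsilon_0}^{par}(\epsilon,\gamma)=\gamma E\,g''(\epsilon)$. Because $e^{-c/\epsilon}>0$ and $c/\epsilon^3>0$, the sign is entirely controlled by the factor $(c/\epsilon-2)$, and convexity would follow immediately from $g''\ge 0$.

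The main obstacle is precisely that this factor changes sign: $g''(\epsilon)>0$ iff $\epsilon<c/2$ and $g''(\epsilon)<0$ iff $\epsilon>c/2$, so $g$ is S-shaped and $E_{\epsilon_0}^{par}$ is \emph{not} globally convex on $(0,\infty)$. To salvage the statement I would restrict attention to the interval $\epsilon\in(0,c/2]$ on which $g''\ge 0$, and argue that the privacy-at-risk regime of interest — stronger guarantees $\epsilon<\epsilon_0$ under the simplified choice $c=1$ — is the intended domain. Alternatively, since treating $\gamma$ as a constant makes $E_{\epsilon_0}^{par}$ monotone in $\epsilon$ and hence yields no interior minimiser, the honest reading may be that $\gamma$ is the $\epsilon$-dependent risk $\gamma(\epsilon)$ supplied by Theorem~\ref{thm:par_1}; substituting $\gamma(\epsilon)=\frac{1-e^{-\epsilon}}{1-e^{-\epsilon_0}}$ couples the two factors and alters the second derivative, and verifying that the resulting (messier) expression stays nonnegative on $(0,\epsilon_0)$ — where the cost equals $E_{\epsilon_0}^{dp}$ at both endpoints and dips below in between — is where the real work would lie. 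Deciding which reading is intended, and pinning down the corresponding domain, is the crux; the differentiation itself is routine.
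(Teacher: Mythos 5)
The paper offers no proof of Lemma~\ref{lemma:compensation} at all, so there is no argument of the authors' to compare yours against; your attempt is effectively the only proof on the table, and its critical content is correct. The second-derivative computation $g''(\epsilon)=e^{-c/\epsilon}\,\frac{c}{\epsilon^{3}}\left(\frac{c}{\epsilon}-2\right)$ is right, so $E_\epsilon^{dp}=E_{min}+Ee^{-c/\epsilon}$ is convex only on $(0,c/2]$, and for fixed $\gamma$ the asserted convexity of $E_{\epsilon_0}^{par}$ on all of $\epsilon>0$ is false (the paper's own illustration runs $\epsilon$ up to $1$ with $c=1$, i.e.\ past the inflection point). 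Your second diagnosis is also confirmed by the paper itself: with constant $\gamma$ the derivative $\gamma E\frac{c}{\epsilon^{2}}e^{-c/\epsilon}$ never vanishes, so there is no interior minimiser, whereas the text explicitly says $\gamma$ ``is itself a function of privacy level $\epsilon$'' and Equation~\eqref{eq:e_min} is exactly what one obtains by substituting $\gamma_1(\epsilon)=(1-e^{-\epsilon})/(1-e^{-\epsilon_0})$ from Equation~\eqref{eq:special_case} into Equation~\eqref{eq:par_cost_model} and setting the derivative to zero. So the intended object is the coupled function, and the coupled function is what must be shown convex.

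Where your proposal falls short is that it stops at naming this crux instead of resolving it. Carrying it out: with $c=1$, $E_{min}=0$ and $K=E/(1-e^{-\epsilon_0})$ one has $E_{\epsilon_0}^{par}(\epsilon)=K\,h(\epsilon)+Ee^{-1/\epsilon_0}$ where $h(\epsilon)=(1-e^{-\epsilon})(e^{-1/\epsilon}-e^{-1/\epsilon_0})$, and
\[
h''(\epsilon)=-e^{-\epsilon}\left(e^{-1/\epsilon}-e^{-1/\epsilon_0}\right)+\frac{2e^{-\epsilon}e^{-1/\epsilon}}{\epsilon^{2}}+\left(1-e^{-\epsilon}\right)\frac{e^{-1/\epsilon}}{\epsilon^{3}}\left(\frac{1}{\epsilon}-2\right).
\]
For $\epsilon<\epsilon_0$ the first two terms are positive, and the third is positive iff $\epsilon<1/2$; hence convexity holds on the whole relevant range whenever $\epsilon_0\le 1/2$ (the paper's illustration uses $\epsilon_0=0.5$), survives numerically somewhat beyond that, but genuinely fails for large $\epsilon_0$ (e.g.\ $h''(9)<0$ when $\epsilon_0=10$). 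So the lemma needs a domain restriction under either reading, and pinning that domain down --- which you correctly identified as the missing step --- is work that neither you nor the paper has done.
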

This result also generalises to any other convex cost model satisfying the four conditions.
By Lemma~\ref{lemma:compensation}, there exists a unique $\epsilon_{min}$ that minimises the compensation budget for a specified parametrisation, say $\epsilon_0$. Since the risk $\gamma$ in Equation~\ref{eq:par_cost_model} is itself a function of privacy level $\epsilon$, analytical calculation of $\epsilon_{min}$ is not possible in the most general case. When the output of the query is a real number, we derive the analytic form (Equation~\ref{eq:special_case}) to compute the risk under the consideration of explicit randomness. In such a case, $\epsilon_{min}$ is calculated by differentiating Equation~\ref{eq:par_cost_model} with respect to $\epsilon$ and equating it to zero. It gives us Equation~\ref{eq:e_min} that we solve using any root finding technique such as Newton-Raphson method~\citep{press2007numerical} to compute $\epsilon_{min}$.

\begin{equation}
    \frac{1}{\epsilon} - \ln{\left(1 - \frac{1 - e^\epsilon}{\epsilon^2}\right)} = \frac{1}{\epsilon_0}
    \label{eq:e_min}
\end{equation}

\subsubsection{Fine-tuning Privacy at Risk} 
For a fixed budget, say $B$, re-arrangement of Equation~\ref{eq:par_cost_model} gives us an upper bound on the privacy level $\epsilon$. 
We use the cost model with $c = 1$ and $E_{min} = 0$ to derive the upper bound. 
If we have a maximum permissible expected mean absolute error $T$, we use Equation~\ref{eq:mae} to obtain a lower bound on the privacy at risk level. Equation~\ref{eq:bound_par} illustrates the upper and lower bounds that dictate the permissible range of $\epsilon$ that a data publisher can promise depending on the budget and the permissible error constraints.
\begin{equation}
	\frac{1}{T} \leq \epsilon \leq \left[\ln{\left(\frac{\gamma E}{B - (1 - \gamma)E_{\epsilon_0}^{dp}}\right)}\right]^{-1}
	\label{eq:bound_par}
\end{equation}

Thus, the privacy level is constrained by the effectiveness requirement from below and by the monetary budget from above. \citep{hsu2014differential} calculate upper and lower bound on the privacy level in the differential privacy. They use a different cost model owing to the scenario of research study that compensates its participants for their data and releases the results in a differentially private manner. Their cost model is different than our GDPR inspired modelling.

\subsection{Illustration}
\label{sec:cost_illustration}
Suppose that the health centre in a university that complies to GDPR publishes statistics of its staff health checkup, such as obesity statistics, twice in a year. In January 2018, the health centre publishes that 34 out of 99 faculty members suffer from obesity. In July 2018, the health centre publishes that 35 out of 100 faculty members suffer from obesity. An intruder, perhaps an analyst working for an insurance company, checks the staff listings in January 2018 and July 2018, which are publicly available on website of the university. The intruder does not find any change other than the recruitment of John Doe in April 2018. Thus, with high probability, the intruder deduces that John Doe suffers from obesity. In order to avoid such a privacy breach, the health centre decides to publish the results using the Laplace mechanism. In this case, the Laplace mechanism operates on the count query.

\begin{figure}[!t]
	\centering
	\includegraphics[width=\linewidth]{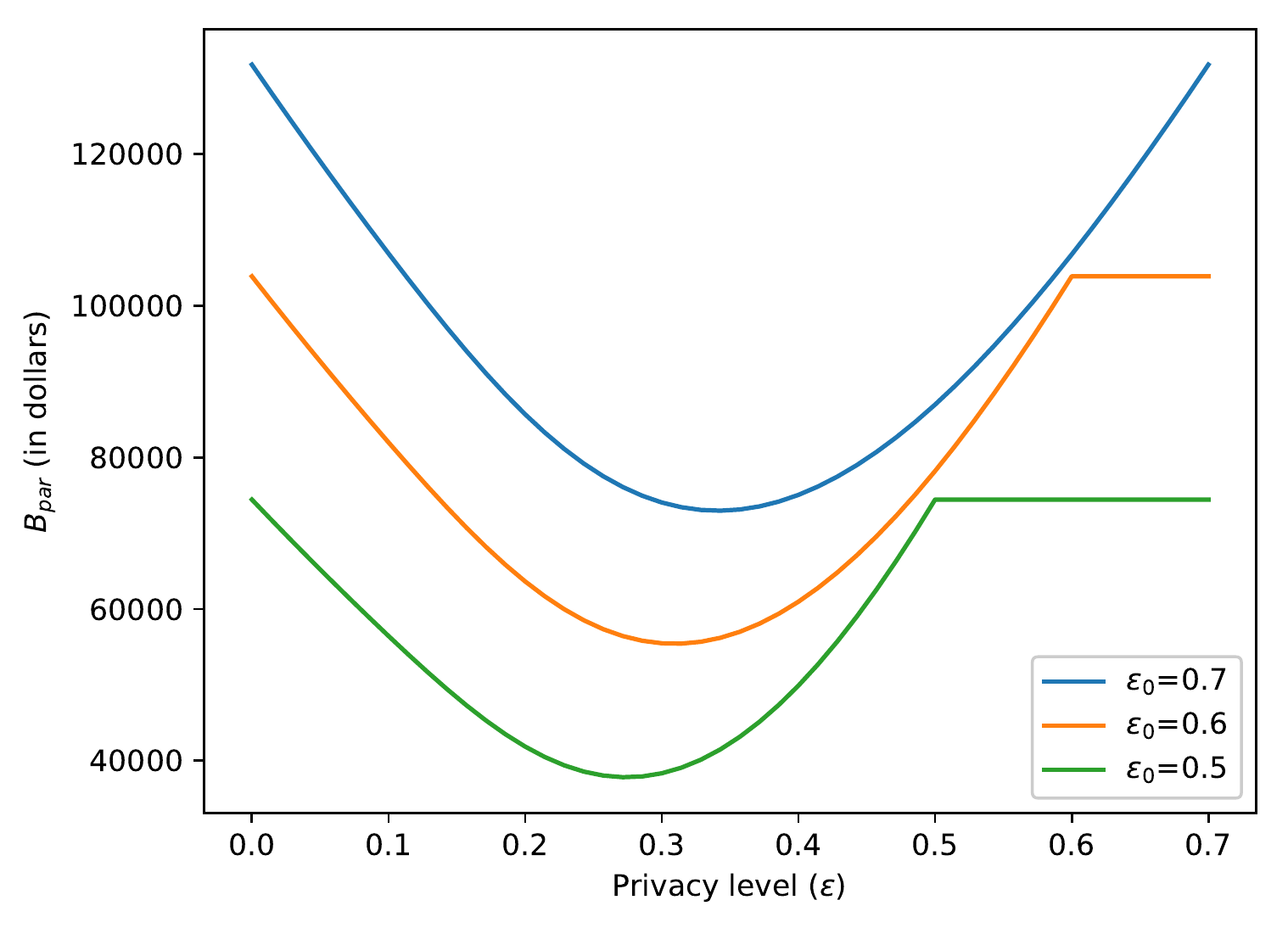}
	\caption{Variation in the budget for Laplace mechanism $\mathcal{L}_{\epsilon_0}^1$ under privacy at risk considering explicit randomness in the Laplace mechanism for the illustration in Section~\ref{sec:cost_illustration}.}
	\label{fig:b_par}\vspace{-1.2em}
\end{figure}
In order to control the amount of noise, the health centre needs to appropriately set the privacy level. Suppose that the health centre decides to use the expected mean absolute error, defined in Equation~\ref{eq:mae}, as the measure of \textit{effectiveness} for the Laplace mechanism.
\begin{equation}
\mathbb{E}\left[\lvert \mathcal{L}^{1}_\epsilon(x) - f(x) \rvert \right] = \frac{1}{\epsilon}
\label{eq:mae}
\end{equation}
Equation~\ref{eq:mae} makes use of the fact that the sensitivity of the count query is one. Suppose that the health centre requires the expected mean absolute error of at most two in order to maintain the quality of the published statistics. In this case, the privacy level has to be at least $0.5$. 

In order to compute the budget, the health centre requires an estimate of $E$. \cite{moriarty2012effects} shows that the incremental cost of premiums for the health insurance with morbid obesity ranges between $\$5467$ to $\$5530$.
With reference to this research, the health centre takes $\$5500$ as an estimate of $E$. For the staff size of $100$ and the privacy level $0.5$, the health centre uses Equation~\ref{eq:dp_cost_model} in its simplified setting to compute the total budget of $\$74434.40$.

Is it possible to reduce this budget without degrading the effectiveness of the Laplace mechanism? We show that it is possible by fine-tuning the Laplace mechanism. Under the consideration of the explicit randomness introduced by the Laplace noise distribution, we show that $\epsilon_0$-differentially private Laplace mechanism also satisfies $\epsilon$-differential privacy with risk $\gamma$, which is computed using the formula in Theorem~\ref{thm:par_1}. Fine-tuning allows us to get a stronger privacy guarantee, $\epsilon < \epsilon_0$ that requires a smaller budget. In Figure~\ref{fig:b_par}, we plot the budget for various privacy levels. We observe that the privacy level $0.274$, which is same as $\epsilon_{min}$ computed by solving Equation~\ref{eq:e_min}, yields the lowest compensation budget of $\$37805.86$. Thus, by using privacy at risk, the health centre is able to save $\$36628.532$ without sacrificing the quality of the published results.

\begin{figure*}[!t]
\centering
\begin{subfigure}{0.33\textwidth}
  \centering
  \includegraphics[width=\textwidth]{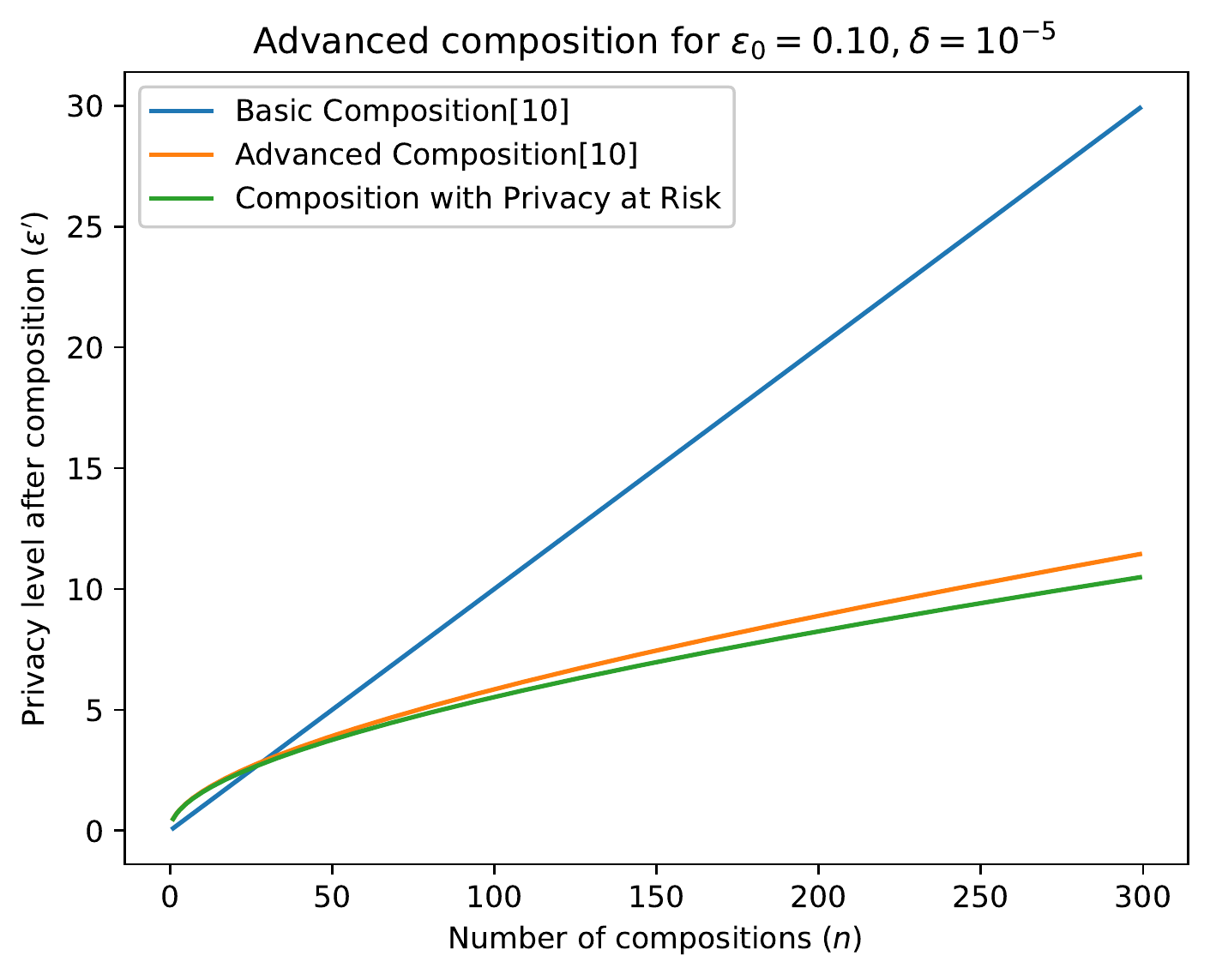}
  \captionof{figure}{$\mathcal{L}_{0.1}^1$ satisfies $(0.08,0.80)$-privacy at risk.}
  \label{fig:composition_0}
\end{subfigure}%
\hspace*{1em}
\begin{subfigure}{0.33\textwidth}
  \centering
  \includegraphics[width=\textwidth]{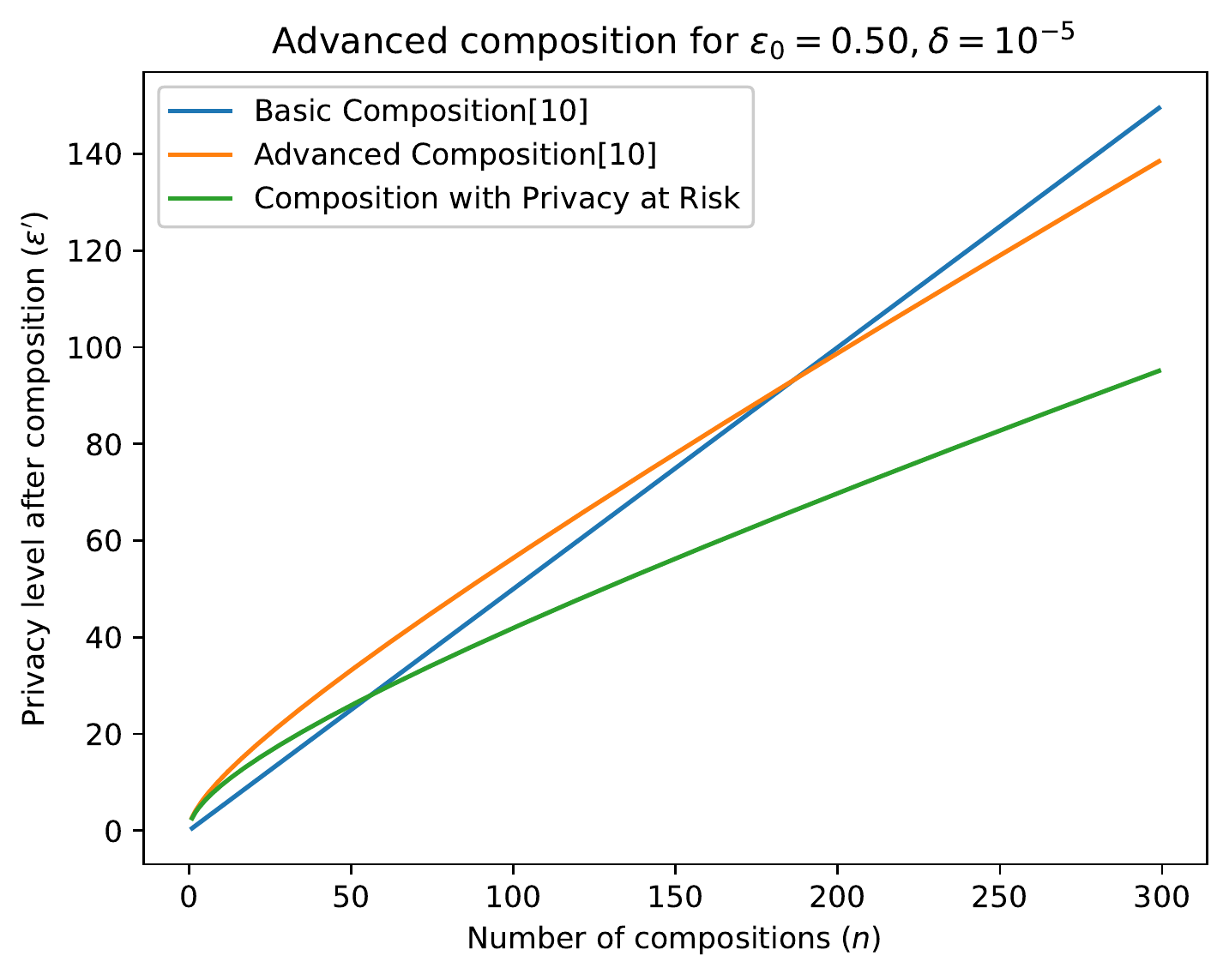}
  \captionof{figure}{$\mathcal{L}_{0.5}^1$ satisfies $(0.27,0.61)$-privacy at risk.}
  \label{fig:composition_1}
\end{subfigure}%
\begin{subfigure}{0.33\textwidth}
  \centering
  \includegraphics[width=\textwidth]{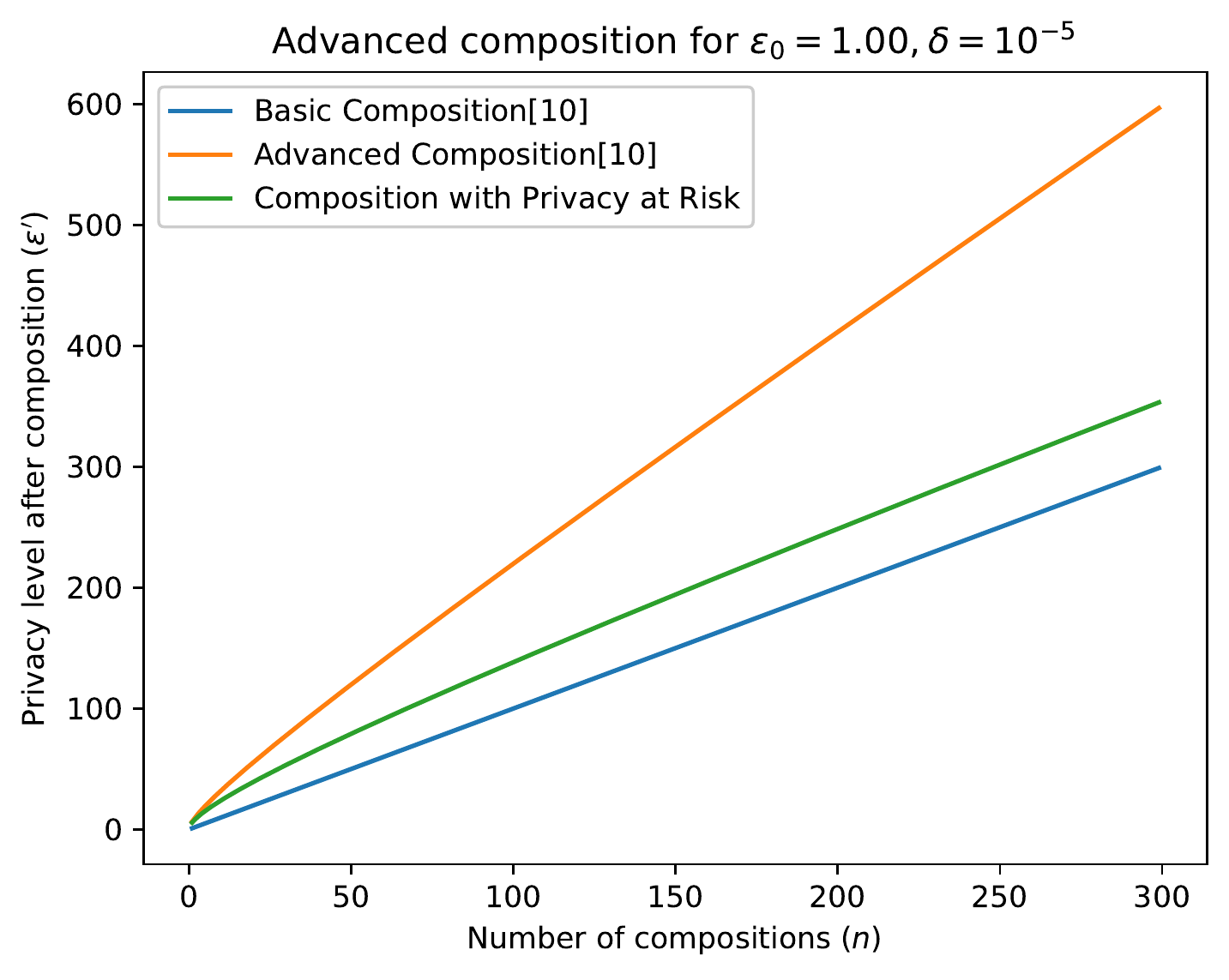}
  \captionof{figure}{$\mathcal{L}_{1.0}^1$ satisfies $(0.42,0.54)$-privacy at risk.}
  \label{fig:composition_2}
\end{subfigure}%
\caption{Comparing the privacy guarantee obtained by basic composition and advanced composition~\citep{dwork2014algorithmic} with the composition obtained using optimal privacy at risk that minimises the cost of Laplace mechanism $\mathcal{L}_{\epsilon_0}^1$. For the evaluation, we set $\delta = 10^{-5}$.}
\label{fig:composition}
\end{figure*}

\subsection{Cost Model and the Composition of Laplace Mechanisms}
Convexity of the proposed cost function enables us to estimate the optimal value of the privacy at risk level. We use the optimal privacy value to provide tighter bounds on the composition of Laplace mechanism. In Figure~\ref{fig:composition}, we compare the privacy guarantees obtained by using basic composition theorem~\citep{dwork2014algorithmic}, advanced composition theorem~\citep{dwork2014algorithmic} and the composition theorem for privacy at risk. We comparatively evaluate them for composition of Laplace mechanisms with privacy levels $0.1, 0.5$ and $1.0$. We compute the privacy level after composition by setting $\delta$ to $10^{-5}$. 

We observe that the use of optimal privacy at risk provided significantly stronger privacy guarantees as compared to the conventional composition theorems. Advanced composition theorem is known to provide stronger privacy guarantees for mechanism with smaller $\epsilon$s. As we observe in Figure~\ref{fig:composition_2} and Figure~\ref{fig:composition_1}, the composition provides strictly stronger privacy guarantees than basic composition, in the cases where the advanced composition fails.

\section{Balancing Utility and Privacy}
\label{sec:appl1}
In this section, we empirically illustrate and discuss the steps that a data steward needs to take and the issues that she needs to consider in order to realise a required privacy at risk level $\varepsilon$ for a confidence level $\gamma$ when seeking to disclose the result of a query.  

We consider a query that returns the parameter of a ridge regression~\citep{murphy} for an input dataset. It is a basic and widely used statistical analysis tool. We use the privacy-preserving mechanism presented by~\citeauthor{ligett2017accuracy} for ridge regression. It is a Laplace mechanism that induces noise in the output parameters of the ridge regression. The authors provide a theoretical upper bound on the sensitivity of the ridge regression, which we refer as \textit{sensitivity}, in the experiments. 

\subsection{Dataset and Experimental Setup}
We conduct experiments on a subset of the 2000 US census dataset provided by Minnesota Population Center in its Integrated Public Use Microdata Series~\citep{IPUMS}. The census dataset consists of 1\% sample of the original census data. It spans over 1.23 million households with records of 2.8 million people. The value of several attributes is not necessarily available for every household. We have therefore selected $212,605$ records, corresponding to the household heads, and $6$ attributes, namely, \textit{Age, Gender, Race, Marital Status, Education, Income}, whose values are available for the $212,605$ records. 

In order to satisfy the constraint in the derivation of the sensitivity of ridge regression~\citep{ligett2017accuracy}, we, without loss of generality, normalise the dataset in the following way. We normalise \textit{Income} attribute such that the values lie in $[0,1]$. We normalise other attributes such that $l_2$ norm of each data point is unity.

All experiments are run on Linux machine with 12-core 3.60GHz Intel\textsuperscript{\textregistered} Core i7\texttrademark processor with 64GB memory. Python\textsuperscript{\textregistered} 2.7.6 is used as the scripting language.

\subsection{Result Analysis}

We train ridge regression model to predict \textit{Income} using other attributes as predictors. We split the dataset into the training dataset ($80\%$) and testing dataset ($20\%$). We compute the \textit{root mean squared error (RMSE)} of ridge regression, trained on the training data with regularisation parameter set to $0.01$, on the testing dataset. We use it as the metric of \textit{utility loss}. Smaller the value of RMSE, smaller the loss in utility. For a given value of privacy at risk level, we compute $50$ runs of an experiment of a differentially private ridge regression and report the means over the $50$ runs of the experiment.

Let us now provide illustrative experiments under the three different cases. In every scenario, the data steward is given a privacy at risk level $\epsilon$ and the confidence level $\gamma$ and wants to disclose the parameters of a ridge regression model that she trains on the census dataset. She needs to calibrate the Laplace mechanism to achieve the privacy at risk required the ridge regression query.

\textbf{The Case of Explicit  Randomness (cf. Section~\ref{sec:case_one}).} In this scenario, the data steward knows the sensitivity for the ridge regression. She needs to compute the privacy level, $\epsilon_0$, to calibrate the Laplace mechanism. She uses Equation~\ref{eqn:gamma_1} that links the desired privacy at risk level $\epsilon$, the confidence level $\gamma_1$ and the privacy level of noise $\epsilon_0$. Specifically, for given $\epsilon$ and $\gamma_1$, she computes $\epsilon_0$ by solving the equation:
\[
\gamma_1 \mathbb{P}(T \leq \epsilon_0) - \mathbb{P}(T \leq \epsilon)  = 0.
\]
Since the equation does not give an analytical formula for $\epsilon_0$, the data steward uses a root finding algorithm such as Newton-Raphson method~\citep{press2007numerical} to solve the above equation. For instance, if she needs to achieve a privacy at risk level $\epsilon = 0.4$ with confidence level $\gamma_1 = 0.6$, she can substitute these values in the above equation and solve the equation to get the privacy level of noise $\epsilon_0 = 0.8$.

\begin{figure*}
	\centering
	\hspace*{-0.5em}
	\begin{minipage}{.5\linewidth}
		\centering
		\includegraphics[width=\columnwidth]{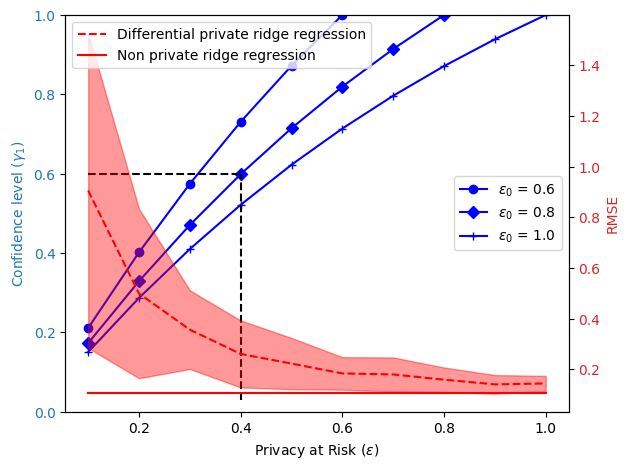}
		\captionof{figure}{Utility, measured by RMSE (right y-axis), and privacy at risk level $\epsilon$ for Laplace mechanism (left y-axis) for varying confidence levels $\gamma_1$.}
		\label{fig:case1}
	\end{minipage}%
	\hspace*{0.5em}
	\begin{minipage}{.5\linewidth}
		\centering
		\includegraphics[width=0.95\columnwidth]{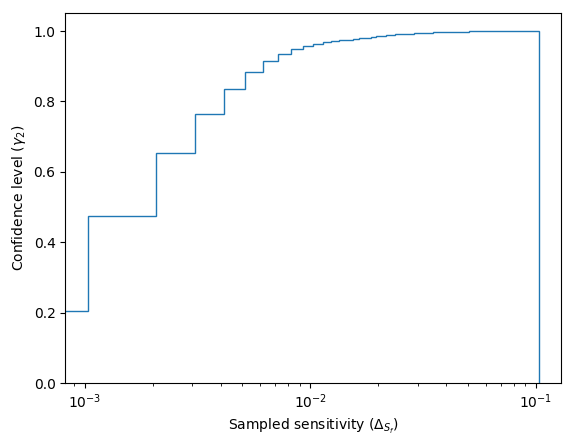}
		\captionof{figure}{Empirical cumulative distribution of the sensitivities of ridge regression queries constructed using $15000$ samples of neighboring datasets.}
		\label{fig:sensitivity_histogram}
	\end{minipage}
\end{figure*}

Figure~\ref{fig:case1} shows the variation of privacy at risk level $\epsilon$ and confidence level $\gamma_1$.  It also depicts the variation of utility loss for different privacy at risk levels in Figure~\ref{fig:case1}. 

In accordance to the data steward's problem, if she needs to achieve a privacy at risk level $\epsilon = 0.4$ with confidence level $\gamma_1 = 0.6$, she obtains the privacy level of noise to be $\epsilon_0 = 0.8$. Additionally, we observe that the choice of privacy level $0.8$ instead of $0.4$ to calibrate the Laplace mechanism gives lower utility loss for the data steward. This is the benefit drawn from the risk taken under the control of privacy at risk.

Thus, she uses privacy level $\epsilon_0$ and the sensitivity of the function to calibrate Laplace mechanism.

\textbf{The Case of Implicit Randomness (cf. Section~\ref{sec:case_two}).} In this scenario, the data steward does not know the sensitivity of ridge regression. She assesses that she can afford to sample at most $n$ times from the population dataset. She understands the effect of the uncertainty introduced by the statistical estimation of the sensitivity. Therefore, she uses the confidence level for empirical privacy at risk $\hat{\gamma_2}$. 

Given the value of $n$, she chooses the value of the accuracy parameter using  Figure~\ref{fig:sample_size}. For instance, if the number of samples that she can draw is $10^4$, she chooses the value of the accuracy parameter $\rho = 0.01$. Next, she uses Equation~\ref{eqn:prob_tolerance} to determine the value of probabilistic tolerance, $\alpha$, for the sample size $n$. For instance, if the data steward is not allowed to access more than $15,000$ samples, for the accuracy of $0.01$ the probabilistic tolerance is $0.9$.
\begin{equation}
\alpha = 1 - 2e^{(-2\rho^2n)}
\label{eqn:prob_tolerance}
\end{equation}
She constructs an empirical cumulative distribution over the sensitivities as described in Section~\ref{sec:case_two}. Such an empirical cumulative distribution is shown in Figure~\ref{fig:sensitivity_histogram}. Using the computed probabilistic tolerance and desired confidence level $\hat{\gamma_2}$, she uses equation in Theorem~\ref{thm:par_2} to determine $\gamma_2$. She computes the sampled sensitivity using the empirical distribution function and the confidence level for privacy $\Delta_{S_f}$ at risk $\gamma_2$. For instance, using the empirical cumulative distribution in Figure~\ref{fig:sensitivity_histogram} she calculates the value of the sampled sensitivity to be approximately $0.001$ for $\gamma_2 = 0.4$ and approximately $0.01$ for $\gamma_2 = 0.85$

Thus, she uses privacy level $\epsilon$, sets the number of samples to be $n$ and computes the sampled sensitivity $\Delta_{S_f}$ to calibrate the Laplace mechanism.

\textbf{The Case of Explicit and Implicit Randomness (cf. Section~\ref{sec:case_three}).} In this scenario, the data steward does not know the sensitivity of ridge regression. She is not allowed to sample more than $n$ times from a population dataset. For a given confidence level $\gamma_2$ and the privacy at risk $\epsilon$, she calibrates the Laplace mechanism using illustration for Section~\ref{sec:case_three}. The privacy level in this calibration yields utility loss that is more than her requirement. Therefore, she wants to re-calibrate the Laplace mechanism in order to reduce utility loss. 

For the re-calibration, the data steward uses privacy level of the pre-calibrated Laplace mechanism, i.e. $\epsilon$, as the privacy at risk level and she provides a new confidence level for empirical privacy at risk $\hat{\gamma_3}$. Using Equation~\ref{eqn:gamma_3} and Equation~\ref{eqn:13}, she calculates:
\[
\hat{\gamma_3} \mathbb{P}(T \leq \eta\epsilon_0) - \alpha\gamma_2~ \mathbb{P}(T \leq \epsilon) = 0
\]
She solves such an equation for $\epsilon_0$ using the root finding technique such as Newton-Raphson method~\citep{press2007numerical}.
For instance, if she needs to achieve a privacy at risk level $\epsilon = 0.4$ with confidence levels $\hat{\gamma_3} = 0.9$ and $\gamma_2 = 0.9$, she can substitute these values and the values of tolerance parameter and sampled sensitivity, as used in the previous experiments, in the above equation. Then, solving the equation  leads to the privacy level of noise $\epsilon_0 = 0.8$.

Thus, she re-calibrates the Laplace mechanism with privacy level $\epsilon_0$, sets the number of samples to be $n$ and sampled sensitivity $\Delta_{S_f}$.

\section{Related Work}
\label{sec:related}

\textbf{Calibration of mechanisms.} Researchers have proposed different privacy-preserving mechanisms to make different queries differentially private. These mechanisms can be broadly classified into two categories. In one category, the mechanisms  explicitly add calibrated noise, such as Laplace noise in the work of~\citep{laplace_mechanism} or Gaussian noise in the work of~\citep{dwork2014algorithmic}, to the outputs of the query. In the other category, \citep{chaudhuri2011differentially,functional,fourier,RKHS} propose mechanisms that alter the query function so that the modified function satisfies differentially privacy. 
Privacy-preserving mechanisms in both of these categories perturb the original output of the query and make it difficult for a malicious data analyst to recover the original output of the query. These mechanisms induce randomness using the explicit noise distribution. Calibration of these mechanisms require the knowledge of the sensitivity of the query. \citeauthor{nissim2007smooth} consider the implicit randomness in the data-generation distribution to compute an estimate of the sensitivity. The authors propose the smooth sensitivity function that is an envelope over the local sensitivities for all individual datasets. Local sensitivity of a dataset is the maximum change in the value of the query over all of its neighboring datasets. In general, it is not easy to analytically estimate the smooth sensitivity function for a general query. \citeauthor{rubinstein2017pain} also study the inherent randomness in the data-generation algorithm. They do not use the local sensitivity. We adopt their approach of sampling the sensitivity from the empirical distribution of the sensitivity. They use order statistics to choose a particular value of the sensitivity. We use the risk, which provides a mediation tool for business entities to assess the actual business risks, on the sensitivity distribution to estimate the sensitivity.

\textbf{Refinements of differential privacy.} In order to account for both sources of randomness, refinements of $\epsilon$-differential privacy are proposed in order to bound the probability of occurrence of worst case scenarios. 
\citeauthor{machanavajjhala2008privacy} propose probabilistic differential privacy that considers upper bounds of the worst case privacy loss for corresponding confidence levels on the noise distribution. Definition of  probabilistic differential privacy incorporates the explicit randomness induced by the noise distribution and bounds the probability over the space of noisy outputs to satisfy the $\epsilon$-differential privacy definition.
\citeauthor{dwork2016concentrated} propose Concentrated differential privacy that considers the expected values of the privacy loss random variables for the corresponding. Definition of concentrated differential privacy incorporates the explicit randomness induced by the noise distribution but considering only the expected value of privacy loss satisfying $\epsilon$-differential privacy definition instead of using the confidence levels limits its scope.
	
\citeauthor{RKHS} propose random differential privacy that considers the privacy loss for corresponding confidence levels on the implicit randomness in the data-generation distribution. Definition of random differential privacy incorporates the implicit randomness induced by the data-generation distribution and bounds the probability over the space of datasets generated from the given distribution to satisfy the $\epsilon$-differential privacy definition.
\citeauthor{dwork2006our} define approximate differential privacy by adding a constant bias to the privacy guarantee provided by the differential privacy. It is not a probabilistic refinement of the differential privacy.

Around the same time of our work, \citeauthor{triastcyn2019federated} independently propose Bayesian differential privacy that takes into account both of the sources of randomness. Despite this similarity, our works differ in multiple dimensions. Firstly, they have shown the reduction of their definition to a variant of Renyi differential privacy that depends on the data-generation distribution. Secondly, they rely on the moment accountant for the composition of the mechanisms. Lastly, they do not provide a finer case-by-case analysis of the source of randomness, which leads to analytical solutions for the privacy guarantee. 

\citeauthor{kifer2012rigorous} define Pufferfish privacy framework, and its variant by \citeauthor{bassily2013coupled}, that considers randomness due to data-generation distribution as well as noise distribution. Despite the generality of their approach, the framework relies on the domain expert to define a set of \emph{secrets} that they want to protect.

\textbf{Composition theorem.} Recently proposed technique of the \textit{moment accountant}~\citep{abadi2016deep} has become the state-of-the-art of composing mechanisms in the area of privacy-preserving machine learning. Abadi et al. show that the moment accountant provides much strong privacy guarantees than the conventional composition mechanisms. It works by keeping track of various moments of privacy loss random variable and use the bounds on them to provide privacy guarantees. The moment accountant requires access to data-generation distribution to compute the bounds on the moment. Hence, the privacy guarantees are specific to the dataset.

 
\textbf{Cost models.} \citep{ghosh2015selling,chen2016truthful} propose game theoretic methods that provide the means to evaluate the monetary cost of differential privacy. Our approach is inspired by the approach in the work of~\citeauthor{hsu2014differential}. They model the cost under a scenario of a research study wherein the participants are reimbursed for their participation. Our cost modelling is driven by the scenario of securing a compensation budget in compliance with GDPR. Our requirement differs from the requirements for the scenario in their work. In our case, there is no monetary incentive for participants to share their data.

\section{Conclusion and Future Works}
In this paper, we provide a means to fine-tune the privacy level of a privacy-preserving mechanism by analysing various sources of randomness. Such a fine-tuning leads to probabilistic quantification on privacy levels with quantified risks, which we call as privacy at risk. We also provide composition theorem that leverages privacy at risk. We analytical calculate privacy at risk for Laplace mechanism. We propose a cost model that bridges the gap between the privacy level and the compensation budget estimated by a GDPR compliant business entity. Convexity of the cost function ensures existence of unique privacy at risk that minimises compensation budget. The cost model helps in not only reinforcing the ease of application in a business setting but also providing stronger privacy guarantees on the composition of mechanism.

Privacy at risk may be fully analytically computed in cases where the data-generation, or the sensitivity distribution, the noise distribution and the query are analytically known and take convenient forms. We are now looking at such convenient but realistic cases.

\section*{Acknowledgements}
We want convey a special thanks to Pierre Senellart at DI, {\'E}cole Normale Sup{\'e}rieure, Paris for his careful reading of our drafts and thoughtful interventions.

\bibliography{ref} 

\newpage
\appendix
\section{Proof of Theorem~\ref{thm:par_1} (Section~\ref{sec:case_one})}
\label{app:case_1}
Although a Laplace mechanism $\mathcal{L}^{\Delta_f}_{\epsilon}$ induces higher amount of noise on average than a Laplace mechanism $\mathcal{L}^{\Delta_f}_{\epsilon_0}$ for $\epsilon < \epsilon_0$, there is a non-zero probability that $\mathcal{L}^{\Delta_f}_{\epsilon}$ induces noise commensurate to $\mathcal{L}^{\Delta_f}_{\epsilon_0}$. This non-zero probability guides us to calculate the privacy at risk $\gamma_1$ for the privacy at risk level $\epsilon$. In order to get an intuition, we illustrate the calculation of the overlap between two Laplace distributions as an estimator of similarity between the two distributions.
\begin{definition}\textit[{Overlap of Distributions, \citep{papoulis2002probability}}]
The overlap, $O$, between two probability distributions $P_1, P_2$ with support $\mathcal{X}$ is defined as
\[
O = \int_{\mathcal{X}} \min_{} [P_1(x), P_2(x)] ~{dx}.
\]
\end{definition}
\begin{lemma}
The overlap $O$ between two probability distributions, $\mathrm{Lap}(\frac{\Delta_f}{\epsilon_1})$ and $\mathrm{Lap}(\frac{\Delta_f}{\epsilon_2})$, such that $\epsilon_2 \leq \epsilon_1$, is given by
\[
O = 1 - (\exp{(-\mu \epsilon_2 / \Delta_f)} - \exp{(-\mu \epsilon_1 / \Delta_f)}),
\]
where $\mu = \frac{\Delta_f \ln{(\epsilon_1 / \epsilon_2)}}{\epsilon_1 - \epsilon_2}$.
\label{lemma:overlap}
\end{lemma}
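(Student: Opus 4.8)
The plan is to compute the overlap directly from its definition by identifying, on each region of $\mathbb{R}$, which of the two Laplace densities is smaller, and then integrating the pointwise minimum. Writing the densities out for the scales $b_i = \Delta_f/\epsilon_i$, the density of $\mathrm{Lap}(\Delta_f/\epsilon_i)$ is $P_i(x) = \frac{\epsilon_i}{2\Delta_f}\exp(-\epsilon_i|x|/\Delta_f)$. Since both densities are symmetric about the origin, I would first reduce the problem via $O = 2\int_0^\infty \min[P_1(x),P_2(x)]\,dx$ and work only on $x \geq 0$, where the absolute values disappear.

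The key structural step is to locate where the two densities cross on the positive half-line. Setting $P_1(x) = P_2(x)$ for $x \geq 0$ and cancelling the common factor $1/(2\Delta_f)$ reduces to $\epsilon_1 e^{-\epsilon_1 x/\Delta_f} = \epsilon_2 e^{-\epsilon_2 x/\Delta_f}$; taking logarithms and solving for $x$ yields the single crossing point $x = \mu = \frac{\Delta_f\ln(\epsilon_1/\epsilon_2)}{\epsilon_1-\epsilon_2}$, exactly the quantity named in the statement. Because $\epsilon_1 \geq \epsilon_2$, the narrower density $P_1$ is the larger one at the origin (indeed $P_1(0)=\epsilon_1/(2\Delta_f) \geq \epsilon_2/(2\Delta_f) = P_2(0)$) but decays faster, so the ordering flips exactly once at $\mu$. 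Hence $\min[P_1,P_2] = P_2$ on $[0,\mu]$ and $\min[P_1,P_2] = P_1$ on $[\mu,\infty)$.

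With this piecewise description in hand, the remainder is a routine pair of exponential integrals: $\int_0^\mu P_2 = \tfrac{1}{2}\bigl(1-e^{-\mu\epsilon_2/\Delta_f}\bigr)$ and $\int_\mu^\infty P_1 = \tfrac{1}{2}e^{-\mu\epsilon_1/\Delta_f}$. Doubling their sum and collecting terms gives $O = 1 - \bigl(e^{-\mu\epsilon_2/\Delta_f} - e^{-\mu\epsilon_1/\Delta_f}\bigr)$, which is the claimed formula.

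The only genuinely delicate point is the single-crossing argument: confirming that $P_1$ and $P_2$ meet exactly once on $(0,\infty)$ and that $P_1$ dominates to the left of the crossing while $P_2$ dominates to the right. This follows because the equation $P_1(x)=P_2(x)$ becomes linear in $x$ after taking logarithms, forcing a unique solution, and the sign of $P_1-P_2$ at $x=0$ together with the faster decay of $P_1$ fixes the orientation; everything else is elementary.
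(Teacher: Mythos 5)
Your computation is correct: the crossing point of the two densities on the positive half-line is exactly $\mu = \frac{\Delta_f\ln(\epsilon_1/\epsilon_2)}{\epsilon_1-\epsilon_2}$, the identification of which density is the pointwise minimum on each side of $\mu$ is right (the narrower $P_1$ is larger near the origin, so the minimum is $P_2$ on $[0,\mu]$ and $P_1$ on $[\mu,\infty)$), and the two exponential integrals combine to the stated formula. The paper states Lemma~\ref{lemma:overlap} without proof, so there is no argument to compare against; your direct evaluation of $\int \min[P_1,P_2]$ is the natural one and fills that gap.
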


Using the result in Lemma~\ref{lemma:overlap}, we note that the overlap between two distributions with $\epsilon_0 = 1$ and $\epsilon = 0.6$ is $0.81$. Thus, $\mathcal{L}^{\Delta_f}_{0.6}$ induces noise that is more than $80\%$ times similar to the noise induced by $\mathcal{L}^{\Delta_f}_{1.0}$.  Therefore, we can loosely say that at least $80\%$ of the times a Laplace Mechanism $\mathcal{L}^{\Delta_f}_{1.0}$ will provide the same privacy as a Laplace Mechanism $\mathcal{L}^{\Delta_f}_{0.8}$.

Although the overlap between Laplace distributions with different scales offers an insight into the relationship between different privacy levels, it does not capture the constraint induced by the \textit{sensitivity}. For a given query $f$, the amount of noise required to satisfy differential privacy is commensurate to the sensitivity of the query. This calibration puts a constraint on the noise that is required to be induced on a pair of neighbouring datasets. We state this constraint in Lemma~\ref{lemma:bound}, which we further use to prove that the Laplace Mechanism $\mathcal{L}^{\Delta_f}_{\epsilon_0}$ satisfies $(\epsilon, \gamma_1)$-privacy at risk. 

\begin{lemma}
For a Laplace Mechanism $\mathcal{L}^{\Delta_f}_{\epsilon_0}$, the difference in the absolute values of noise induced on a pair of neighbouring datasets is upper bounded by the sensitivity of the query.
\label{lemma:bound}
\end{lemma}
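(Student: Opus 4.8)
The plan is to interpret the phrase ``difference in the absolute values of noise induced on a pair of neighbouring datasets'' in terms of the noise vectors that realise a common output, and then reduce the claim to the reverse triangle inequality for the $\ell_1$ norm. First I would fix any output $z \in \mathbb{R}^k$ of $\mathcal{L}^{\Delta_f}_{\epsilon_0}$ and any neighbouring pair $x \sim y$. Because the mechanism returns $f(\cdot) + (L_1,\dots,L_k)$, the noise that yields the output $z$ on input $x$ is the vector $z - f(x)$, and the noise that yields the same output $z$ on input $y$ is $z - f(y)$. Reading ``absolute value of noise'' as the $\ell_1$ magnitude of the noise vector, the quantity to be bounded is $\bigl|\,\lVert z - f(x)\rVert_1 - \lVert z - f(y)\rVert_1\,\bigr|$.

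The core step is then a one-line application of the reverse triangle inequality,
\[
\bigl|\,\lVert z - f(x)\rVert_1 - \lVert z - f(y)\rVert_1\,\bigr| \leq \lVert (z - f(x)) - (z - f(y))\rVert_1 = \lVert f(x) - f(y)\rVert_1,
\]
followed by invoking Definition~\ref{def:sensitivity}: since $x \sim y$, the right-hand side is bounded by $\max_{x \sim y}\lVert f(x) - f(y)\rVert_1 = \Delta_f$. Because the bound is independent of the chosen output $z$, it holds uniformly, which is exactly the claimed statement.

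I expect the only real obstacle to be conceptual rather than computational: making the informal wording of the lemma precise. The inequality itself is routine, so the work lies in justifying that the relevant comparison is between the $\ell_1$ norms of $z - f(x)$ and $z - f(y)$, the two noise realisations of a \emph{shared} output, and not, say, between two independently drawn noise samples. I would also flag the downstream payoff to motivate the interpretation: dividing the bound by the scale $\Delta_f/\epsilon_0$ shows that the magnitude of the privacy loss $\frac{\epsilon_0}{\Delta_f}\bigl(\lVert z - f(x)\rVert_1 - \lVert z - f(y)\rVert_1\bigr)$ never exceeds $\epsilon_0$, which is precisely the constraint that Theorem~\ref{thm:par_1} needs when converting the noise distribution into the law of the random variable $T$.
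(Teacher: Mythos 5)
Your proof is correct and is essentially the paper's own argument: the paper fixes a common output $z$, identifies the noises as $f(x)-z$ and $f(y)-z$, applies the triangle inequality componentwise to get $\sum_{i=1}^k\bigl(|f(y_i)-z_i|-|f(x_i)-z_i|\bigr)\leq\sum_{i=1}^k|f(x_i)-f(y_i)|\leq\Delta_f$, which is exactly your reverse triangle inequality for the $\ell_1$ norm followed by Definition~\ref{def:sensitivity}. The only cosmetic difference is that the paper's later use (Equation~\ref{eqn:for_corollary2}) actually invokes the slightly stronger componentwise form $\sum_{i=1}^k\bigl||f(y_i)-z_i|-|f(x_i)-z_i|\bigr|\leq\Delta_f$, which your componentwise triangle-inequality step already yields before summing.
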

\begin{proof}
Suppose that two neighbouring datasets $x$ and $y$ are given input to a numeric query $f: \mathcal{D} \rightarrow \mathbb{R}^k$. For any output $z \in \mathbb{R}^k$ of the Laplace Mechanism $\mathcal{L}^{\Delta_f}_{\epsilon_0}$,
\begin{align*}
\sum_{i=1}^k\left(|f(y_i) - z_i| - |f(x_i) - z_i|\right) &\leq \sum_{i=1}^k \left(|f(x_i) - f(y_i)|\right) \\
&\leq \Delta_f.
\end{align*}
We use triangular inequality in the first step and Definition~\ref{def:sensitivity} of sensitivity in the second step.
\end{proof}

We write $\mathrm{Exp}(b)$ to denote a random variable sampled from an \emph{exponential distribution} with scale $b > 0$. We write $\mathrm{Gamma}(k, \theta)$ to denote a random variable sampled from a \emph{gamma distribution} with shape $k > 0$ and scale $\theta > 0$.
\begin{lemma}\textit[{\citep{papoulis2002probability}}]
If a random variable $X$ follows Laplace Distribution with mean zero and scale $b$, $|X| \sim \mathrm{Exp}(b)$.
\label{lemma:exp}
\end{lemma}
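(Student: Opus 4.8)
The plan is to verify the claim directly from the definitions by computing the distribution of $Y = |X|$ and matching it to that of an exponential random variable with scale $b$. Recall that $X \sim \mathrm{Lap}(b)$ has density $\frac{1}{2b}\exp(-|x|/b)$ on $\mathbb{R}$, which is symmetric about the origin. I would work through the cumulative distribution function of $Y$ rather than a change-of-variables on the density, since folding a symmetric density is cleanest when phrased as an integral over a symmetric interval.

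First I would fix $y \geq 0$ and write $\mathbb{P}(|X| \leq y) = \mathbb{P}(-y \leq X \leq y) = \int_{-y}^{y} \frac{1}{2b}\exp(-|x|/b)\,dx$. Using the evenness of the integrand, this equals $2\int_0^y \frac{1}{2b}\exp(-x/b)\,dx = \int_0^y \frac{1}{b}\exp(-x/b)\,dx$. Evaluating this elementary integral gives $1 - \exp(-y/b)$, which is precisely the cumulative distribution function of $\mathrm{Exp}(b)$ on $[0,\infty)$; for $y < 0$ both distribution functions vanish since $|X| \geq 0$ almost surely. Because the two cumulative distribution functions agree everywhere, the laws coincide and $|X| \sim \mathrm{Exp}(b)$. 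If a density-level argument is preferred, one may instead differentiate to recover $f_{|X|}(y) = \frac{1}{b}\exp(-y/b)$ on $y \geq 0$ directly.

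There is no genuine obstacle here: the argument is a one-line symmetry-and-integration computation. The only point that warrants explicit care is the factor of two arising from folding the density about the origin, which is exactly what converts the Laplace normalising constant $\frac{1}{2b}$ into the exponential constant $\frac{1}{b}$. I would therefore make the folding step explicit so that the bookkeeping of the normalising constants is transparent, and state the result as a matter of matching distribution functions.
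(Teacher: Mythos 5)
Your proof is correct: the paper states this lemma as a cited textbook fact from \citet{papoulis2002probability} and gives no proof of its own, and your computation of $\mathbb{P}(|X|\leq y)=1-e^{-y/b}$ by folding the symmetric Laplace density is the standard, complete verification. The bookkeeping of the factor of two converting $\frac{1}{2b}$ into $\frac{1}{b}$ is handled correctly, so there is nothing to add.
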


\begin{lemma}\textit[{\citep{papoulis2002probability}}]
If $X_1, ..., X_n$ are $n$ i.i.d. random variables each following the Exponential Distribution with scale $b$, $\sum_{i=1}^n X_i \sim \textrm{Gamma}(n, b)$.
\label{lemma:gamma}
\end{lemma}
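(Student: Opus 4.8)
The plan is to prove this classical identity by matching moment generating functions (MGFs), which converts a statement about a sum of independent variables into a statement about a product of scalar functions and then invokes the uniqueness theorem for MGFs. Throughout I would fix the scale convention of the paper: the exponential distribution with scale $b$ has density $\frac{1}{b} e^{-x/b}$ on $[0,\infty)$, while $\mathrm{Gamma}(n, b)$ has density $\frac{1}{\Gamma(n) b^n} x^{n-1} e^{-x/b}$ on $[0,\infty)$. Settling this convention up front matters because the result reads quite differently under the rate parametrisation $1/b$.

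First I would compute the MGF of a single $X_i \sim \mathrm{Exp}(b)$: for $t < 1/b$,
\[
\mathbb{E}\!\left[e^{t X_i}\right] = \int_0^\infty e^{tx}\, \frac{1}{b} e^{-x/b}\, dx = \frac{1}{1 - bt}.
\]
Since the $X_i$ are independent, the MGF of $S = \sum_{i=1}^n X_i$ factorises as the product of the individual MGFs, giving $\mathbb{E}[e^{tS}] = (1 - bt)^{-n}$ on the common domain $t < 1/b$. A direct integration of the $\mathrm{Gamma}(n, b)$ density, using the substitution that reduces $\int_0^\infty x^{n-1} e^{-(1/b - t)x}\, dx$ to a Gamma-function evaluation, shows that $(1 - bt)^{-n}$ is precisely the MGF of $\mathrm{Gamma}(n, b)$. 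Because this MGF exists in an open neighbourhood of $0$ and therefore determines the law uniquely, I conclude $S \sim \mathrm{Gamma}(n, b)$.

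As a fully elementary alternative that avoids MGFs, I would instead argue by induction on $n$ through the convolution formula for densities of sums of independent variables. The base case $n = 1$ is the definitional observation $\mathrm{Gamma}(1, b) = \mathrm{Exp}(b)$; the inductive step convolves the $\mathrm{Gamma}(n-1, b)$ density with the $\mathrm{Exp}(b)$ density, and the resulting integral $\int_0^x (x - u)^{n-2}\, du \propto x^{n-1}$ regenerates the correct $x^{n-1} e^{-x/b}$ shape and normalising constant. In either route the computations are routine; the only mild obstacle is invoking the uniqueness theorem for MGFs cleanly (which is immediate once existence on an interval around zero is verified) and, in the inductive route, carefully tracking the scale $b$ and the $\Gamma$-normalisation so that the constants match at each step.
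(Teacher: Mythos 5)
Your proof is correct on both routes, but there is nothing in the paper to compare it against: the paper states this lemma with only a citation to \citet{papoulis2002probability} and supplies no proof of its own, treating it as textbook background. Your MGF argument is the standard one and your parametrisation agrees with the paper's convention --- this is confirmed by the characteristic function $(1 - \iota z\theta)^{-n}$ that the paper uses for $\mathrm{Gamma}(n,\theta)$ inside the proof of Lemma~\ref{lemma:besselK}, which is exactly the analytic continuation of your $(1-bt)^{-n}$. Two small observations. First, if you wanted your argument to dovetail stylistically with the paper's toolkit, you could run the identical factorisation with characteristic functions instead of MGFs, i.e.\ $\phi_{S}(z) = (1-\iota z b)^{-n}$; this sidesteps the (correctly handled, but still necessary) caveat about the MGF existing on an open interval around zero, since characteristic functions always exist and also determine the law uniquely. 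Second, your convolution induction is sound but only establishes the result for integer $n$; that is all the paper needs, because the lemma is invoked in Equation~\ref{eqn:5} with $n = k$, the dimension of the numeric query's output, so the restriction is harmless --- just worth flagging that the MGF/characteristic-function route proves the more general closure statement while the induction proves precisely the case used.
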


\begin{lemma}
If $X_1$ and $X_2$ are two i.i.d. Gamma$(n, \theta)$ random variables, the probability density function for the random variable $T = |X_1 - X_2| / \theta$ is given by
\[
P_{T}(t; n, \theta) = \frac{2^{2-n}t^{n- \frac{1}{2}}K_{n- \frac{1}{2}}(t)}{\sqrt{2\pi} \Gamma(n) \theta}
\]
where $K_{n- \frac{1}{2}}$ is the modified Bessel function of second kind.
\vspace*{-0.5em}
\label{lemma:besselK}
\end{lemma}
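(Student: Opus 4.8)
The plan is to reduce to unit scale, find the density of the difference of two i.i.d.\ Gamma variables, and then fold it about the origin to obtain the density of the absolute difference $T$. First I would exploit the scaling property of the Gamma distribution: setting $Y_i = X_i/\theta$, the variables $Y_1, Y_2$ are i.i.d.\ $\mathrm{Gamma}(n,1)$, and $T = |X_1 - X_2|/\theta = |Y_1 - Y_2|$. In particular $T$ is scale-free once normalised, so it suffices to compute the law of $D := Y_1 - Y_2$ with unit-scale marginals.

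Next I would compute the density of $D$, for which I see two equivalent routes. The characteristic-function route uses that $Y_i$ has characteristic function $(1-it)^{-n}$, so $\phi_D(t) = (1-it)^{-n}(1+it)^{-n} = (1+t^2)^{-n}$, and Fourier inversion gives $f_D(d) = \frac{1}{\pi}\int_0^\infty \cos(td)\,(1+t^2)^{-n}\,dt$. The convolution route uses $f_Y(y) = y^{n-1}e^{-y}/\Gamma(n)$ on $y>0$ to write, for $d>0$, $f_D(d) = \frac{1}{\Gamma(n)^2}\int_0^\infty (y+d)^{n-1}y^{n-1}e^{-(2y+d)}\,dy$. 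In the convolution route the substitution $y = (d/2)(u-1)$ turns the integral exactly into $\int_1^\infty e^{-du}(u^2-1)^{n-1}\,du$, which by the standard integral representation of the modified Bessel function equals $\frac{\Gamma(n)}{\sqrt{\pi}}(d/2)^{-(n-1/2)}K_{n-1/2}(d)$. Either route then delivers $f_D(d) = \frac{1}{\sqrt{\pi}\,\Gamma(n)}\,(|d|/2)^{n-1/2}K_{n-1/2}(|d|)$.

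Finally I would fold: since $T = |D|$ and $f_D$ is even, $P_T(t) = 2 f_D(t)$ for $t>0$, and collecting the constants via $2\cdot 2^{-(n-1/2)}/\sqrt{\pi} = 2^{2-n}/\sqrt{2\pi}$ reproduces the stated functional form. I expect the main obstacle to be establishing the Bessel integral representation (equivalently, recognising the variance-gamma form of a difference of Gammas); once that identity is in hand, the remainder is bookkeeping of constants. As a consistency check, taking $n=1$ should recover the Laplace density $\tfrac{1}{2}e^{-|t|}$ through $K_{1/2}(t) = \sqrt{\pi/(2t)}\,e^{-t}$, which validates the normalisation and reassuringly matches the $k=1$ special case used elsewhere in the paper.
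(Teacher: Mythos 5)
Your proposal is correct, and its primary (characteristic-function) route is the same as the paper's: the paper also forms $\phi_{X_1-X_2}(z)=(1+(z\theta)^2)^{-n}$ and inverts it, but it delegates the inversion integral to Mathematica, whereas you carry it out by hand — either via the known cosine-transform of $(1+t^2)^{-n}$ or, more self-containedly, via the convolution route with the substitution $y=(d/2)(u-1)$ and the integral representation $K_\nu(z)=\frac{\sqrt{\pi}(z/2)^{\nu}}{\Gamma(\nu+1/2)}\int_1^\infty e^{-zu}(u^2-1)^{\nu-1/2}\,du$. That extra work buys two things. First, reducing to unit scale at the outset makes it transparent that $T=|X_1-X_2|/\theta$ is scale-free, so its density cannot contain $\theta$; your normalised answer $P_T(t)=\frac{2^{2-n}}{\sqrt{2\pi}\,\Gamma(n)}\,t^{n-1/2}K_{n-1/2}(t)$ integrates to $1$, whereas the lemma as stated carries a spurious $1/\theta$. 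Second, you perform the folding step $P_T(t)=2f_D(t)$ explicitly; the paper's proof omits it and consequently ends with the constant $2^{1-n}$ (the density of the signed difference, not of $T$), which is inconsistent with the $2^{2-n}$ in the lemma it is meant to prove. Your constant-tracking, confirmed by the $n=1$ check (where $f_D$ reduces to the Laplace density $\tfrac12 e^{-|d|}$ and hence $P_T$ to the exponential density $e^{-t}$), settles which version is right. One cosmetic caveat: the $n=1$ check as you phrase it validates $f_D$ rather than $P_T$ itself, but the folded version follows immediately.
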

\begin{proof}
Let $X_1$ and $X_2$ be two i.i.d. $\mathrm{Gamma}(n, \theta)$ random variables. Characteristic function of a Gamma random variable is given as
\[
\phi_{X_1}(z) = \phi_{X_2}(z) = (1 - \iota z\theta)^{-n}.
\]
Therefore,
\[
\phi_{X_1 - X_2}(z) = \phi_{X_1}(z) \phi_{X_2}^*(z) = \frac{1}{(1 + (z\theta)^2)^n}
\]
Probability density function for the random variable $X_1 - X_2$ is given by,
\begin{align*}
P_{X_1 - X_2}(x) &= \frac{1}{2\pi} \int_{-\infty}^\infty e^{-\mathrm{i}zx}\phi_{X_1 - X_2}(z) dz \\
&= \frac{2^{1-n}{|\frac{x}{\theta}|}^{n- \frac{1}{2}}K_{n- \frac{1}{2}}(|\frac{x}{\theta}|)}{\sqrt{2\pi} \Gamma(n) \theta}
\end{align*}
where $K_{n- \frac{1}{2}}$ is the Bessel function of second kind.
Let $T = |\frac{X_1 - X_2}{\theta}|$. Therefore,
\[
P_{T}(t; n, \theta) = \frac{2^{1-n}t^{n- \frac{1}{2}}K_{n- \frac{1}{2}}(t)}{\sqrt{2\pi} \Gamma(n) \theta}
\]
We use Mathematica~\citep{Mathematica} to solve the above integral.
\end{proof}

\begin{lemma}
If $X_1$ and $X_2$ are two i.i.d. Gamma$(n, \theta)$ random variables and $|X_1 - X_2| \leq M$, then $T' = |X_1 - X_2| / \theta$ follows 
the distribution with probability density function:
\[
P_{T'}(t; n, \theta, M) = \frac{P_T(t'; n, \theta)}{P_T(T \leq M)},
\]
where $P_T$ is the probability density function of defined in Lemma~\ref{lemma:besselK}.
\label{lemma:trunc_bessel}
\end{lemma}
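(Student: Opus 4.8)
The plan is to treat this as a straightforward truncation (conditional density) argument: the random variable $T'$ is simply $T$ from Lemma~\ref{lemma:besselK} conditioned on the event $\{|X_1 - X_2| \le M\}$, and the density of a conditioned random variable is obtained by restricting its unconditional density to the admissible region and renormalising. No fresh evaluation of the Bessel-function form is needed, since that closed form is already supplied by Lemma~\ref{lemma:besselK}.

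First I would record that, because $T = |X_1 - X_2|/\theta$, the constraint $|X_1 - X_2| \le M$ is the very same event as $\{T \le M/\theta\}$; hence $T'$ has the law of $T$ conditioned on $T$ lying in the bounded interval determined by $M$. Second, I would invoke the elementary definition of a conditional density: for any event $A$ with $\mathbb{P}(T \in A) > 0$, the density of $T$ given $\{T \in A\}$ equals $P_T(t)/\mathbb{P}(T \in A)$ for $t \in A$ and zero otherwise. Taking $A$ to be the truncation region and substituting the closed form of $P_T$ from Lemma~\ref{lemma:besselK} produces exactly the stated expression, with the denominator playing the role of the normalising constant. Third, as a sanity check I would confirm normalisation by integrating the right-hand side over the permitted range, which equals one precisely because the denominator is $\int P_T$ taken over that same range.

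The main obstacle here is not analytic depth but bookkeeping of the scaling and of the conditioning event. I would need to keep the factor $\theta$ consistent between the raw difference $|X_1 - X_2|$ and the rescaled variable $T$, and ensure that the truncation bound appearing in the numerator and the normalising probability in the denominator refer to the \emph{same} event expressed in the \emph{same} variable (the statement as written mixes $t$ with $t'$ and writes $P_T(T \le M)$, which should be read as the probability $\mathbb{P}$ of the conditioning event). Once the event is written unambiguously in terms of $T$, the result is an immediate consequence of the definition of conditional probability applied to the density of Lemma~\ref{lemma:besselK}, and the proof reduces to citing that definition and verifying the normalisation.
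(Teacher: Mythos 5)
Your proposal is correct, and it is exactly the intended argument: the paper states Lemma~\ref{lemma:trunc_bessel} without any proof, and the standard truncation/conditioning identity --- restrict the density of Lemma~\ref{lemma:besselK} to the admissible region and renormalise by the probability of that region --- is all that is needed. You are also right to flag the scaling bookkeeping: since the conditioning event $\{|X_1-X_2|\le M\}$ is $\{T\le M/\theta\}$, the normalising constant should be $\mathbb{P}(T\le M/\theta)$ rather than the literal ``$P_T(T\le M)$'' in the statement; this reading is the one consistent with the lemma's use in Equation~\ref{eqn:6}, where $M=\Delta_f$, $\theta=\Delta_f/\epsilon_0$, and the denominator is $\mathbb{P}(T\le\epsilon_0)$.
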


\begin{lemma}
For Laplace Mechanism $\mathcal{L}^{\Delta_f}_{\epsilon_0}$ with query $f: \mathcal{D} \rightarrow \mathbb{R}^k$ and for any output $Z \subseteq Range(\mathcal{L}^{\Delta_{f}}_{\epsilon_0})$, $\epsilon \leq \epsilon_0$,
\[
\gamma_1 \triangleq \mathbb{P}\left[\ln{\left |\frac{\mathbb{P}(\mathcal{L}^{\Delta_f}_{\epsilon_0}(x) \in Z)}{\mathbb{P}(\mathcal{L}^{\Delta_f}_{\epsilon_0}(y) \in Z)}\right|} \leq \epsilon \right] = \frac{\mathbb{P}(T \leq \epsilon)}{\mathbb{P}(T \leq \epsilon_0)},
\]
where $T$ follows the distribution in Lemma~\ref{lemma:besselK}, $P_T(t; k, \frac{\Delta_f}{\epsilon_0})$.
\label{lemma:main}
\end{lemma}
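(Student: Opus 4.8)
The plan is to start from the explicit Laplace density and reduce the privacy-at-risk event to a statement about a single scalar random variable whose law is the Bessel-$K$ density of Lemma~\ref{lemma:besselK}. Writing $\theta = \Delta_f/\epsilon_0$ for the scale of $\mathcal{L}^{\Delta_f}_{\epsilon_0}$, the output density at $z$ on input $x$ factorises as $\prod_{i=1}^k \frac{1}{2\theta}\exp(-|z_i - f(x_i)|/\theta)$, so the log-ratio collapses to a difference of two $\ell_1$ noise magnitudes:
\[
\ln\frac{\mathbb{P}(\mathcal{L}^{\Delta_f}_{\epsilon_0}(x) = z)}{\mathbb{P}(\mathcal{L}^{\Delta_f}_{\epsilon_0}(y) = z)} = \frac{1}{\theta}\bigl(\lVert z - f(y)\rVert_1 - \lVert z - f(x)\rVert_1\bigr).
\]
Hence the absolute privacy loss equals $\frac{1}{\theta}|B - A|$ with $A = \lVert z - f(x)\rVert_1$ and $B = \lVert z - f(y)\rVert_1$, and the event $\{|\mathcal{C}| \leq \epsilon\}$ becomes $\{|B - A| \leq \epsilon\theta\}$.

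Next I would identify the laws of $A$ and $B$. Each coordinate of the noise is a mean-zero $\mathrm{Lap}(\theta)$ variable, so by Lemma~\ref{lemma:exp} its absolute value is $\mathrm{Exp}(\theta)$, and by Lemma~\ref{lemma:gamma} a sum of $k$ such i.i.d.\ magnitudes is $\mathrm{Gamma}(k,\theta)$. Treating $A$ and $B$ as two such $\mathrm{Gamma}(k,\theta)$ variables $X_1, X_2$, Lemma~\ref{lemma:besselK} gives the density of $T = |X_1 - X_2|/\theta$ as $P_T(t;k,\theta)$; and since $T = \frac{1}{\theta}|B-A|$ this $T$ is precisely the absolute privacy loss, so $\{|\mathcal{C}|\le\epsilon\}=\{T\le\epsilon\}$.

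The sensitivity then enters as a truncation. By Lemma~\ref{lemma:bound} (which is Definition~\ref{def:sensitivity} applied coordinatewise), $|B - A| \leq \Delta_f$ with probability one, so $T \leq \Delta_f/\theta = \epsilon_0$ almost surely. Consequently the relevant law is not $P_T$ itself but its restriction to $[0,\epsilon_0]$, i.e.\ the truncated density of Lemma~\ref{lemma:trunc_bessel} with $M = \Delta_f$. Since $\epsilon \leq \epsilon_0$, conditioning on $\{T \leq \epsilon_0\}$ yields
\[
\gamma_1 = \mathbb{P}[|\mathcal{C}| \leq \epsilon] = \mathbb{P}[T \leq \epsilon \mid T \leq \epsilon_0] = \frac{\mathbb{P}(T \leq \epsilon)}{\mathbb{P}(T \leq \epsilon_0)},
\]
which is the claimed formula.

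I expect the delicate step to be the passage in the second paragraph from the two dependent magnitudes $A$ and $B$ to a pair of \emph{independent} $\mathrm{Gamma}(k,\theta)$ variables: $A$ and $B$ are deterministic functions of the same noise vector, so their joint law is not literally that of independent gammas, and one must argue that the distribution of the difference $B-A$ coincides with (or is suitably dominated by) that of $X_1 - X_2$ before Lemma~\ref{lemma:besselK} can be applied. Once the difference is shown to follow the symmetric Bessel-$K$ law, the truncation by the sensitivity bound and the final conditioning are routine.
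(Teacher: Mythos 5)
Your proposal follows essentially the same route as the paper's proof: expand the Laplace densities so the absolute privacy loss becomes $\frac{1}{\theta}\lvert B-A\rvert$ with $A=\lVert z-f(x)\rVert_1$, $B=\lVert z-f(y)\rVert_1$, invoke Lemma~\ref{lemma:exp} and Lemma~\ref{lemma:gamma} to identify these $\ell_1$ noise magnitudes as $\mathrm{Gamma}(k,\Delta_f/\epsilon_0)$ variables, apply Lemma~\ref{lemma:besselK} to their difference, and use Lemma~\ref{lemma:bound} together with Lemma~\ref{lemma:trunc_bessel} to truncate at $\epsilon_0$ and obtain the ratio $\mathbb{P}(T\leq\epsilon)/\mathbb{P}(T\leq\epsilon_0)$. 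The ``delicate step'' you flag --- that $A$ and $B$ are dependent functions of the same noise realisation while Lemma~\ref{lemma:besselK} is stated for independent gammas --- is a genuine issue, but the paper's own proof invokes the lemma at exactly this point without further justification, so your attempt reproduces the paper's argument, that unaddressed step included.
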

\begin{proof}
Let, $x \in \mathcal{D}$ and $y \in \mathcal{D}$ be two datasets such that $x \sim y$. Let $f: \mathcal{D} \rightarrow \mathbb{R}^k$ be some numeric query. Let $\mathbb{P}_x(z)$ and $\mathbb{P}_y(z)$ denote the probabilities of getting the output $z$ for Laplace mechanisms $\mathcal{L}^{\Delta_f}_{\epsilon_0}(x)$ and $\mathcal{L}^{\Delta_f}_{\epsilon_0}(y)$ respectively. For any point $z \in \mathbb{R}^k$ and $\epsilon \neq 0$,
\begin{align}~\label{eqn:3}
\frac{\mathbb{P}_x(z)}{\mathbb{P}_y(z)}  &= \prod_{i=1}^k \frac {\exp{\left( \frac{-\epsilon_0 |f(x_i) - z_i|}{\Delta_f} \right)}} {\exp{\left( \frac{-\epsilon_0 |f(y_i) - z_i|}{\Delta_f} \right)}} \nonumber \\
&= \prod_{i=1}^k \exp{\left( \frac{\epsilon_0 (|f(y_i) - z_i| - |f(x_i) - z_i|)}{\Delta_f} \right)} \nonumber \\
&= \exp{\left(\epsilon \left[ \frac{\epsilon_0 \sum_{i=1}^k(|f(y_i) - z_i| - |f(x_i) - z_i|)}{\epsilon \Delta_f} \right] \right)} .
\end{align}
By Definition~\ref{def:lap_mech},
\begin{equation}
\label{eqn:4}
(f(x) - z), (f(y) - z) \sim \textrm{Lap}(\Delta_f / \epsilon_0).
\end{equation}
Application of Lemma~\ref{lemma:exp} and Lemma~\ref{lemma:gamma} yields,
\begin{equation}
\sum_{i=1}^k \left(|f(x_i) - z_i|\right) \sim \textrm{Gamma}(k, \Delta_f / \epsilon_0).
\label{eqn:5}
\end{equation}
Using Equations~\ref{eqn:4},~\ref{eqn:5}, and Lemma~\ref{lemma:bound}, ~\ref{lemma:trunc_bessel}, we get
\begin{multline}
    \left(\frac{\epsilon_0}{\Delta_f} \sum_{i=1}^k \left | \left(|f(y_i) - z| - |f(x_i) - z|\right)\right |  \right) \\
    \sim P_{T'}(t; k, \Delta_f / \epsilon_0, \Delta_f).
\label{eqn:for_corollary2}
\end{multline}
since, $\sum_{i=1}^k \left | \left(|f(y_i) - z| - |f(x_i) - z|\right)\right | \leq \Delta_f$.
Therefore,
\begin{equation}
\mathbb{P}\left(\left[\frac{\epsilon_0}{\Delta_f} \sum_{i=1}^k\ \left | \left(|f(y_i) - z| - |f(x_i) - z|\right)\right| \right] \leq \epsilon \right) = \frac{\mathbb{P}(T \leq \epsilon)}{\mathbb{P}(T \leq \epsilon_0)},
\label{eqn:6}
\end{equation}
where $T$ follows the distribution in Lemma~\ref{lemma:besselK}. We use Mathematica~\citep{Mathematica} to analytically compute,
\begin{multline*}
     \mathbb{P}(T \leq x) \propto \left(_1F_2(\frac{1}{2}; \frac{3}{2}-k,\frac{3}{2};\frac{x^2}{4}) \sqrt{\pi} 4^k x]\right) - \\ \left(2 _1F_2(k; \frac{1}{2}+k,k+1;\frac{x^2}{4})x^{2k} \Gamma(k)\right)
\end{multline*}
where $_1F_2$ is the regularised generalised hypergeometric function as defined in~\citep{askey2010generalized}.
From Equation~\ref{eqn:3} and~\ref{eqn:6},
\[
\mathbb{P}\left[\ln{\left |\frac{\mathbb{P}(\mathcal{L}_{\epsilon_0}^{\Delta_f}(x) \in S)}{\mathbb{P}(\mathcal{L}_{\epsilon_0}^{\Delta_f}(y) \in S)}\right|} \leq \epsilon \right] = \frac{\mathbb{P}(T \leq \epsilon)}{\mathbb{P}(T \leq \epsilon_0)}.
\]
\end{proof}

This completes the proof of Theorem~\ref{thm:par_1}.

\begin{corollary}
Laplace Mechanism $\mathcal{L}^{\Delta_f}_{\epsilon_0}$ with $f: \mathcal{D} \rightarrow \mathbb{R}^k$ is $(\epsilon, \delta)$-probabilistically differentially private where
\[
\delta = 
	\begin{cases}
		1 - \frac{\mathbb{P}(T \leq \epsilon)}{\mathbb{P}(T \leq \epsilon_0)} &\quad \epsilon\leq\epsilon_0 \\
        0 &\quad \epsilon > \epsilon_0
	\end{cases}
\]
and $T$ follows $\mathrm{BesselK}(k, \Delta_f / \epsilon_0)$.
\label{lemma:ed_privacy}
\end{corollary}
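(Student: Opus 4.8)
The plan is to read this corollary as nothing more than a restatement of Lemma~\ref{lemma:main} in the vocabulary of probabilistic differential privacy, so the proof should be short and require no fresh computation. First I would recall the definition of $(\epsilon,\delta)$-probabilistic differential privacy (in the sense of Machanavajjhala et al.): a mechanism is $(\epsilon,\delta)$-probabilistically differentially private if, for every pair of neighbouring datasets, the set of outputs on which the privacy loss exceeds $\epsilon$ carries probability at most $\delta$ under the explicit noise randomness. The whole task is therefore to compute, for the Laplace mechanism $\mathcal{L}^{\Delta_f}_{\epsilon_0}$, the probability mass of this ``bad'' set and to show that it equals the stated $\delta$.

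For the regime $\epsilon \le \epsilon_0$ I would invoke Lemma~\ref{lemma:main} directly. That lemma gives
\[
\mathbb{P}\left[\ln\left|\frac{\mathbb{P}(\mathcal{L}^{\Delta_f}_{\epsilon_0}(x) \in Z)}{\mathbb{P}(\mathcal{L}^{\Delta_f}_{\epsilon_0}(y) \in Z)}\right| \le \epsilon\right] = \frac{\mathbb{P}(T \le \epsilon)}{\mathbb{P}(T \le \epsilon_0)},
\]
with $T$ distributed as in Lemma~\ref{lemma:besselK} with shape $k$ and scale $\Delta_f/\epsilon_0$, which is exactly the $\mathrm{BesselK}(k,\Delta_f/\epsilon_0)$ law named in the statement. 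The event defining $\delta$ is the complement of the event appearing in Lemma~\ref{lemma:main}, so passing to the complementary probability immediately yields $\delta = 1 - \mathbb{P}(T \le \epsilon)/\mathbb{P}(T \le \epsilon_0)$.

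For the regime $\epsilon > \epsilon_0$ I would argue that the bad set is empty. By Theorem~\ref{thm:dp} the mechanism is $\epsilon_0$-differentially private, and by the lemma bounding the privacy loss random variable we have $\mathbb{P}[|\mathcal{C}| \le \epsilon_0] = 1$; hence the privacy loss is almost surely at most $\epsilon_0 < \epsilon$, so no output can make it exceed $\epsilon$ and $\delta = 0$. Combining the two regimes reproduces the case analysis in the statement.

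The main obstacle, such as it is, is bookkeeping rather than mathematics: I must make sure that the measure over which Lemma~\ref{lemma:main} computes its probability (the output law of the mechanism on a fixed neighbouring pair, through the symmetric absolute log-ratio) is the same one the disclosure-set formulation of probabilistic differential privacy uses, and that the one-sided ``$>\epsilon$'' bad event of that definition is indeed the exact complement of the ``$\le\epsilon$'' event of Lemma~\ref{lemma:main}. I would also verify that the named distribution $\mathrm{BesselK}(k,\Delta_f/\epsilon_0)$ coincides with $P_T(\,\cdot\,;k,\Delta_f/\epsilon_0)$ from Lemma~\ref{lemma:besselK}, so that the $\gamma_1$ of Theorem~\ref{thm:par_1} and the $1-\delta$ here refer to one and the same quantity.
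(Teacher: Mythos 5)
Your proposal is correct and matches the paper's treatment: the paper offers no separate argument for this corollary, presenting it as an immediate consequence of Lemma~\ref{lemma:main} (equivalently Theorem~\ref{thm:par_1}), which is exactly how you derive the $\epsilon\leq\epsilon_0$ case by complementation, and the $\epsilon>\epsilon_0$ case follows from $\epsilon_0$-differential privacy bounding the privacy loss almost surely, just as you argue. Your closing bookkeeping checks (that the bad event is the exact complement and that $\mathrm{BesselK}(k,\Delta_f/\epsilon_0)$ is the law $P_T(\,\cdot\,;k,\Delta_f/\epsilon_0)$ of Lemma~\ref{lemma:besselK}) are the only substantive content needed, and they hold.
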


\section{Proof of Theorem~\ref{thm:par_2} (Section~\ref{sec:case_two})}
\label{app:case_2}

\begin{proof}
Let, $x$ and $y$ be any two neighbouring datasets sampled from the data generating distribution $\mathcal{G}$. Let, $\Delta_{S_f}$ be the sampled sensitivity for query $f: \mathcal{D} \rightarrow \mathbb{R}^k$. Let, $\mathbb{P}_x(z)$ and $\mathbb{P}_y(z)$ denote the probabilities of getting the output $z$ for Laplace mechanisms $\mathcal{L}_{\epsilon}^{\Delta_{S_f}}(x)$ and $\mathcal{L}_{\epsilon}^{\Delta_{S_f}}(y)$ respectively. For any point $z \in \mathbb{R}^k$ and $\epsilon \neq 0$,
\begin{align}
\frac{\mathbb{P}_x(z)}{\mathbb{P}_y(z)} &= \prod_{i=1}^k \frac {\exp{\left( \frac{-\epsilon |f(x_i) - z_i|}{\Delta_{S_f}} \right)}} {\exp{\left( \frac{-\epsilon |f(y_i) - z_i|}{\Delta_{S_f}} \right)}} \nonumber \\
&= \exp{\left( \frac{\epsilon \sum_{i=1}^k(|f(y_i) - z_i| - |f(x_i) - z_i|)}{\Delta_{S_f}} \right)} \nonumber \\
&\leq \exp{\left( \frac{\epsilon \sum_{i=1}^k|f(y_i) - f(x_i)|}{\Delta_{S_f}} \right)} \nonumber \\ 
&= \exp{\left( \frac{\epsilon \lVert f(y) - f(x) \rVert_1}{\Delta_{S_f}} \right)} \label{eqn:7}
\end{align}
We used triangle inequality in the penultimate step. 

Using the trick in the work of~\citep{rubinstein2017pain}, we define following events. Let, $B^{\Delta_{S_f}}$ denotes the set of pairs neighbouring dataset sampled from $\mathcal{G}$ for which the sensitivity random variable is upper bounded by $\Delta_{S_f}$. Let, $C_\rho^{\Delta_{S_f}}$ denotes the set of sensitivity random variable values for which $F_n$ deviates from the unknown cumulative distribution of $S$, $F$, at most by the accuracy value $\rho$. These events are defined in Equation~\ref{eqn:events}.
\begin{align}
B^{\Delta_{S_f}} &\triangleq \{x, y \sim \mathcal{G} ~\textrm{such that}~ \lVert f(y) - f(x) \rVert_1 \leq \Delta_{S_f}\} \nonumber \\
C_\rho^{\Delta_{S_f}} &\triangleq \left\lbrace \sup_{\Delta} |F_S^n(\Delta) - F_S(\Delta)| \leq \rho \right\rbrace \label{eqn:events}
\end{align}

\begin{align}
    \mathbb{P}(B^{\Delta_{S_f}}) &= \mathbb{P}(B^{\Delta_{S_f}} | C_\rho^{\Delta_{S_f}}) \mathbb{P}(C_\rho^{\Delta_{S_f}})\\
    &+ \mathbb{P}(B^{\Delta_{S_f}} | \overline{C\rho^{\Delta_{S_f}}}) \mathbb{P}(\overline{C_\rho^{\Delta_{S_f}}}) \nonumber \\
&\geq \mathbb{P}(B^{\Delta_{S_f}} | C_\rho^{\Delta_{S_f}}) \mathbb{P}(C_\rho^{\Delta_{S_f}}) \nonumber \\
&= F_n(\Delta_{S_f}) \mathbb{P}(C_\rho^{\Delta_{S_f}}) \nonumber \\
&\geq \gamma_2 \cdot (1 - 2e^{-2\rho^2n}) \label{eqn:8}
\end{align}
In the last step, we use the definition of the sampled sensitivity to get the value of the first term. The last term is obtained using DKW-inequality, as defined in ~\citep{massart1990tight}, where the $n$ denotes the number of samples used to build empirical distribution of the sensitivity, $F_n$.

From Equation~\ref{eqn:7}, we understand that if $\lVert f(y) - f(x) \rVert_1$ is less than or equals to the sampled sensitivity then the Laplace mechanism $\mathcal{L}_\epsilon^{\Delta_{S_f}}$ satisfies $\epsilon$-differential privacy. Equation~\ref{eqn:8} provides the lower bound on the probability of the event $\lVert f(y) - f(x) \rVert_1 \leq \Delta_{S_f}$. Thus, combining Equation~\ref{eqn:7} and Equation~\ref{eqn:8} completes the proof.
\end{proof}

\section{Proof of Theorem~\ref{thm:par_3} (Section~\ref{sec:case_three})}
\label{app:case_3}
Proof of Theorem~\ref{thm:par_3} builds upon the ideas from the proofs for the rest of the two cases. In addition to the events defined in Equation~\ref{eqn:events}, we define an additional event $A_{\epsilon_0}^{\Delta_{S_f}}$, defined in Equation~\ref{eqn:event_a}, as a set of outputs of Laplace mechanism $\mathcal{L}_{\epsilon_0}^{\Delta_{S_f}}$ that satisfy the constraint of $\epsilon$-differential privacy for a specified privacy at risk level $\epsilon$.

\begin{equation}
A_{\epsilon_0}^{\Delta_{S_f}} \triangleq \left\lbrace z \sim \mathcal{L}_{\epsilon_0}^{\Delta_{S_f}} ~:~ \ln{\left| \frac{\mathcal{L}_{\epsilon_0}^{\Delta_{S_f}}(x)}{\mathcal{L}_{\epsilon_0}^{\Delta_{S_f}}(y)} \right| } \leq \epsilon, x, y \sim  \mathcal{G}\right\rbrace
\label{eqn:event_a}
\end{equation}

\begin{corollary}
\[
\mathbb{P}(A_{\epsilon_0}^{\Delta_{S_f}} |  B^{\Delta_{S_f}}) = \frac{\mathbb{P}(T \leq \epsilon)}{\mathbb{P}(T \leq \eta \epsilon_0)}
\]
where $T$ follows the distribution $P_T(t; \Delta_{S_f} / \epsilon_0)$ in Lemma~\ref{lemma:besselK} and $\eta = \frac{\Delta_f}{\Delta_{S_f}}$.
\label{cor:joint}
\end{corollary}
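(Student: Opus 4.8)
The plan is to mirror the derivation of Lemma~\ref{lemma:main} from Appendix~\ref{app:case_1}, but to keep two sensitivities carefully separated: the \emph{sampled} sensitivity $\Delta_{S_f}$ wherever it enters the \emph{calibration} of the mechanism, and the \emph{true} sensitivity $\Delta_f$ wherever the physical constraint on the induced noise is invoked. First I would fix neighbouring datasets $x,y \sim \mathcal{G}$ and an output $z$, and expand the absolute privacy loss of $\mathcal{L}_{\epsilon_0}^{\Delta_{S_f}}$ exactly as in Equation~\ref{eqn:3}, obtaining that it equals $\frac{\epsilon_0}{\Delta_{S_f}}\bigl|\sum_{i=1}^k(|f(y_i)-z_i|-|f(x_i)-z_i|)\bigr|$. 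Since the noise is calibrated with scale $\Delta_{S_f}/\epsilon_0$, Definition~\ref{def:lap_mech} gives $(f(x)-z),(f(y)-z)\sim\mathrm{Lap}(\Delta_{S_f}/\epsilon_0)$, so Lemmas~\ref{lemma:exp} and~\ref{lemma:gamma} yield that $X_1 \triangleq \sum_i |f(x_i)-z_i|$ and $X_2 \triangleq \sum_i |f(y_i)-z_i|$ are $\mathrm{Gamma}(k,\Delta_{S_f}/\epsilon_0)$.

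Writing the absolute loss as the scalar $T = \frac{\epsilon_0}{\Delta_{S_f}}|X_1-X_2| = |X_1-X_2|/\theta$ with $\theta = \Delta_{S_f}/\epsilon_0$, Lemma~\ref{lemma:besselK} identifies its untruncated density as $P_T(t;k,\Delta_{S_f}/\epsilon_0)$, matching the statement. The crucial step is the truncation. I would invoke Lemma~\ref{lemma:bound} with the \emph{true} sensitivity, so that $|X_1-X_2|\le\lVert f(x)-f(y)\rVert_1\le\Delta_f$ holds deterministically for every realisable pair; in $T$-units this reads $T\le\Delta_f/\theta = \frac{\Delta_f}{\Delta_{S_f}}\epsilon_0 = \eta\epsilon_0$. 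Hence, by Lemma~\ref{lemma:trunc_bessel} with $M=\Delta_f$, the realised loss follows the Bessel density truncated to $[0,\eta\epsilon_0]$ and normalised by $\mathbb{P}(T\le\eta\epsilon_0)$. Since $A_{\epsilon_0}^{\Delta_{S_f}}$ is precisely the event $\{T\le\epsilon\}$, the conditional probability given $B^{\Delta_{S_f}}$ collapses to $\mathbb{P}(T\le\epsilon\mid T\le\eta\epsilon_0)=\frac{\mathbb{P}(T\le\epsilon)}{\mathbb{P}(T\le\eta\epsilon_0)}$, which is the claim.

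The main obstacle is pinning down the correct normaliser, and in particular the exact role of conditioning on $B^{\Delta_{S_f}}$. The scale of $T$ is set by the calibration sensitivity $\Delta_{S_f}$ (through $\theta$), whereas the admissible range of $T$ is set by the true sensitivity $\Delta_f$ through Lemma~\ref{lemma:bound}; the mismatch between them is precisely the coupling coefficient $\eta=\Delta_f/\Delta_{S_f}$, appearing only in the denominator. The delicate point is that a naive, per-pair use of the bound under $B^{\Delta_{S_f}}$ would give the tighter support $T\le\epsilon_0$ and hence the larger factor $\frac{\mathbb{P}(T\le\epsilon)}{\mathbb{P}(T\le\epsilon_0)}$; to arrive at the stated expression I would instead use the \emph{global} deterministic bound $|X_1-X_2|\le\Delta_f$, valid for all neighbouring pairs, which yields the conservative normaliser $\mathbb{P}(T\le\eta\epsilon_0)\ge\mathbb{P}(T\le\epsilon_0)$. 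I would argue that this is the intended characterisation of the explicit-randomness factor, and that it composes correctly with the data-level probability $\mathbb{P}(B^{\Delta_{S_f}})\ge\gamma_2(1-2e^{-2\rho^2 n})$ from Appendix~\ref{app:case_2} to produce the empirical-risk lower bound of Theorem~\ref{thm:par_3}; establishing that the identity holds with equality (rather than as a lower bound) under the paper's modelling of $X_1,X_2$ is the step that most warrants careful justification.
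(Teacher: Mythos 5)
Your proposal follows essentially the same route as the paper's own sketch: expand the privacy loss for the mechanism calibrated with $\Delta_{S_f}$, obtain the Gamma$(k,\Delta_{S_f}/\epsilon_0)$ sums via Lemmas~\ref{lemma:exp} and~\ref{lemma:gamma}, and truncate via Lemma~\ref{lemma:trunc_bessel} using the deterministic bound $|X_1-X_2|\le\Delta_f$ from Lemma~\ref{lemma:bound}, which in $T$-units is exactly $\eta\epsilon_0$. The tension you flag between the conditional statement $\mathbb{P}(A_{\epsilon_0}^{\Delta_{S_f}}\mid B^{\Delta_{S_f}})$ and the global truncation at $\Delta_f$ is real---the paper's sketch likewise derives the unconditional $\mathbb{P}(A_{\epsilon_0}^{\Delta_{S_f}})$ and never actually invokes the conditioning on $B^{\Delta_{S_f}}$---so your reading of the intended argument is correct.
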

\begin{proof}
We provide the sketch of the proof. Proof follows from the proof of Lemma~\ref{lemma:main}. For a Laplace mechanism calibrated with the sampled sensitivity $\Delta_{S_f}$ and privacy level $\epsilon_0$, Equation~\ref{eqn:for_corollary2} translates to,
\begin{multline*}
    \left(\frac{\epsilon_0}{\Delta_{S_f}} \sum_{i=1}^k \left | \left(|f(y_i) - z| - |f(x_i) - z|\right)\right |  \right)
\sim \\ P_{T'}(t; k, \Delta_{S_f}/\epsilon_0, \Delta_{S_f}).
\end{multline*}

since, $\sum_{i=1}^k \left | \left(|f(y_i) - z| - |f(x_i) - z|\right)\right | \leq \Delta_f$.
Using Lemma~\ref{lemma:trunc_bessel} and Equation~\ref{eqn:6},
\[
\mathbb{P}(A_{\epsilon_0}^{\Delta_{S_f}}) = \frac{\mathbb{P}(T \leq \epsilon)}{\mathbb{P}(T \leq \eta \epsilon_0)}
\]
where $T$ follows the distribution $P_T(t; \Delta_{S_f} / \epsilon_0)$ and $\eta = \frac{\Delta_f}{\Delta_{S_f}}$.
\end{proof}

For this case, we do not assume the knowledge of the sensitivity of the query. Using the empirical estimation presented in Section~\ref{sec:case_two}, if we choose the sampled sensitivity for privacy at risk $\gamma_2 =1$, we obtain an approximation for $\eta$.
\begin{lemma}
For a given value of accuracy parameter $\rho$,
\[
\frac{\Delta_f}{\Delta_{S_f}^*} = 1 + \mathcal{O}\left(\frac{\rho}{\Delta_{S_f}^*}\right)
\]
where $\Delta_{S_f}^* = F_n^{-1}(1)$. $\mathcal{O}\left(\frac{\rho}{\Delta_{S_f}^*}\right)$ denotes order of $\frac{\rho}{\Delta_{S_f}^*}$, i.e., $\mathcal{O}\left(\frac{\rho}{\Delta_{S_f}^*}\right) = k \frac{\rho}{\Delta_{S_f}^*}$ for some $k \geq 1$.
\end{lemma}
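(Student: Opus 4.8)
The plan is to translate the \emph{vertical} closeness between the empirical CDF $F_n$ and the true CDF $F$ of the sensitivity, guaranteed by the DKW inequality, into a \emph{horizontal} (quantile) bound on the gap between $\Delta_{S_f}^* = F_n^{-1}(1)$ and $\Delta_f = F^{-1}(1)$, and then to divide through by $\Delta_{S_f}^*$.

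First I would invoke the event $C_\rho^{\Delta_{S_f}}$ from Equation~\ref{eqn:events}: on this event the DKW inequality~\citep{massart1990tight} gives $\sup_\Delta |F_n(\Delta) - F(\Delta)| \leq \rho$, an event of probability at least $1 - 2e^{-2\rho^2 n}$. By definition $\Delta_{S_f}^* = F_n^{-1}(1)$ is the largest sensitivity value observed among the $n$ sampled neighbouring pairs, so $F_n(\Delta_{S_f}^*) = 1$. Evaluating the DKW bound at the point $\Delta_{S_f}^*$ then yields $F(\Delta_{S_f}^*) \geq F_n(\Delta_{S_f}^*) - \rho = 1 - \rho$. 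Moreover, since an empirical maximum can never exceed the true supremum of the support, $\Delta_{S_f}^* \leq \Delta_f$, which already shows the ratio is at least $1$, consistent with the stated $k \geq 1$.

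Next I would convert the inequality $F(\Delta_{S_f}^*) \geq 1 - \rho$ into a bound on $\Delta_f - \Delta_{S_f}^*$. Because $F$ is nondecreasing and $F(\Delta_f) = 1$, applying $F^{-1}$ gives $\Delta_{S_f}^* \geq F^{-1}(1 - \rho)$, hence $\Delta_f - \Delta_{S_f}^* \leq F^{-1}(1) - F^{-1}(1-\rho)$. A first-order (mean value) expansion of the quantile function $F^{-1}$ about $1$ then gives $F^{-1}(1) - F^{-1}(1-\rho) = \rho\,(F^{-1})'(\xi)$ for some $\xi \in (1-\rho, 1)$, and since $(F^{-1})'(\xi) = 1/p\bigl(F^{-1}(\xi)\bigr)$ with $p$ the density of the sensitivity, this difference is of order $\rho$. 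Dividing by $\Delta_{S_f}^*$ yields
\[
\frac{\Delta_f}{\Delta_{S_f}^*} = 1 + \frac{\Delta_f - \Delta_{S_f}^*}{\Delta_{S_f}^*} = 1 + \mathcal{O}\!\left(\frac{\rho}{\Delta_{S_f}^*}\right),
\]
as claimed, with the hidden constant $k \geq 1$ absorbing the (reciprocal) density near the top of the support.

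The hard part is exactly this vertical-to-horizontal conversion: the DKW inequality controls only the discrepancy $|F_n - F|$ in probability mass, and a small vertical gap need not imply a small quantile gap unless $F$ is well behaved near its supremum. The argument therefore implicitly requires $F^{-1}$ to be locally Lipschitz at the endpoint $1$ — equivalently, the sensitivity density $p$ to be bounded away from zero just below $\Delta_f$ — so that $F^{-1}(1) - F^{-1}(1-\rho)$ is genuinely $\mathcal{O}(\rho)$ rather than something larger. I would state this regularity assumption explicitly, since it is precisely what pins down the constant $k$ in the $\mathcal{O}(\rho/\Delta_{S_f}^*)$ term and legitimises the first-order expansion above.
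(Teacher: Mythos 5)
Your argument is sound on its own terms, but it runs in the opposite direction from the paper's proof, and the two do not establish the same inequality. The paper evaluates the one-sided DKW bound at $\Delta = F^{-1}(1)$ to get $F_n(F^{-1}(1)) \leq 1+\rho$ (its Equation~\ref{eqn:13}), then invokes a claimed $1$-Lipschitz property of the CDF to conclude $\rho \leq \Delta_f - \Delta_{S_f}^*$, i.e.\ a \emph{lower} bound on the quantile gap --- which is exactly what the clause ``$k \geq 1$'' in the statement encodes. You instead produce an \emph{upper} bound: from $F(\Delta_{S_f}^*) \geq 1-\rho$ you deduce $\Delta_f - \Delta_{S_f}^* \leq F^{-1}(1) - F^{-1}(1-\rho) = \mathcal{O}(\rho)$ via a mean-value expansion of the quantile function. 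Your route is the standard vertical-to-horizontal conversion, and your insistence on the regularity hypothesis (density bounded away from zero just below $\Delta_f$, equivalently $F^{-1}$ locally Lipschitz at $1$) is exactly right: without it the conversion fails, and the hidden constant is the reciprocal density, which need not satisfy $k \geq 1$. So your proof legitimises the natural reading of the lemma (the estimation error is \emph{at most} of order $\rho$) but does not deliver the literal ``$k \geq 1$'' claim.

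It is worth saying that the paper's own derivation of that lower bound is shaky, and your approach quietly repairs the underlying issue. The empirical CDF $F_n$ is a step function and is not $1$-Lipschitz, so the paper's display $|F_n(F_n^{-1}(1)) - F_n(F^{-1}(1))| \leq |F_n^{-1}(1) - F^{-1}(1)|$ is not justified; moreover, since every sampled sensitivity lies below $\Delta_f$ we have $F_n(\Delta_f) = 1 = F_n(F_n^{-1}(1))$, so the left-hand side of that display is zero and cannot produce $\rho \leq \Delta_f - \Delta_{S_f}^*$. That inequality is also dimensionally suspect: $\rho$ is a probability-scale discrepancy while $\Delta_f - \Delta_{S_f}^*$ carries the units of the sensitivity, so a density (or inverse-density) factor is unavoidable --- precisely the factor your mean-value step makes explicit. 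If you want to match the paper's conclusion you would need an additional assumption in the other direction (density bounded \emph{above} near $\Delta_f$); as written, your proof is the more defensible one, provided you state the regularity assumption as you propose.
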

\begin{proof}
For a given value of accuracy parameter $\rho$ and any $\Delta > 0$,
\[
F_n(\Delta) - F(\Delta) \leq \rho
\]
Since above inequality is true for any value of $\Delta$, let $\Delta = F^{-1}(1)$. Therefore,
\begin{align}
F_n(F^{-1}(1)) - F(F^{-1}(1)) &\leq \rho \nonumber \\
F_n(F^{-1}(1)) &\leq 1 + \rho \label{eqn:13}
\end{align}
Since a cumulative distribution function is $1$-Lipschitz [\citep{papoulis2002probability}], 
\begin{align*}
| F_n(F_n^{-1}(1)) - F_n(F^{-1}(1)) | &\leq |F_n^{-1}(1) - F^{-1}(1)|  \\
| F_n(F_n^{-1}(1)) - F_n(F^{-1}(1)) | &\leq |\Delta_{S_f}^* - \Delta_f|  \\
\rho &\leq \Delta_f - \Delta_{S_f}^* \\
1 + \frac{\rho}{\Delta_{S_f}^*} &\leq \frac{\Delta_f}{\Delta_{S_f}^*}
\end{align*}
where we used result from Equation~\ref{eqn:13} in step 3. Introducing $\mathcal{O}\left(\frac{\rho}{\Delta_{S_f}^*}\right)$ completes the proof.
\end{proof}
\vspace*{-1em}
\begin{lemma}
For Laplace Mechanism $\mathcal{L}_{\epsilon_0}^{\Delta_{S_f}}$ with sampled sensitivity $\Delta_{S_f}$ of a query $f: \mathcal{D} \rightarrow \mathbb{R}^k$ and for any $Z \subseteq Range(\mathcal{L}_\epsilon^{\Delta_{S_f}})$,
\[
\mathbb{P}\left[\ln{\left |\frac{\mathbb{P}(\mathcal{L}_{\epsilon_0}(x) \in Z)}{\mathbb{P}(\mathcal{L}_{\epsilon_0}(y) \in Z)}\right|} \leq \epsilon \right] \geq \frac{\mathbb{P}(T \leq \epsilon)}{\mathbb{P}(T \leq \eta \epsilon_0)} \gamma_2 (1 - 2e^{-2\rho^2n}) 
\]
where $n$ is the number of samples used to find sampled sensitivity, $\rho \in [0, 1]$ is a accuracy parameter and $\eta = \frac{\Delta_f}{\Delta_{S_f}}$. The outer probability is calculated with respect to support of the data-generation distribution $\mathcal{G}$.
\label{lemma:main_2}
\end{lemma}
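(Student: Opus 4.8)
The plan is to assemble the result from ingredients already in hand by conditioning on the sensitivity event and applying the law of total probability. First I would observe that the left-hand side of the claimed inequality is exactly $\mathbb{P}(A_{\epsilon_0}^{\Delta_{S_f}})$, the probability of the event defined in Equation~\ref{eqn:event_a}, where the outer probability ranges over the joint randomness of the neighbouring pair $x, y \sim \mathcal{G}$ and the Laplace noise of $\mathcal{L}_{\epsilon_0}^{\Delta_{S_f}}$. The whole lemma is then a matter of bounding this probability below by the product of the conditional probability computed in Corollary~\ref{cor:joint} and the marginal probability bounded in the proof of Theorem~\ref{thm:par_2}.

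The central step is to condition on the event $B^{\Delta_{S_f}}$ that the realised sensitivity $\lVert f(x) - f(y)\rVert_1$ does not exceed the sampled sensitivity $\Delta_{S_f}$. By the law of total probability,
\[
\mathbb{P}(A_{\epsilon_0}^{\Delta_{S_f}}) = \mathbb{P}(A_{\epsilon_0}^{\Delta_{S_f}} \mid B^{\Delta_{S_f}})\,\mathbb{P}(B^{\Delta_{S_f}}) + \mathbb{P}(A_{\epsilon_0}^{\Delta_{S_f}} \mid \overline{B^{\Delta_{S_f}}})\,\mathbb{P}(\overline{B^{\Delta_{S_f}}}),
\]
and since the second summand is non-negative I would discard it to obtain $\mathbb{P}(A_{\epsilon_0}^{\Delta_{S_f}}) \geq \mathbb{P}(A_{\epsilon_0}^{\Delta_{S_f}} \mid B^{\Delta_{S_f}})\,\mathbb{P}(B^{\Delta_{S_f}})$.

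It then remains to substitute the two factors. For the conditional probability I would invoke Corollary~\ref{cor:joint}, which gives $\mathbb{P}(A_{\epsilon_0}^{\Delta_{S_f}} \mid B^{\Delta_{S_f}}) = \mathbb{P}(T \leq \epsilon)/\mathbb{P}(T \leq \eta\epsilon_0)$ with $\eta = \Delta_f / \Delta_{S_f}$; conditioning on $B^{\Delta_{S_f}}$ is precisely what enforces the truncation of the Bessel-type variable $T$ at $\eta\epsilon_0$. For the marginal probability of $B^{\Delta_{S_f}}$ I would reuse the chain of inequalities culminating in Equation~\ref{eqn:8}, which conditions further on the DKW event $C_\rho^{\Delta_{S_f}}$ and yields $\mathbb{P}(B^{\Delta_{S_f}}) \geq \gamma_2(1 - 2e^{-2\rho^2 n})$. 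Multiplying the two bounds gives the stated inequality, and identifying the product of the first two factors as $\gamma_3$ recovers Theorem~\ref{thm:par_3}.

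I expect the main obstacle to be conceptual rather than computational: one must verify that $A_{\epsilon_0}^{\Delta_{S_f}}$, $B^{\Delta_{S_f}}$ and $C_\rho^{\Delta_{S_f}}$ all live in a single coherent probability space mixing the implicit randomness of $\mathcal{G}$, the explicit Laplace noise, and the empirical estimate $F_n$, so that the two nested applications of the law of total probability are legitimate and the factorisation through Corollary~\ref{cor:joint} is valid. Once the probability space is fixed, the remaining bookkeeping — checking that discarding the non-negative summands weakens the bound in the correct direction — is routine.
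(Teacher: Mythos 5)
Your proposal is correct and follows essentially the same route as the paper: the paper's proof is precisely the chain $\mathbb{P}(A_{\epsilon_0}^{\Delta_{S_f}}) \geq \mathbb{P}(A_{\epsilon_0}^{\Delta_{S_f}} \mid B^{\Delta_{S_f}})\,\mathbb{P}(B^{\Delta_{S_f}} \mid C_\rho^{\Delta_{S_f}})\,\mathbb{P}(C_\rho^{\Delta_{S_f}})$, with the first factor supplied by Corollary~\ref{cor:joint} and the remaining two by the DKW-based bound of Equation~\ref{eqn:8}, which is exactly your decomposition written as a single inequality rather than two nested ones. Your closing remark about fixing a single probability space over $\mathcal{G}$, the Laplace noise, and the empirical estimate $F_n$ is a point the paper glosses over rather than resolves.
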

\begin{proof}
The proof follows from the proof of Lemma~\ref{lemma:main} and Lemma~\ref{lemma:main_2}.
Consider,
\begin{align}
\mathbb{P}(A_{\epsilon_0}^{\Delta_{S_f}}) &\geq \mathbb{P}(A_{\epsilon_0}^{\Delta_{S_f}} | B^{\Delta_{S_f}}) \mathbb{P}(B^{\Delta_{S_f}} | C_\rho^{\Delta_{S_f}}) \mathbb{P}(C_\rho^{\Delta_{S_f}}) \nonumber \\
&\geq \frac{\mathbb{P}(T \leq \epsilon)}{\mathbb{P}(T \leq \eta\epsilon_0)} \cdot \gamma_2 \cdot (1 - 2e^{-2\rho^2n}) \label{eqn:10}
\end{align}
The first term in the final step of Equation~\ref{eqn:10} follows from the result in Corollary~\ref{cor:joint} where $T$ follows $\mathrm{BesselK}(k, \frac{\Delta_{S_f}}{\epsilon_0})$. It is the probability with which the Laplace mechanism $\mathcal{L}_{\epsilon_0}^{\Delta_{S_f}}$ satisfies $\epsilon$-differential privacy for a given value of sampled sensitivity.
\end{proof}

Probability of occurrence of event $A_{\epsilon_0}^{\Delta_{S_f}}$ calculated by accounting for both explicit and implicit sources of randomness gives the risk for privacy level $\epsilon$. Thus, the proof of Lemma~\ref{lemma:main_2} completes the proof for Theorem~\ref{thm:par_3}. 

Comparing the equations in Theorem~\ref{thm:par_3} and Lemma~\ref{lemma:main_2}, we observe that
\begin{equation}
\gamma_3 \triangleq \frac{\mathbb{P}(T \leq \epsilon)}{\mathbb{P}(T \leq \eta\epsilon_0)} \cdot \gamma_2
\label{eqn:gamma_3}
\end{equation}
The privacy at risk, as defined in Equation~\ref{eqn:gamma_3}, is free from the term that accounts for the accuracy of sampled estimate. If we know cumulative distribution of the sensitivity, we do not suffer from the uncertainty of introduced by sampling from the empirical distribution.

\end{document}